\numberwithin{equation}{section}
\newcommand{\be}{\begin{equation}}
\newcommand{\ee}{\end{equation}}
\def\beqa{\begin{eqnarray}}
\def\eeqa{\end{eqnarray}}
\def\bean{\begin{eqnarray*}}
\def\eean{\end{eqnarray*}}
\newcommand{\de}{\mathrm{d}}
\newcommand{\eqn}[1]{(\ref{#1})}
\newcommand{\IZ}{\mathbb{Z}}
\newcommand{\IC}{\mathbb{C}}
\newcommand{\IR}{\mathbb{R}}
\newcommand{\frg}{\mathfrak{g}}
\newcommand{\frh}{\mathfrak{h}}
\newcommand{\frd}{\mathfrak{d}}
\def\e{{\,\rm e}\,}
\newcommand{\cM}{{\mathcal M}}
\newcommand{\cH}{{\mathcal H}}
\newcommand{\cE}{{\mathcal E}}
\newcommand{\cQ}{{\mathcal Q}}
\newcommand{\cL}{{\mathcal L}}
\newcommand{\cF}{{\mathcal F}}
\newcommand{\cT}{{\mathcal T}}
\newcommand{\sfD}{{\mathsf{D}}}
\newcommand{\sfH}{{\mathsf{H}}}
\newcommand{\sfG}{{\mathsf{G}}}
\newcommand{\unit}{\mathds{1}}   			
\theoremstyle{definition}
\newtheorem{definition}[equation]{Definition}
\newtheorem{example}[equation]{Example}
\newtheorem{proposition}[equation]{Proposition}
\newtheorem{theorem}[equation]{Theorem}
\newtheorem{remark}[equation]{Remark}
\newtheorem{interlude}[equation]{Interlude}
\title{\textbf{Para-Hermitian Geometry, Dualities \\ and Generalized Flux Backgrounds
\\}
\vspace{0.5cm}}
\date{}
\author[ ]{\large Vincenzo E. Marotta}
\author[ ]{\large ~Richard J. Szabo}
\affil[ ]{}
\affil[ ]{\textit{\normalsize Department of Mathematics \protect\\ Heriot-Watt University \protect\\ Colin Maclaurin Building, Riccarton, Edinburgh EH14 4AS, UK  \protect\\ Maxwell Institute for Mathematical Sciences, Edinburgh, UK  \protect\\  The Higgs Centre for Theoretical Physics, Edinburgh, UK}}
\affil[ ]{}
\affil[ ]{\normalsize e-mail: \texttt{vm34@hw.ac.uk and R.J.Szabo@hw.ac.uk}}
\begin{document}

\maketitle

\vspace{20mm}

\begin{abstract}
\noindent
We survey physical models which capture the main
concepts of double field theory on
para-Hermitian manifolds. We show that the geometric theory of
Lagrangian and Hamiltonian dynamical systems is an instance of
para-K\"ahler geometry which extends to a natural example of a Born
geometry. The corresponding phase space geometry belongs to the family
of natural almost para-K\"ahler structures which we
construct explicitly as deformations of the canonical para-K\"ahler
structure by non-linear connections. We extend this framework to a
class of non-Lagrangian dynamical systems which naturally encodes the notion
of fluxes in para-Hermitian geometry. In this case we describe the
emergence of fluxes in terms of weak integrability
defined by the D-bracket, and we extend the construction to arbitrary
cotangent bundles where we reproduce the standard generalized fluxes of
double field theory. We also describe the para-Hermitian geometry of
Drinfel'd doubles, which gives an explicit illustration of the
interplay between fluxes, D-brackets and different polarizations. The
left-invariant para-Hermitian structure on a Drinfel'd double in a
Manin triple polarization
descends to a doubled twisted torus, which we use to illustrate how
changes of polarizations give rise to different fluxes and string backgrounds
in para-Hermitian geometry.
\end{abstract}

\vfill

\begin{flushright}
\small
{\sf EMPG--18--20}
\normalsize
\end{flushright}

\newpage

\makeatletter
\@starttoc{toc}
\makeatother

\newpage

\section{Introduction and Overview}

Para-Hermitian geometry has acquired renewed interest in recent years
because of its relevance to flux compactifications of string theory, which is inspired by its
connections to generalized geometry and double field theory. Before
describing the new impetus that the present paper provides to this
endeavour, let us set the stage by briefly recalling the
connections between generalized geometry, double field theory and
para-Hermitian geometry.

\paragraph{Generalized Geometry.}
Generalized geometry~\cite{gualtieri:tesi,hitch} is a powerful
mathematical framework in which a unified description of vector fields
and 1-forms is achieved. It is a framework in which dualities, which
typically emerge in physical theories with extra dimensions such as
string theory, can be naturally studied. It is particularly relevant for the
description of T-duality of a string background $(g,B)$ on a
$d$-dimensional target space, where $g$ is the spacetime metric (which
we take to be of Euclidean signature for definiteness) and $B$ is the
Kalb-Ramond field. In the low-energy limit,
string theory is described by supergravity and the bosonic part of the
effective action is given by
\be\label{eq:stringaction}
S_{\rm SUGRA} = \int\,\de^dx\ \sqrt{g} \ \e^{-2\phi}\, \Big({\rm Ric}(g) +
4\,\partial_i\phi\,\partial^i\phi-\frac1{12}\, H_{ijk}\,H^{ijk}\Big) \ ,
\ee
where $H=\de B$ is the NS--NS $H$-flux, $\phi(x)$ is the string
dilaton field and ${\rm Ric}(g)$ is the Ricci scalar of the metric
$g$. The field equations resulting from this action impose vanishing
$\beta$-functions of the background $(g,B)$ and the dilaton $\phi$,
ensuring conformal invariance of the string theory at 1-loop order in
worldsheet perturbation theory.

The action \eqref{eq:stringaction} has a non-manifest ${\sf
  O}(d,d)$-symmetry. In generalized geometry this appears in a generalized tangent bundle over the target
space which has the structure of a Courant algebroid with fiber metric
$\eta$ of signature $(d,d)$ and the Courant bracket of its sections
which are generalized vector fields. The 
generalized metric $\cH$ on a generalized tangent bundle encodes all
information about a given background.
For compactifications on torus fibrations, the ${\sf O}(d,d)$-symmetry includes T-duality which relates the backgrounds of two
non-linear worldsheet sigma-models with different target spaces. In
this case T-duality transformations of the generalized metric appear
as isomorphisms between the generalized tangent bundles of principal
torus bundles endowed with a closed 3-form~\cite{cavalcanti}, viewed
as (twisted) Courant algebroids. In this way the symmetries of string
theory have a natural interpretation in terms of structures emerging
from generalized geometry.

\paragraph{Double Field Theory.}
Dualities in theories with extra dimensions typically indicate the
presence of hidden symmetries of the theory, and it is natural to
search for an extended theory where these become manifest symmetries.
T-duality may be a key to understanding the structures
characterizing the spacetime of a theory with extra dimensions, and in
this case the extended theory should be a manifestly ${\sf
  O}(d,d)$-invariant theory. The extended theory is then formulated on
a doubled target space with generalized coordinates $x^I=(x^i,\tilde
x_i)$, where the spacetime coordinates $x^i$ and their duals $\tilde
x_i$ naturally emerge from string theory where they are fields
appearing in dual actions related by ${\sf
  O}(d,d)$-transformations. In this way T-dualities may play an even
bigger role as part of the diffeomorphisms of such a
space~\cite{geonongeo,tfold}, and this is the starting point in
formulating a theory which is manifestly invariant under T-duality
transformations.

It is clear that generalized geometry is not the framework in which a
doubled target space can be implemented, as it only doubles the fibres
of the tangent bundle over the original spacetime. Only a theory
defined on a doubled geometry, as proposed in~\cite{geonongeo,tfold},
is possible and a T-duality invariant theory called double field
theory has been suggested for this
purpose~\cite{Siegel1993a,Siegel1993b,Hull2009}. It is related to the
low-energy limit of bosonic string theory, since it gives an effective
action which is T-duality invariant only if it is formulated on a
doubled space~\cite{hohmhz,hullzw} and is constructed in the
following way. The doubled space has two metrics, the constant metric
$\eta$ of signature $(d,d)$ which raises and lowers indices in the
following, and the Riemannian metric $\cH$ which
incorporates the dynamical fields $(g,B)$. Starting from the
low-energy effective action \eqref{eq:stringaction}, the action of
double field theory can be written using the ${\sf O}(d,d)$-tensor
$\cH^{-1}$ as
\begin{align}\label{eq:DFTaction}
S_{\rm DFT} = \int\,\de^dx\ \de^d\tilde x \ \e^{-2\phi}\, \Big( &
  \frac18\,\cH^{IJ}\,\partial_I\cH^{KL}\,\partial_J\cH_{KL}-\frac12\,
  \cH^{IJ}\,\partial_J\cH^{KL}\,\partial_L\cH_{IK} \nonumber\\ & -\,2\,\partial_I\phi\,\partial_J\cH^{IJ}+
  4\,\cH^{IJ}\,\partial_I\phi\,\partial_J\phi\Big) \ ,
\end{align}
which is manifestly ${\sf O}(d,d)$-invariant~\cite{hullzw}.

The action \eqref{eq:DFTaction} reduces to \eqref{eq:stringaction}
upon imposing a constraint which halves the number of coordinates;
this is called the \emph{section condition}. In particular, we can consider
the level matching condition $L_0-\bar L_0=0$ of the worldsheet theory, which in the doubled
formalism becomes $\partial_M\,\partial^M=0$ on any field and any
parameter; this is called the strong constraint and one of its
solutions is obtained by taking all fields to be independent of the
dual coordinates $\tilde x_i$, in which case we recover
\eqref{eq:stringaction}.

The action \eqref{eq:DFTaction} is also invariant under
diffeomorphisms which are generated by doubled vector fields and a
suitable notion of generalized Lie derivative. The generalized Lie
derivative of the metric $\eta$ vanishes, which implies that the
constraint for $\cH$ to be an ${\sf O}(d,d)$-tensor is compatible with
diffeomorphisms, while the generalized Lie derivative of the identity
$\cH\,\eta\,\cH=\eta^{-1}$ gives the compatibility between the ${\sf
  O}(d,d)$ and gauge symmetries. The ${\sf O}(d,d)$-covariant
extension of the Courant bracket for doubled vector fields is called
the \emph{C-bracket} and is given by
\be\label{eq:Cbracketflat}
\big(\llbracket \xi_1,\xi_2 \rrbracket ^{\tt C}\big)^I =
\xi_{[1}^J\,\partial_J\xi_{2]}^I - \frac12\,
\xi_{[1}^J\,\partial^I\xi_{2]\,J} \ .
\ee
The C-bracket governs the algebra of generalized Lie
derivatives. Although this theory still lacks a global formulation, it
suggests a close relationship with generalized geometry, at least
locally, since the Courant algebroid structure of the generalized
tangent bundle is recovered upon imposing the section condition.

Having the algebra of gauge transformations at hand, one can now write the
gauge-invariant action in terms of $\cH$. This action is manifestly
${\sf O}(d,d)$-invariant and is expressed in an Einstein-Hilbert type
form~\cite{hohmhz,hullzw}, where the generalized scalar curvature is
defined as the field equation for the dilaton following from the
action \eqref{eq:DFTaction}. The equations of motion combine the field
equations for the metric $g$ and the $B$-field in an ${\sf
  O}(d,d)$-covariant form, which has the form of the Einstein equation
extended to the doubled formalism. In this sense double field theory
can be regarded as a low-energy effective field theory of string generalized geometry. For more detailed reviews of double
field theory, see~\cite{Aldazabal2013,Berman2013,Hohm2013}. We stress that this is only a
local formulation of the theory, and the problem is to find a suitable geometry in
which all these structures can be defined globally.

\paragraph{Para-Hermitian Geometry.}
The discussion thus far stresses the need of a mathematical framework
in which a global formulation of doubled geometry is possible, while
at the same recovering a precise relation with generalized
geometry. A promising approach is achieved by \emph{para-Hermitian geometry}, which was first proposed in the context of a global
formulation of double field
theory in~\cite{Vai2,Vai1}, and developed further
by~\cite{freidel,Borngeometry} to provide a global formulation of its
kinematics. A unified approach was presented
in~\cite{membrane,svoboda} where a further generalization of the notion
of Courant algebroid is given in order to encode the desired features
of a doubled target space. However, the first appearance of
para-Hermitian geometry in the description of T-duality can be traced
back to the work of~\cite{geonongeo}, where a T-fold is
described in terms of doubled torus bundles in which para-Hermitian structures
are defined on the fibers. We review this approach in some precise detail in
Section~\ref{sec:ParaHermitianGeometry} below, and present here a
brief overview of the relevant aspects while glossing over many
technicalities. For a more complete
survey of the mathematical aspects of para-Hermitian geometry,
see~\cite{Cruceanu1996}.\footnote{An alternative mathematical approach to double field
theory based on graded symplectic geometry is discussed in~\cite{Deser2014,Deser2016,Heller2016}.} 

Para-Hermitian geometry is, roughly speaking, formulated as a real version of the more familiar concepts from
complex, Hermitian and K\"ahler geometry. 
A \emph{para-complex structure} on a vector bundle $E\to M$ of even
rank $2d$ over a manifold $M$ is a bundle endomorphism
$K\in{\rm End}(E)$ such that $K^2=\mathds{1}$ and the
$\pm\,1$-eigenbundles of $K$ have the same rank $d$. A symmetric
non-degenerate pairing $\eta$ of sections of $E$ is called
{para-Hermitian} if $\eta(K(X),K(Y))=-\eta(X,Y)$ for all
$X,Y\in\Gamma(E)$, and the pair $(K,\eta)$ is then called a
\emph{para-Hermitian structure} on $E$. A \emph{para-Hermitian
  manifold} is a manifold $M$ whose tangent bundle carries a
para-Hermitian structure; the compatible (almost) para-complex
structure $K$ and Lorentzian metric $\eta$ naturally give rise to a
fundamental 2-form $\omega$ on $M$, and if $\omega$ is closed then $(M,K,\eta)$
is called a \emph{para-K\"ahler manifold}. 

In the framework for double field theory on para-Hermitian manifolds, the original
spacetime $F$ is regarded as a submanifold of the doubled space $M$ when one of the eigenbundles of the para-complex structure is integrable.
The generalized geometry perspective is recovered as the tangent
bundle $TM$, whose metric $\eta$ defines a bundle isomorphism from $TM$ to the generalized tangent bundle $TF\oplus T^*F$
on the integral foliation $F$ of the integrable distribution. 
{Para-Hermitian connections} are defined as connections on $TM$
which preserve both $K$ and $\eta$, and in particular the parallel
transport of sections of the eigenbundles; the Levi-Civita connection of the ${\sf O}(d,d)$-metric $\eta$ is a
para-Hermitian connection only on para-K\"ahler manifolds.
One then defines the D-bracket of vector fields on $M$ in
terms of a canonical para-Hermitian connection which derives from the
Levi-Civita connection, and its skew symmetrization gives the
C-bracket which has the local expression \eqref{eq:Cbracketflat} when
the para-Hermitian structure is flat. The bundle isomorphism above
then sends the C-bracket on $TM$ to the Courant bracket of the exact
Courant algebroid on 
$TF\oplus T^*F$~\cite{freidel,svoboda,Borngeometry}. This result clarifies the
differences between generalized geometry and doubled geometry, and
states a precise relation between them. 

Different dual spacetimes
are obtained as different para-complex structures, i.e. \emph{polarizations}, on
the tangent bundle $TM$. In particular, suitable spacetimes are
Lagrangian submanifolds of the doubled space $M$ which, when endowed
with a generalized metric $\cH$ encoding the data of the background
fields, gives rise to the notion of a \emph{Born
  geometry}~\cite{reciproc,metastring}. Futhermore, a first appearence
of geometric and
non-geometric fluxes appears in this formalism through deformations of para-Hermitian
structures by $B$-transformations~\cite{svoboda}, which are
endomorphisms of $TM$ that preserve the metric $\eta$ but twist the
almost para-complex structure, the fundamental 2-form, and the
D-bracket. These constructions thus far fulfill the
requirements of a global formulation for the kinematics of double
field theory, whereas a global description of its dynamics is still
lacking.

\paragraph{Overview of Results and Outline.}
This paper is concerned with the description of physical spacetimes,
their backgrounds and their duality transformations in the setting of
para-Hermitian geometry; we set the stage for this in
Section~\ref{sec:ParaHermitianGeometry} by first briefly reviewing aspects
of para-Hermitian geometry, following
\cite{freidel,svoboda,Borngeometry} for the most part, and elucidating the
discussion to
describe global aspects of polarizations and T-duality in this framework. Our aim is to focus on some explicit classes
of examples in which (almost) para-Hermitian structures naturally
arise, and from them extract features of duality transformations and
the fluxes which characterize each polarization. Our examples are mostly
known already in the literature, though perhaps less widely in the
context of para-Hermitian geometry, which is the geometric impetus
that we emphasise through the physics of these
examples. Through these explicit settings we can then extract general
properties of dualities and how generalized fluxes appear on para-Hermitian
manifolds, and understand better their global physical features.

As a first class of examples, we show that the Lagrangian (and
Hamiltonian) description of classical dynamical systems leads to a
natural para-K\"ahler structure on the tangent bundle $M=T\cQ$ of the
underlying configuration space $\cQ$. Together with the dynamical
almost Hermitian structure determined by the Lagrangian function, the
para-K\"ahler structure gives a natural instance of Born geometry. The
Finsler geometry of regular Lagrangian dynamical systems is also
connected to double field theory and generalized geometry
in~\cite{Vai1,Vaisman2013}, whereas in this paper we emphasize their
geometric aspects in terms of para-K\"ahler structures. We describe
the Legendre transform to the cotangent bundle $T^*\cQ$ as a member of
the most general family of para-K\"ahler structures on phase space,
which are obtained as deformations of the canonical para-K\"ahler
structure by symmetric $(0,2)$-tensor fields $C$ on $T^*\cQ$ which
are coefficients of torsion-free non-linear connections on the tangent
bundle $T(T^*\cQ)$. This
generalizes the constructions of~\cite{Vai1,Romaniuc2013,Vaisman2013}
in which $C$ is taken to be a natural lift to $T^*\cQ$ of the Christoffel symbols
of the Levi-Civita connection of a Riemannian metric $g$ on the configuration manifold $\cQ$. The
para-K\"ahler geometry underlying classical dynamical systems in
both the Lagrangian and Hamiltonian formalisms is described in Section~\ref{sec:dynamicalpara}.

We then turn to the extension of this construction to cases where a
Lagrangian (and Hamiltonian) is not globally defined. We consider a
prototypical class of examples which includes, as specific cases,
the motion of charged particles in magnetic fields generated by
distributions of magnetic monopoles and the motion of closed strings in
locally non-geometric $R$-flux backgrounds~\cite{lustcoc,kup}. We provide an interpretation
of these dynamical systems in terms of deformations of para-K\"ahler
structures in order to explain how fluxes emerge in this case, which also
lends a novel perspective to their inherent nonassociativity. In
particular, we can recover the almost symplectic 2-form
$\omega_B$ defining the twisted Poisson brackets of the dynamical
system as the fundamental 2-form
$\omega_B=\eta_0\,K_B=\eta_0\,K_0+2\,\eta_0\,B$ of the almost para-Hermitian
structure $(K_B,\eta_0)$ on $T(T^*\cQ)$, where $K_B$ determines the
splitting $T(T^*\cQ)=L_{\tt v}\oplus L_{\tt h}^B$ with\footnote{Notation:
  $Q^i=\frac{\partial}{\partial p_i}$ and $P_i=\frac\partial{\partial
      q^i}$, where $(q^i,p_i)$ are local Darboux coordinates on
    $T^*\cQ$. We write $\partial_if=P_i(f)$ and $\tilde\partial^if=Q^i(f)$
for any function $f\in C^\infty(T^*\cQ)$.} $L_{\tt v}={\rm Span}_{C^\infty(T^*\cQ)}\{Q^i\}$ and $L_{\tt h}^B={\rm
    Span}_{C^\infty(T^*\cQ)}\{D_i=P_i+B_{ij}\, Q^j\}$, and $\eta_0$ is a flat metric of
  Lorentzian signature; it follows that $L_{\tt v}$ is an integrable
  distribution while $L_{\tt h}^B$ is not. The almost para-complex
  structure $K_B$ can be regarded as a $B$-transformation of the
  para-complex structure $K_0$ with integrable eigenbundles $L_{\tt
    v}$ and $L_{\tt h}^0={\rm
    Span}_{C^\infty(T^*\cQ)}\{P_i\}$, which illustrates the general
  feature that a $B$-transformation does not preserve
  integrability. 

If the endomorphism $B$ depends only on the
  configuration space coordinates $q$, then we show that the only non-vanishing
  D-bracket with respect to $K_B$ is 
\be\nonumber
\llbracket D_i,D_j\rrbracket^{\tt D}_B =
\frac32\, \partial_{[i}B_{jk]}\ Q^k \ .
\ee
This bracket is thus related
to the $H$-flux and is an element of $\Gamma(L_{\tt v})$ (the precise
relation can be found in~\cite{svoboda} and Section~\ref{sec:Btransformations} below), which means
that the $H$-flux obstructs the integrability of the $B$-transformed
distribution and a foliated manifold with local momentum coordinates
$p$ does not exist. Assuming more generally that $B$ may also depend 
on the fiber coordinates $p$, we show that the D-bracket becomes
\be\nonumber
\llbracket D_i,D_j\rrbracket^{\tt D}_B =
\frac32\, \big((\partial_{[i}B_{jk]} +
B_{[im}\,\tilde\partial^mB_{jk]})\ Q^k + \tilde\partial^kB_{ij}\ D_k\big) \ ,
\ee
where now integrability is obstructed by the covariant $H$-flux
$\mathscr{H}_{ijk}=\partial_{[i}B_{jk]}+B_{[i|m}\,\tilde{\partial}^m
B_{|jk]}$, which is the $\Gamma(L_{\tt v})$-component of the
D-bracket. 

This construction is then extended to give the full set of
geometric and non-geometric fluxes of (local) double field theory,
extending the considerations of~\cite{svoboda}.
In particular, we
recast Born reciprocity in terms of these deformations and extend
the discussion to describe generalized fluxes as deformations of the
para-K\"ahler geometry of the cotangent bundle $T^*\cQ$ involving both $B$- and
$\beta$-transformations. This gives an alternative perspective in the
setting of para-Hermitian geometry to closed string noncommutativity
and nonassociativity in non-geometric flux backgrounds
(see~\cite{Szabo2018} for a review). The appearence of
nonassociativity in this way from changes of polarizations and flux
deformations of the canonical para-K\"ahler structure of closed string
zero modes was also discussed in~\cite{Freidel2017}, and our complimentary detailed construction further elucidates their meaning in terms of violations of weak integrability. The role of polarizations of the para-Hermitian geometry on the cotangent bundle $T^*\cQ$ was also emphasised by~\cite{Aschieri2015,Aschieri2017} in relating phase space and spacetime nonassociativity. The description of fluxes in non-Lagrangian
dynamical systems and in the dual $R$-flux model is the subject of Section~\ref{sec:nonassociativity}.

We finally consider the broad classes of non-trivial doubled manifolds
provided by Drinfel'd doubles and doubled twisted tori, in which
para-K\"ahler structures can be naturally defined, and describe their
different polarizations as duality transformations of these
structures along with the related fluxes. In
Section~\ref{paradrinfeld} we recall the well-known para-Hermitian
geometry of Drinfel'd doubles and compute the corresponding
D-brackets to illustrate how fluxes arise. Drinfel'd doubles extend
the case of cotangent bundles and provide non-abelian generalizations
of the standard flat para-K\"ahler doubled tori $T^{2d}$, while their
different Manin triple polarizations generalize the different ways of
embedding $T^d\subset T^{2d}$ which are related by the action of the
T-duality group ${\sf O}(d,d;\IZ)$. The para-Hermitian geometry of
Drinfel'd doubles treats non-abelian T-duality using ${\sf
  O}(d,d)$-type structures as in~\cite{osten}, whereas more general
polarizations than those provided by Manin triples define a modified
non-abelian T-duality group and enable the introduction of generalized
fluxes. In
Section~\ref{sec:doubledtorus} we apply the framework of
para-Hermitian geometry to describe doubled twisted tori, which
comprise a class of well-known examples of global non-trivial doubled geometry,
and analyse their different polarizations together with their
backgrounds. This gives a more intrinsic
perspective on the T-duality transformations relating almost para-Hermitian
structures on the doubled twisted torus which are discussed in~\cite{hulled}.

\section{Para-Hermitian Geometry, Polarizations and T-Duality\label{sec:ParaHermitianGeometry}}

${\sf O}(d,d)$-transformations of supergravity can be described within the mathematical framework of generalized geometry. They include T-dualities between string backgrounds, which live in the discrete subgroup ${\sf O}(d,d;\mathbb{Z})$, and hence they must be included in the geometric structure of an ${\sf O}(d,d)$-invariant theory. 
The goal of double field theory is the description of an effective theory which manifestly possesses this invariance. Doubled geometry is the proposed mathematical framework in which such a theory should be formulated, but it currently lacks a global description, despite bearing similarities with generalized geometry~\cite{geonongeo,tfold}.
As first proposed in \cite{Vai2,Vai1}, para-Hermitian geometry may provide the natural framework in which a global formulation of double field theory can be achieved, since it has an intimate connection with generalized geometry; in this setting ``doubled spacetime'' is synonymous with ``para-Hermitian manifold''. 
In this section we will briefly review aspects of para-Hermitian geometry that we need in this paper, following~\cite{freidel,svoboda,Borngeometry} where a description of the foundations of double field theory is provided, as well as a first instance of how to introduce fluxes into this framework. 

Para-Hermitian geometry first played a central role in the discussion
of T-duality in~\cite{geonongeo}, where para-Hermitian structures are defined on the fibers of a
doubled torus bundle $M\to W$ to provide the first geometric definition of a T-fold; the various
polarizations of the $T^{2d}$ fibers give different T-dual backgrounds. The almost product structures
associated to each T-dual polarization are related by ${\sf O}(d, d)$-transformations. In the setting
of para-Hermitian geometry, we will clarify below in which sense a
physical spacetime can only be recovered locally for a globally non-geometric background by relating it to the integrability of the distribution
associated to it as a choice of the almost product structure in a general global formulation. Locally non-geometric backgrounds on the other hand, where not even a local spacetime description is possible, are particular polarizations of a doubled twisted torus in which the base space $W\times \tilde W$ is also doubled~\cite{Dabholkar2005} and will be characterized globally in the following in terms of non-integrable distributions. We return to the specific examples of doubled twisted tori using this general formalism in Section~\ref{sec:doubledtorus}.

\subsection{Para-Hermitian Manifolds}

Throughout this paper all manifolds are assumed to be smooth.
\theoremstyle{definition}
\begin{definition}
An \emph{almost product structure} on a manifold $M$ is a $(1,1)$-tensor field $K \in \mathrm{End}(TM)$ such that $K^2=\mathds{1}$. The pair $(M,K)$ is an \emph{almost product manifold}.
\end{definition}
Fixing a coordinate chart $(U, \phi)$ on $M$ with local coordinates $x^i$, an almost product
structure can be written as $K=K_i{}^j\, \de x^i
\otimes\frac{\partial}{\partial x^j}$ on $U$ with $K_i{}^m\, K_m{}^j=\delta_i{}^j$.
In this definition the analogy with almost \emph{complex} manifolds is clear, i.e. even-dimensional manifolds endowed with a $(1,1)$-tensor field $J$ such that $J^2=-\mathds{1}.$ This analogy is a useful guide to understanding the structures introduced in the following and will be often recalled by the terminology we adopt.

\theoremstyle{definition}
\begin{definition}
An \emph{almost para-complex manifold} is an almost product manifold $(M,K)$ with $M$ of even dimension such that the two eigenbundles $L_+$ and $L_-$ associated, respectively, with the eigenvalues $+1$ and $-1$ of $K$ have the same rank. A splitting of the tangent bundle $TM=L_+\oplus L_-$ of a manifold $M$ into the Whitney sum of two sub-bundles $L_+$ and $L_-$ of the same fiber dimension is an \emph{almost para-complex structure} on $M$.
\end{definition}
Recall that a \emph{$\sf G$-structure} on a $2d$-dimensional manifold $M$, for a subgroup ${\sf G} \subset {\sf GL}(2d,\mathbb{R})$, is a $\sf G$-sub-bundle of the frame bundle $FM$, i.e a reduction on the frame bundle of the structure group ${\sf GL}(2d,\mathbb{R})$ to $\sf G$. Using this notion, we can rephrase the definition of almost para-complex structure by saying that it is a $\sf G$-structure on $M$ with structure group ${\sf G}={\sf GL}(d,\mathbb{R})\times {\sf GL}(d,\mathbb{R})$.  

Using the almost product structure, we can define two projection operators
\begin{align*}
\Pi_+&=\frac{1}{2}\,(\mathds{1}+K): \Gamma(TM) \longrightarrow \Gamma(L_+) \ , \\[4pt]
\Pi_-&= \frac{1}{2}\,(\mathds{1}-K): \Gamma(TM) \longrightarrow \Gamma(L_-) \ .
\end{align*}
Then we are naturally led to study the integrability of the sub-bundles $L_+$ and $L_-$.
\theoremstyle{definition}
\begin{definition}
An almost product structure $K$ is \emph{(Frobenius) integrable} if
its sub-bundles $L_+$ and $L_-$ are both integrable,
i.e. $[\Gamma(L_+),\Gamma(L_+)]\subseteq\Gamma(L_+)$ and $[\Gamma(L_-),\Gamma(L_-)]\subseteq\Gamma(L_-)$. An integrable almost product structure is a \emph{product structure}. A \emph{para-complex structure} is an integrable almost para-complex structure, i.e. a product structure with ${\rm rank}(L_+)={\rm rank}(L_-)$. 
\end{definition}
By Frobenius' Theorem, this means that the manifold $M$ admits two foliations $\mathcal{F}_+$ and $\mathcal{F}_-$, such that $L_+=T\mathcal{F}_+$ and $L_-=T\mathcal{F}_-$.

Another way to characterize the integrability of an almost product structure is through the definition of the Nijenhuis tensor field, continuing the analogy with almost complex structures.
\theoremstyle{definition}
\begin{definition}
The \emph{Nijenhuis tensor field} of an almost product structure $K$ is the map $N_K: \Gamma(TM)\times \Gamma(TM) \rightarrow \Gamma(TM)$ given by
\begin{equation*}
N_K(X,Y)=[X,Y]+[K(X),K(Y)] -K\big([K(X),Y] + [X,K(Y)]\big)\ ,
\end{equation*}
for all vector fields $ X,Y \in \Gamma(TM).$
\end{definition}
\theoremstyle{theorem} 
\begin{theorem}
An almost product structure $K$ on a manifold $M$ is integrable if and only if $N_K(X,Y)=0$ for all $X,Y \in \Gamma(TM).$ 
\end{theorem}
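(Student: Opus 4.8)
The plan is to exploit the fact that, although $N_K$ is defined through Lie brackets, it is actually $C^\infty(M)$-bilinear and antisymmetric, hence a genuine tensor field; this means $N_K$ vanishes identically if and only if it vanishes on sections drawn from the eigenbundles $L_+$ and $L_-$. Using the projectors $\Pi_\pm=\frac{1}{2}\,(\mathds{1}\pm K)$, every vector field splits as $X=\Pi_+X+\Pi_-X$ with $\Pi_\pm X\in\Gamma(L_\pm)$, so by bilinearity it suffices to evaluate $N_K$ on the three types of pairs $(X_+,Y_+)$, $(X_-,Y_-)$ and $(X_+,Y_-)$ with $X_\pm,Y_\pm\in\Gamma(L_\pm)$.

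First I would verify tensoriality. Replacing $Y$ by $f\,Y$ for $f\in C^\infty(M)$ and expanding each Lie bracket via the Leibniz rule, the four derivative terms produced are $(K(X))(f)\,K(Y)$, $-(K(X))(f)\,K(Y)$, $-X(f)\,K^2(Y)$ and $X(f)\,Y$; since $K^2=\mathds{1}$ these cancel in pairs, leaving $N_K(X,f\,Y)=f\,N_K(X,Y)$. Antisymmetry $N_K(Y,X)=-N_K(X,Y)$ follows directly from the definition, and together these give tensoriality in both arguments, which licenses the reduction to eigenbundle sections.

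Next I would compute the three cases. For $X_+,Y_+\in\Gamma(L_+)$ we have $K(X_+)=X_+$ and $K(Y_+)=Y_+$, so the definition collapses to $N_K(X_+,Y_+)=2\,(\mathds{1}-K)[X_+,Y_+]=4\,\Pi_-[X_+,Y_+]$. Symmetrically, for $X_-,Y_-\in\Gamma(L_-)$ one finds $N_K(X_-,Y_-)=2\,(\mathds{1}+K)[X_-,Y_-]=4\,\Pi_+[X_-,Y_-]$. The mixed case is the clean point: for $X_+\in\Gamma(L_+)$ and $Y_-\in\Gamma(L_-)$, substituting $K(X_+)=X_+$ and $K(Y_-)=-Y_-$ makes every term cancel, so that $N_K(X_+,Y_-)\equiv 0$ identically, independently of any integrability hypothesis.

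Combining these, $N_K\equiv 0$ holds if and only if $\Pi_-[\Gamma(L_+),\Gamma(L_+)]=0$ and $\Pi_+[\Gamma(L_-),\Gamma(L_-)]=0$, i.e. if and only if $[\Gamma(L_+),\Gamma(L_+)]\subseteq\Gamma(L_+)$ and $[\Gamma(L_-),\Gamma(L_-)]\subseteq\Gamma(L_-)$, which by the preceding Definition is precisely the integrability of $K$. The only genuinely delicate step is the tensoriality verification, since it is what allows the reduction to eigenbundle sections and thereby avoids any appeal to Frobenius' theorem in local coordinates; the vanishing of the mixed term then shows that $N_K$ carries no information beyond the closure of each eigendistribution separately, which is exactly the content of the theorem.
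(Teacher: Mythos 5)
Your proof is correct and follows essentially the same route as the paper: the paper states the theorem without proof, but the decomposition it records immediately afterwards in \eqref{parnij}, namely $N_K=N_{\Pi_+}+N_{\Pi_-}$ with $N_{\Pi_\pm}$ the projected brackets valued in the opposite eigenbundles, is exactly what your eigenbundle computation establishes (up to an overall factor of $4$ coming from the normalization $\Pi_\pm=\frac12(\mathds{1}\pm K)$, which is immaterial for the vanishing statement). Your tensoriality check and the vanishing of the mixed term $N_K(X_+,Y_-)$ are precisely the details that justify that decomposition, so your write-up makes explicit what the paper leaves implicit.
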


Using the projection tensors $\Pi_+$ and $\Pi_-$, together with $K=\Pi_+-\Pi_-$, we can decompose the Nijenhuis tensor as 
\begin{equation*}
N_K(X,Y)=N_{\Pi_+}(X,Y)+N_{\Pi_-}(X,Y) \ ,
\end{equation*}
where 
\begin{equation}
N_{\Pi_+}(X,Y)=\Pi_-\big([\Pi_+(X),\Pi_+(Y)]\big) \qquad \mbox{and} \qquad N_{\Pi_-}(X,Y)=\Pi_+\big([\Pi_-(X),\Pi_-(Y)]\big) \ .
\label{parnij}
\end{equation}
From \eqref{parnij} we evidently have $N_{\Pi_+}(X,Y)\in \Gamma(L_-)$ and $N_{\Pi_-}(X,Y)\in \Gamma(L_+)$. Hence the two components of the Nijenhuis tensor obstruct the closure of the Lie bracket of vector fields restricted to $L_+$ and $L_-$, respectively. 
In particular, $N_{\Pi_+}$ and $N_{\Pi_-}$ are independent of each other. For instance, we may have $N_{\Pi_+}(X,Y)=0$ and $N_{\Pi_-}(X,Y) \neq 0$, so that the almost para-complex structure is only partially integrable ($N_K(X,Y)$ is still non-vanishing) and it admits only one foliation $\mathcal{F}_+$ such that $L_+=T\mathcal{F}_+$. In this case we call $(M,K)$ an \emph{$L_+$-para-complex manifold}, i.e.~there is a splitting of the tangent bundle $TM$ into two distributions with the same rank such that only the eigenbundle associated to the eigenvalue $+1$ is integrable. Exchanging the roles of $N_{\Pi_+}$ and $N_{\Pi_-}$, we obtain an analogous situation in which the sub-bundle $L_-$ is integrable, i.e. it admits a foliation $\mathcal{F}_-$ such that $L_-=T\mathcal{F}_-$, and in this case we call $(M,K)$ an \emph{$L_-$-para-complex manifold}.

Following the analogy with complex geometry, we introduce a compatible metric on almost para-complex manifolds, giving a counterpart of almost Hermitian manifolds.
\theoremstyle{definition} 
\begin{definition}
An \emph{almost para-Hermitian manifold} $(M,K,\eta)$ is an almost para-complex manifold, i.e. a manifold $M$ of even dimension $2d$ endowed with a $(1,1)$-tensor field $K \in \mathrm{End}(TM)$ such that $K^2=\mathds{1}$, together with a metric $\eta$ of Lorentzian signature $(d,d)$ which is compatible with the tensor $K$ in the sense that $$K^{\rm t}\, \eta\, K =-\eta \ .$$  
\end{definition}
The compatibility condition can also be written as $$\eta\big(K(X),K(Y)\big)=-\eta(X,Y) \ , $$ or equivalently 
\begin{equation}
\eta\big(K(X),Y\big)+\eta\big(X,K(Y)\big)=0 \ , 
\label{compcon}
\end{equation}
for all $X,Y \in \Gamma(TM)$.
The condition \eqref{compcon} implies that the distributions $L_+$ and $L_-$ are maximally isotropic with respect to $\eta$: For any $ X_+,Y_+ \in \Gamma(L_+),$ we have $K(X_+)=X_+$ and $K(Y_+)=Y_+$, and \eqref{compcon} gives $\eta(X_+,Y_+)=0$, i.e. $L_+$ is isotropic and, since ${\rm rank}(L_+)=d$, it is a maximally isotropic sub-bundle of $TM$. The same argument applies to $L_-$.

From \eqref{compcon} we also deduce the existence of a non-degenerate 2-form field $\omega$ on $M$ given by $$\omega(X,Y)=\eta\big(K(X),Y\big) \ , $$ for all $X,Y \in \Gamma(TM)$, called the \emph{fundamental 2-form}; it defines an \emph{almost symplectic structure}, since it is generally not closed. Because of this definition, we have
\begin{equation}
\omega(X_+,Y_+)=0 \ , 
\label{ome1}
\end{equation}
for all $X_+,Y_+ \in \Gamma(L_+)$, and 
\begin{equation}
\omega(X_-,Y_-)=0 \ , 
\label{ome2}
\end{equation}
for all $X_-,Y_- \in \Gamma(L_-)$. If the fundamental 2-form $\omega$ is \emph{symplectic}, i.e. it is moreover closed: ${\rm d}\omega=0$, then  $(M,K,\eta)$ is called an \emph{almost para-K\"{a}hler manifold}. In this case, the conditions \eqref{ome1} and \eqref{ome2} imply that $L_+$ and $L_-$ are \emph{Lagrangian} sub-bundles.

An \emph{almost para-Hermitian structure} $(K,\eta)$ on a manifold $M$
can be regarded as a $\sf G$-structure on $M$ given by a reduction of
the structure group of $TM$ from ${\sf GL}(2d,\mathbb{R})$ to the
subgroup which preserves both $\eta$ and $\omega$: $${\sf G}={\sf
  O}(d,d)\cap {\sf Sp}(2d,\mathbb{R})={\sf GL}(d,\mathbb{R}) \ . $$

\begin{interlude}\label{int:Odd}
We denote by ${\sf O}(d,d)(M)$ 
the infinite-dimensional pseudo-orthogonal group of tangent bundle
automorphisms $\vartheta\in{\rm End}(TM)$ which preserve the Lorentzian metric:
$\eta(\vartheta (X),\vartheta (Y))=\eta(X,Y)$ for all
$X,Y\in\Gamma(TM)$. This is the natural group of isometries of the almost para-Hermitian
manifold $(M,K,\eta)$ which is identified as its
\emph{continuous T-duality group}; any element
$\vartheta\in{\sf O}(d,d)(M)$ can be regarded as a smooth
map $\vartheta:M\to{\sf O}(d,d)$. We denote by
${\sf SO}(d,d)(M)$ the Lie subgroup which also preserves the
canonical orientation of $M$ provided by its
fundamental 2-form $\omega$; its Lie algebra $\mathfrak{so}(d,d)(M)$ consists of
endomorphisms $\tau\in{\rm End}(TM)$ such that $\eta(\tau
(X),Y)=-\eta(X,\tau (Y))$ for all $X,Y\in\Gamma(TM)$. Any element
$\tau\in\mathfrak{so}(d,d)(M)$ can be decomposed with respect to the
splitting $TM=L_+\oplus L_-$ as
\be\nonumber
\tau = \bigg(
\begin{matrix}
A & B_- \\
B_+ & -A^{\rm t}
\end{matrix} \bigg) \ ,
\ee
where $A\in{\rm End}(L_+)$ with transpose $A^{\rm t}\in{\rm End}(L_-)$
defined through $\eta(A(X),Y)=\eta(X,A^{\rm t}(Y))$, while $B_+:\Gamma(L_+)\to
\Gamma(L_-)$ and $B_-:\Gamma(L_-)\to \Gamma(L_+)$ are skew morphisms in the sense that
$\eta(B_\pm(X),Y)=-\eta(X,B_\pm(Y))$. By identifying $L_-$ with
$L_+^*$ using the Lorentzian metric $\eta$, we can regard $B_+$ as a 2-form
$B\in\bigwedge^2 \Gamma(L_+^*)$ called a \emph{$B$-field}
and $B_-$ as a bivector $\beta\in\bigwedge^2 \Gamma(L_+)$, so that as a
vector space
\be\nonumber
\mathfrak{so}(d,d)(M) = {\rm End}(L_+) \oplus
\mbox{$\bigwedge^2$}\, \Gamma(L_+^*) \oplus
\mbox{$\bigwedge^2$}\, \Gamma(L_+) \ .
\ee
\end{interlude}

Integrability of an almost para-Hermitian structure can be described
as well. If the eigenbundles $L_+$ and $L_-$ of $K$, such that
$TM=L_+\oplus L_-,$ are both integrable then the triple $(M,K,\eta)$
is called a \emph{para-Hermitian manifold}. If in addition the
fundamental 2-form $\omega=\eta\, K$ is closed, then $(M,K,\eta)$ is
said to be a \emph{para-K\"{a}hler} (or \emph{bi-Lagrangian})
\emph{manifold}, in which case it has two transverse Lagrangian
foliations with respect to the symplectic structure $\omega.$  

In this framework, we can also describe partial integrability of the sub-bundles of $TM$ and partial closure of the fundamental 2-form $\omega$ (i.e. $\mathrm{d}\omega=0$ on $L_+$ or $L_-$), giving rise to an assortment of new possible combinations.
In particular, if $(M,K,\eta)$ is an almost para-Hermitian manifold with an integrable sub-bundle $L_+$, then it is said to be an \emph{$L_+$-para-Hermitian manifold}; there is an analogous definition replacing $L_+$ with $L_-$. In general, it can be shown that $$\mathrm{d}\omega\big(\Pi_+(X),\Pi_+(Y),\Pi_+(Z)\big)= \sum_{(X,Y,Z)}\, \eta\big(N_{\Pi_+}(X,Y),Z\big) \ ,$$ where the sum runs over all cyclic permutations $(X,Y,Z)$ of the three vector fields $X,Y,Z\in\Gamma(TM)$. This shows that non-integrability of the distribution $L_+$ obstructs the closure of $\omega$; in particular, integrability of $L_+$ implies that the pullback of the 3-form ${\rm d}\omega$ to the foliation $\mathcal{F}_+$ vanishes.
In order to describe the geometry of such structures, a suitable connection is needed.
\theoremstyle{definition}
\begin{definition}
A \emph{para-Hermitian connection} $\nabla$ on an almost para-Hermitian manifold $(M,K,\eta)$ is a connection on $TM$ preserving $\eta$ and $\omega$: $$\nabla \eta=\nabla \omega =0 \ .$$
\end{definition}

\theoremstyle{proposition} 
\begin{proposition}
Let $(M,K,\eta)$ be an almost para-Hermitian manifold with fundamental 2-form $\omega$, and let $\nabla^{\tt LC}$ be the Levi-Civita connection of $\eta$. Then the covariant Levi-Civita derivative of $\omega$ satisfies
\begin{align}
{\nabla}^{\tt LC}_X\omega\big(\Pi_+(Y),\Pi_-(Z)\big)&= 0 \ , \label{propr1} \\[4pt]
{\nabla}^{\tt LC}_X \omega (Y,Z)&=\eta\big( {\nabla}^{\tt LC}_X K(Y),Z\big) \ , \label{propr2} \\[4pt]
\mathrm{d}\omega (X,Y,Z)&= \sum_{(X,Y,Z)}\, {\nabla}^{\tt LC}_X \omega(Y,Z) \ , \label{propr3}
\end{align}
for all $ X,Y,Z \in \Gamma(TM).$
\end{proposition}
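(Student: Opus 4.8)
The plan is to prove \eqref{propr2} first, then derive \eqref{propr1} from it, and finally obtain \eqref{propr3} as an instance of the general relation between the exterior derivative and a torsion-free connection.

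For \eqref{propr2} I would compute directly. Writing out the covariant derivative of the 2-form as $(\nabla^{\tt LC}_X\omega)(Y,Z) = X\big(\omega(Y,Z)\big) - \omega(\nabla^{\tt LC}_X Y, Z) - \omega(Y,\nabla^{\tt LC}_X Z)$ and substituting $\omega(\cdot,\cdot)=\eta(K(\cdot),\cdot)$, the metric-compatibility $\nabla^{\tt LC}\eta=0$ lets me expand $X\big(\eta(K(Y),Z)\big)$ into $\eta(\nabla^{\tt LC}_X(K(Y)),Z)+\eta(K(Y),\nabla^{\tt LC}_X Z)$. The two terms containing $\nabla^{\tt LC}_X Z$ then cancel against $\omega(Y,\nabla^{\tt LC}_X Z)$, and collecting the remainder gives $\eta\big(\nabla^{\tt LC}_X(K(Y))-K(\nabla^{\tt LC}_X Y),Z\big)=\eta\big((\nabla^{\tt LC}_X K)(Y),Z\big)$, which is exactly \eqref{propr2}.

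To deduce \eqref{propr1} I would first record the algebraic fact that $\nabla^{\tt LC}_X K$ anticommutes with $K$: differentiating $K^2=\mathds{1}$ with the Leibniz rule gives $(\nabla^{\tt LC}_X K)\,K+K\,(\nabla^{\tt LC}_X K)=0$. Consequently $\nabla^{\tt LC}_X K$ maps each eigenbundle of $K$ into the other; in particular $(\nabla^{\tt LC}_X K)(\Pi_+(Y))\in\Gamma(L_-)$. Applying \eqref{propr2} to the arguments $\Pi_+(Y)$ and $\Pi_-(Z)$ then expresses the left-hand side of \eqref{propr1} as $\eta\big((\nabla^{\tt LC}_X K)(\Pi_+(Y)),\Pi_-(Z)\big)$, a pairing of two sections of $L_-$; since $L_-$ is maximally isotropic for $\eta$ (as established above from the compatibility condition \eqref{compcon}), this vanishes.

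Finally, \eqref{propr3} I would treat as a consequence of the Levi-Civita connection being torsion-free. For any torsion-free connection the exterior derivative of a 2-form is the alternation of its covariant derivative, $\mathrm{d}\omega(X,Y,Z)=(\nabla^{\tt LC}_X\omega)(Y,Z)-(\nabla^{\tt LC}_Y\omega)(X,Z)+(\nabla^{\tt LC}_Z\omega)(X,Y)$, which follows by expanding $\mathrm{d}\omega$ through the invariant formula and using $[X,Y]=\nabla^{\tt LC}_X Y-\nabla^{\tt LC}_Y X$ to cancel all connection terms; rewriting the middle term via the antisymmetry of $\omega$ recasts the right-hand side as the cyclic sum in \eqref{propr3}. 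I expect the only genuinely non-mechanical step to be \eqref{propr1}: the point is to notice the anticommutation of $\nabla^{\tt LC}_X K$ with $K$ and pair it with the isotropy of $L_-$, after which everything reduces to bookkeeping with $\nabla^{\tt LC}\eta=0$ and the vanishing torsion.
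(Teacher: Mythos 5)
Your proof is correct in all three parts and is self-contained. Note that the paper does not actually spell out a proof of this proposition — it defers entirely to~\cite{freidel} — and your route (metric compatibility $\nabla^{\tt LC}\eta=0$ yielding the second identity, the anticommutation $(\nabla^{\tt LC}_XK)\,K=-K\,(\nabla^{\tt LC}_XK)$ from differentiating $K^2=\mathds{1}$ combined with isotropy of $L_-$ yielding the first, and torsion-freeness of $\nabla^{\tt LC}$ yielding the cyclic-sum formula) is precisely the standard argument found in that reference, so there is nothing to correct or to contrast.
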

The proof can be found in \cite{freidel}. This proposition stresses
the role of the Levi-Civita connection $\nabla^{\tt LC}$ in
para-Hermitian geometry. In particular, it implies that $\nabla^{\tt
  LC}$ is a para-Hermitian connection if and only if $(M,K,\eta)$ is
an almost para-K\"ahler manifold.\footnote{In this sense,
  para-K\"ahler manifolds provide the closest versions of the
  conventional flat space formulations of double field theory in the absence of fluxes.}

On any  almost para-Hermitian manifold $(M,K,\eta),$ the \emph{canonical para-Hermitian connection} is obtained from the Levi-Civita connection as $$\nabla^{\tt can}=\Pi_+\,{\nabla}^{\tt LC}\,\Pi_+ + \Pi_-\, {\nabla}^{\tt LC}\,\Pi_- \ .$$ Equivalently, the canonical connection $\nabla^{\tt can}$ is defined as the connection satisfying $$\eta(\nabla^{\tt can}_X Y,Z)= \eta({\nabla}^{\tt LC}_X Y, Z)- \frac{1}{2}\, {\nabla}^{\tt LC}_X \omega\big(Y, K(Z)\big) \ ,$$ for all vector fields $X,Y,Z$.
This connection is the key ingredient for the study of the D-bracket, which we introduce below since it gives a new interpretation of the generalized fluxes in double field theory~\cite{svoboda}.

\begin{interlude}
The splitting of the tangent bundle $TM$ gives rise to a decomposition of tensors analogous to the type decomposition in complex geometry. In particular, there is such a decomposition for differential forms. We denote $\bigwedge ^{(+k, -0)}\Gamma(T^* M)= \bigwedge^k \Gamma(L_+^*)$ and $\bigwedge ^{(+0, -k)}\Gamma(T^* M)= \bigwedge^k \Gamma(L_-^*),$ so that any $k$-form on $M$ is decomposed according to the splitting
$$ \mbox{$\bigwedge^k$}\, \Gamma(T^*M)= \underset{m+n=k}{\bigoplus} \
\mbox{$\bigwedge^{(+m,-n)}$}\, \Gamma(T^*M) \ .$$
The fundamental 2-form $\omega$ of an almost para-Hermitian manifold is a $(+1,-1)$-form with respect to the almost para-Hermitian structure $(K,\eta),$ since both $L_+$ and $L_-$ are Lagrangian with respect to $\omega.$
\end{interlude}

\subsection{Brackets and Algebroids}

As discussed in \cite{freidel,svoboda}, the D-bracket is needed to formulate a precise relation between $L_{\pm}$-para-Hermitian manifolds and (standard) Courant algebroids. The definition of a particular D-bracket also gives a global formulation of the D-bracket of double field theory.

Let us first describe how a bracket on vector fields can be associated to any connection on an almost para-Hermitian manifold.
\theoremstyle{definition}
\begin{definition}
Let $(M,K,\eta)$ be an almost para-Hermitian manifold and $\nabla$ a connection on $TM$. The \emph{bracket $\llbracket\cdot,\cdot\rrbracket^\nabla$ associated to} $\nabla$ is defined by
\be
\eta(\llbracket X,Y \rrbracket ^{\nabla},Z)=\eta(\nabla_X Y - \nabla_Y X, Z)+\eta(\nabla_Z X,Y) \ , \label{combra}
\ee
and the \emph{$\Pi_{\pm}$-projected brackets $\llbracket\cdot,\cdot\rrbracket_\pm^\nabla$ associated to $\nabla$} by
\be
\eta(\llbracket X,Y \rrbracket ^{\nabla}_{\pm},Z)=\eta(\nabla_{\Pi_{\pm}(X)} Y - \nabla_{\Pi_{\pm}(Y)} X , Z)+\eta(\nabla_{\Pi_{\pm}(Z)} X,Y) \ ,
\label{probra}
\ee
for all $X,Y,Z \in \Gamma(TM).$
\end{definition}
Since $\Pi_++\Pi_-=\mathds{1}$, it follows that $$\llbracket X,Y \rrbracket ^{\nabla} = \llbracket X,Y \rrbracket_+^{\nabla}+\llbracket X,Y \rrbracket_-^{\nabla}$$ for all $X,Y \in \Gamma(TM).$
The second term on the right-hand side in both \eqref{combra} and \eqref{probra} is usually called a \emph{twist}. In particular, for a torsion-free connection $\nabla$, the associated bracket is given by the Lie bracket $[X,Y]$ of vector fields plus a twist term determined by the connection itself.
The covariant derivative $\nabla_{\Pi_{\pm}(X)}$ is also called a \emph{projected derivative} and it gives a \emph{projected differential} $\mathrm{D}:C^\infty(M)\to \Gamma(TM)$ defined by $\eta(X,\mathrm{D}f)=\Pi_{\pm}(X)(f)$ for all $X \in \Gamma(TM)$ and $ f\in C^{\infty}(M)$. The projected derivative is a derivation, i.e.~it satisfies the Leibniz rule $$\nabla_{\Pi_{\pm}(X)}(f\, Y)=\Pi_{\pm}(X)(f)\,Y+f\,\nabla_{\Pi_{\pm}(X)}Y \ .$$
A full Riemannian characterization of the projected geometry is given in \cite{freidel}, including the projected Riemann tensor and projected torsion. 

In order to extend what we did so far, we introduce a new class of brackets by weakening the compatibility condition following \cite{Vai2,Vai1}.
\theoremstyle{definition}
\begin{definition}
Let $(M, K, \eta)$ be an almost para-Hermitian manifold. A \emph{metric-compatible bracket} is a bilinear operation $\llbracket \cdot, \cdot \rrbracket : \Gamma(TM) \times \Gamma(TM) \rightarrow \Gamma(TM)$ satisfying
\begin{align*}
X\big(\eta(Y,Z)\big)&=\eta(\llbracket X, Y \rrbracket,Z)+\eta(Y, \llbracket X,Z \rrbracket) \ , \\[4pt]
\llbracket X, f\, Y \rrbracket &= f\, \llbracket X,Y \rrbracket + X(f)\, Y \ , \\[4pt]
\eta(Y, \llbracket X,X \rrbracket )& = \eta(\llbracket Y,X \rrbracket, X) \ ,
\end{align*}
for all $X,Y,Z\in\Gamma(TM)$ and $f\in C^\infty(M)$.
\end{definition}
The triple $(TM, \eta, \llbracket \cdot, \cdot \rrbracket)$ defines a \emph{metric algebroid} with anchor given by the identity map. Note that the anchor is not required to satisfy any compatibility condition between the metric-compatible bracket and the Lie bracket; metric algebroids are related to the pre-DFT algebroids of~\cite{membrane} that naturally arise in the extensions of the Courant algebroids of generalized geometry to double field theory. 

Following \cite{Borngeometry}, we introduce a generalized notion of integrability for a metric algebroid.
\theoremstyle{definition}
\begin{definition}
Let $(TM, \eta, \llbracket \cdot, \cdot \rrbracket)$ be a metric algebroid and $K\in{\rm End}(TM)$ a tensor field such that $$K^2=\pm\,\mathds{1} \qquad \mbox{and} \qquad \eta\big(K(X),Y\big)=-\eta\big(X, K(Y)\big) \ ,$$ for all $X,Y\in\Gamma(TM)$. The \emph{generalized Nijenhuis tensor} $\mathcal{N}_K:\Gamma(TM)\times\Gamma(TM)\to\Gamma(TM)$ associated with $K$ is given by
$$\mathcal{N}_K(X,Y)=K^2\big(\llbracket X,Y \rrbracket\big) +\llbracket K(X),K(Y)\rrbracket-K\big(\llbracket K(X),Y\rrbracket+\llbracket X, K(Y) \rrbracket\big) \ .$$
\end{definition}
This allows one to define a D-bracket in the context of para-Hermitian geometry. 
\theoremstyle{definition}
\begin{definition}
A \emph{D-bracket} on an almost para-Hermitian manifold $(M, K,\eta)$ is a metric-compati{-}ble bracket $\llbracket \cdot,\cdot \rrbracket$ for which the generalized Nijenhuiis tensor associated with $K$ vanishes: $\mathcal{N}_K(X,Y)=0$ for all $X,Y\in\Gamma(TM)$. The triple $(K,\eta,\llbracket \cdot,\cdot \rrbracket)$ is a \emph{D-structure} on $M$. \\
A D-structure is \emph{canonical} if it satisfies $$\llbracket \Pi_+(X),\Pi_+(Y) \rrbracket= \Pi_+\big([\Pi_+(X), \Pi_+(Y)]\big) \qquad \mbox{and} \qquad \llbracket \Pi_-(X),\Pi_-(Y) \rrbracket= \Pi_-\big([\Pi_-(X), \Pi_-(Y)]\big) \ .$$
\end{definition}
It is shown in \cite{Borngeometry} that there exists a unique canonical D-bracket on $(M,K, \eta)$ which is given by the bracket defined in \eqn{combra} associated with the canonical para-Hermitian connection $\nabla^{\tt can}.$ We will refer to this bracket as the \emph{D-bracket associated with} $K,$ and denote it by $\llbracket \cdot,\cdot \rrbracket^{\tt D}.$

For a flat manifold the D-bracket has the local expression of the D-bracket of double field theory, as shown in \cite{Vai2}. The \emph{C-bracket} is obtained as the skew-symmetrization of the D-bracket: 
\be\label{eq:Cbracket}
\llbracket X,Y \rrbracket ^{\tt C}= \frac{1}{2}\, \big( \llbracket X,Y \rrbracket^{\tt D} - \llbracket Y,X \rrbracket^{\tt D}\big) \ .
\ee
It is shown in \cite{svoboda} that the relation between the D-bracket and the bracket associated to the Levi-Civita connection, the \emph{${\nabla}^{\tt LC}$-bracket}, involves the exterior derivative of the fundamental 2-form $\omega$, such that the D-bracket of an almost para-Hermitian manifold is given by
\begin{align}
\eta (\llbracket X,Y \rrbracket ^{\tt D}, Z)=&\ \eta(\llbracket X,Y \rrbracket ^{{\nabla}^{\tt LC}},Z) \nonumber\\ &\, - \frac{1}{2} \, \big(\de \omega^{(+3,-0)}+\de \omega^{(+2,-1)}- \de \omega^{(+0, -3)}- \de \omega^{(+1,-2)}\big)(X,Y,Z) \ . \label{dciv}
\end{align}
The difference between a D-bracket and a ${\nabla}^{\tt LC}$-bracket is also called \emph{generalized torsion} of the ${\nabla}^{\tt LC}$-bracket. For an almost para-K\"ahler manifold, $\nabla^{\tt can}=\nabla^{\tt LC}$ and hence $\llbracket X,Y \rrbracket ^{\tt D} = \llbracket X,Y \rrbracket ^{{\nabla}^{\tt LC}}.$ 

From this discussion there also emerges a new interpretation of the section condition of double field theory, proven in~\cite{svoboda}.
\theoremstyle{proposition}
\begin{proposition}\label{flatd}
Let $(M, K,\eta)$ be a flat para-Hermitian manifold and let $X_+, Y_+ \in \Gamma(L_+)$ be vector fields which are parallel along $\mathcal{F}_-.$ Then $$ \llbracket X_+,Y_+ \rrbracket ^{\tt D} =\llbracket X_+,Y_+ \rrbracket^{\de} _+ \ ,$$ or equivalently $$\llbracket X_+,Y_+ \rrbracket^{\de}_- =0 \ .$$
\end{proposition}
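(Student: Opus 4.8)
The plan is to reduce both assertions to the flat Levi-Civita connection and to a single surviving twist term. On a flat para-Hermitian manifold $\nabla^{\tt LC}$ is the flat connection, and I read $\llbracket\cdot,\cdot\rrbracket^{\de}$ and its projections $\llbracket\cdot,\cdot\rrbracket^{\de}_\pm$ as the brackets associated with it through \eqref{combra} and \eqref{probra}. Since $\Pi_++\Pi_-=\unit$ gives $\llbracket X_+,Y_+\rrbracket^{\de}=\llbracket X_+,Y_+\rrbracket^{\de}_++\llbracket X_+,Y_+\rrbracket^{\de}_-$, I would prove the two displayed identities by establishing (i) $\llbracket X_+,Y_+\rrbracket^{\de}_-=0$ and (ii) $\llbracket X_+,Y_+\rrbracket^{\tt D}=\llbracket X_+,Y_+\rrbracket^{\de}$. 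Statement (ii) together with the splitting exhibits the equivalence of the two forms, since it turns $\llbracket X_+,Y_+\rrbracket^{\tt D}=\llbracket X_+,Y_+\rrbracket^{\de}_+$ into $\llbracket X_+,Y_+\rrbracket^{\de}_-=0$; statement (i) then supplies the actual content.

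The heart of the argument is the direct evaluation of (i) from \eqref{probra}. As $X_+,Y_+\in\Gamma(L_+)$ we have $\Pi_-(X_+)=\Pi_-(Y_+)=0$, so the two derivative terms vanish identically and only the twist $\eta(\nabla^{\tt LC}_{\Pi_-(Z)}X_+,Y_+)$ survives. Para-Hermiticity makes $L_-$ integrable with $L_-=T\mathcal{F}_-$, so $\Pi_-(Z)\in\Gamma(T\mathcal{F}_-)$; the hypothesis that $X_+$ is parallel along $\mathcal{F}_-$ is exactly $\nabla^{\tt LC}_{\Pi_-(Z)}X_+=0$. Hence $\eta(\llbracket X_+,Y_+\rrbracket^{\de}_-,Z)=0$ for every $Z\in\Gamma(TM)$, and non-degeneracy of $\eta$ gives $\llbracket X_+,Y_+\rrbracket^{\de}_-=0$.

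Establishing (ii) is where I expect the real work, since by \eqref{dciv} the D-bracket and the $\nabla^{\tt LC}$-bracket differ by a generalized torsion assembled from the type components of $\de\omega$, and I would show that each vanishes on $(X_+,Y_+,Z)$. The components $\de\omega^{(+0,-3)}$ and $\de\omega^{(+1,-2)}$ drop out for type reasons, as two of the three arguments already lie in $L_+$; the component $\de\omega^{(+3,-0)}$ vanishes because integrability of $L_+$ forces $\de\omega(\Pi_+(X),\Pi_+(Y),\Pi_+(Z))=\sum_{(X,Y,Z)}\eta(N_{\Pi_+}(X,Y),Z)=0$. For the remaining mixed component I would expand $\de\omega^{(+2,-1)}(X_+,Y_+,W_-)$ using \eqref{propr2}--\eqref{propr3}: the two summands carrying $\nabla^{\tt LC}_{X_+}$ and $\nabla^{\tt LC}_{Y_+}$ vanish by isotropy of $L_\pm$, and the surviving summand is proportional to $\eta(\Pi_-(\nabla^{\tt LC}_{W_-}X_+),Y_+)$, which is again killed by the parallelism of $X_+$ along $\mathcal{F}_-$. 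With all components gone, \eqref{dciv} yields $\llbracket X_+,Y_+\rrbracket^{\tt D}=\llbracket X_+,Y_+\rrbracket^{\de}$, and combining with the previous paragraph completes the proof. The one point demanding care is precisely this control of $\de\omega^{(+2,-1)}$, where flatness alone does not suffice and the parallelism hypothesis is essential.
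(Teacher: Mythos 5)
Your proof is correct, but there is nothing in the paper to compare it against: Proposition~\ref{flatd} is stated without proof and attributed to~\cite{svoboda}, so what you have produced is a self-contained derivation from identities the paper merely quotes, which is more than the paper itself offers. Part (i) of your argument is exactly the computation one expects: with $\Pi_-(X_+)=\Pi_-(Y_+)=0$ only the twist term $\eta\big(\nabla^{\tt LC}_{\Pi_-(Z)}X_+,Y_+\big)$ survives in \eqref{probra}, and the hypothesis that $X_+$ is parallel along $\mathcal{F}_-$ kills it; non-degeneracy of $\eta$ finishes the job. This is precisely the ``section condition'' content of the proposition.

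For part (ii), however, you bypassed a shortcut the paper hands you. If ``flat para-Hermitian'' is read in the double field theory sense intended here --- adapted coordinates in which both $K$ and $\eta$ are constant --- then $\omega=\eta\,K$ is constant as well, so $\de\omega=0$ and $(M,K,\eta)$ is para-K\"ahler; the paper's remark immediately following \eqref{dciv} then gives $\llbracket X,Y\rrbracket^{\tt D}=\llbracket X,Y\rrbracket^{\nabla^{\tt LC}}$ for \emph{all} vector fields, with no appeal to parallelism, and the proposition reduces to part (i) alone. Your longer route through the type decomposition of $\de\omega$ --- the $(+0,-3)$ and $(+1,-2)$ components vanishing for type reasons, $(+3,-0)$ by integrability of $L_+$ via the Nijenhuis identity, and $(+2,-1)$ by \eqref{propr1}--\eqref{propr3} together with parallelism --- is also correct, and it buys genuine generality: it uses only flatness of the metric $\eta$, Frobenius integrability of the eigenbundles, and the parallelism hypothesis, so it still works under the weaker reading in which $K$ is not parallel and $\omega$ is not closed. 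Just be aware that your closing claim that ``flatness alone does not suffice'' to control $\de\omega^{(+2,-1)}$ is contingent on that weaker reading; under the reading the paper most plausibly intends, flatness alone does suffice for all of part (ii).
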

Here we see that the section condition $\llbracket X_+,Y_+ \rrbracket^{\de}_- =0$ restricts the vector fields to be sections over the foliation $\mathcal{F}_+$.  

\subsection{Flux Deformations of Para-Hermitian Structures}\label{sec:Btransformations}

We shall now define special isometries relating two different almost para-Hermitian structures on the same manifold $M$ and describe how the D-bracket transforms under their action. In this description we will see strong similarities with the transformations proposed in generalized geometry \cite{gualtieri:tesi,hitch}. We will find that some geometric and non-geometric fluxes appear in this discussion as obstructions to a weaker notion of integrability.

We first need the notion of $B$-transformation for an almost para-Hermitian manifold.
\theoremstyle{definition}
 \begin{definition}
Let $(M, K, \eta)$ be an almost para-Hermitian manifold.  A $B_+$-\emph{transformation} is an isometry of $TM$ given by
$$
e^{B_+}= 
\bigg(\begin{matrix}
\mathds{1} & 0 \\
B_+ & \mathds{1}
\end{matrix}\bigg)
\ \in \ {\sf O}(d,d)(M) \ ,
$$
where we have chosen the splitting $TM= L_+ \oplus L_-$ and $B_+ : \Gamma(L_+) \rightarrow \Gamma(L_-)$ is a skew map in the sense that it satisfies $\eta(B_+(X),Y)=- \eta(X, B_+(Y)).$ 
\end{definition}

A $B_+$-transformation of the almost para-complex structure $K$ is then given by $$K \longmapsto K_{B_+}= e^{B_+}\, K\, e^{-B_+} \ .$$
In the splitting $TM= L_+ \oplus L_-$, the tensor $K$ is given by 
$$
K=
\bigg(\begin{matrix}
\mathds{1} & 0 \\
0 & -\mathds{1}
\end{matrix}\bigg)
\ ,
$$
and hence the transformed almost para-complex structure takes the form 
\be
K_{B_+}=
\bigg(\begin{matrix}
\mathds{1} & 0 \\
2B_+ & -\mathds{1}
\end{matrix}\bigg)
\ . \label{matrK}
\ee
One easily has $K_{B_+}^2=\mathds{1}$, while the skew property of $B_+$ is required for the compatibility condition $\eta(K_{B_+}(X),K_{B_+}(Y))=-\eta(X,Y)$ to be satisfied.

The endomorphism $B_+$ is given either by a 2-form $b_+$ or by a bivector $\beta_-$ defined by $$\eta\big(B_+(X),Y\big)=b_+(X,Y)= \beta_-\big(\eta(X), \eta(Y)\big) \ .$$
The 2-form $b_+$ is of type $(+2,-0)$, while the bivector $\beta_-$ is of type $(+0,-2)$ with respect to $K.$
This is relevant to understanding how the fundamental 2-form $\omega$ changes under a $B_+$-transformation: $$\omega \longmapsto \omega_{B_+}= \eta\, K_{B_+} = \omega + 2b_+\ ,$$
so that such transformations may not preserve the closure of the fundamental 2-form.
A completely analogous discussion can be carried out for a \emph{$B_-$-transformation}, defined by a skew map $B_-: \Gamma(L_-) \rightarrow \Gamma(L_+).$

The main effect of a $B_+$-transformation is that the splitting $TM=L_+\oplus L_-$ changes, i.e.~$e^{B_+}:L_+\oplus L_- \rightarrow L^{B_+}_+\oplus L^{B_+}_-,$ which implies that the potential Frobenius integrability of the original splitting may not be preserved in its image under $e^{B_+}$. The transformed projections are given by
\be
\Pi_+^{B_+}= \frac{1}{2}\,\big(\mathds{1}+K_{B_+}\big)=
\bigg(\begin{matrix}
\mathds{1} & 0 \\
B_+ & 0
\end{matrix}\bigg)
\qquad \mbox{and} \qquad
\Pi_-^{B_+}= \frac{1}{2}\,\big(\mathds{1}-K_{B_+}\big)=
\bigg(\begin{matrix}
0 & 0 \\
-B_+ & \mathds{1}
\end{matrix}\bigg)
\ . \nonumber
\ee
Hence, decomposing any vector field as 
\be
X=
\bigg(\begin{matrix}
X_+ \\
X_-
\end{matrix}\bigg)
\ \in \ \Gamma(TM) \ , \nonumber
\ee
where $X_+ \in \Gamma(L_+)$ and $X_- \in \Gamma(L_-),$ the new distributions are obtained by using the transformed projections to get
$$\Pi^{B_+}_+(X)= X_+ + B_+(X_+) \qquad \mbox{and} \qquad \Pi^{B_+}_-(X)= X_- - B_+(X_+) \ ,$$
where $\Pi^{B_+}_-(X) \in \Gamma(L_-)$ since $B_+$ maps $\Gamma(L_+)$ to $\Gamma(L_-)$. Thus $L^{B_+}_-=L_-.$
On the other hand, the same reasoning applied to $\Pi^{B_+}_+(X)$ shows that it is not an element of $\Gamma(L_+),$ i.e. $L^{B_+}_+\neq L_+.$
Therefore only the $-1$-eigenbundle is preserved by a $B_+$-transformation, while the $+1$-eigenbundle changes; in particular, if $L_+$ is integrable, then integrability of $L^{B_+}_+$ is generally violated.

In order to compare two different almost para-Hermitian structures on the same manifold, a weaker notion of integrability is introduced. The main difference from the usual notion of Frobenius integrability is the replacement of the Lie bracket of vector fields with the D-bracket.  
\theoremstyle{definiton}
\begin{definition} \label{weint}
Let $(M, K,\eta)$ be an almost para-Hermitian manifold with associated D-bracket $\llbracket \cdot ,\cdot \rrbracket^{\tt D}$. An isotropic (with respect to $\eta$) distribution $\mathcal{D}$ is \emph{weakly integrable} if it is involutive under the D-bracket: $$ \llbracket \Gamma(\mathcal{D}) ,\Gamma(\mathcal{D}) \rrbracket^{\tt D} \subseteq \Gamma(\mathcal{D}) \ .$$ 
\end{definition}
For example, the eigenbundles $L_+$ and $L_-$ of $K$ are always weakly integrable. It is clear from this definition that the notion of weak integrability depends on the choice of the almost para-Hermitian structure, as this choice represents the reference almost para-Hermitian structure which defines the D-bracket. Hence we can formulate a notion of compatibility based on this relative integrability.
\theoremstyle{definition}
\begin{definition}
Let $(K,\eta)$ and $(K',\eta)$ be two almost para-Hermitian structures on a manifold $M.$ Then $K'$ is \emph{compatible with} $K$ if the eigenbundles of $K'$ are weakly integrable with respect to~$K$. 
\end{definition}

Any almost para-complex structure $K$ is always compatible with itself.
We can thus analyze the weak integrability of a $B_+$-transformed almost para-complex structure $K_{B_+}$ with respect to the original structure $K$. For this, we note that the D-bracket of sections of the $+1$-eigenbundle of $K_{B_+}$ is given by
\be\nonumber
\eta\big(\llbracket \Pi_+^{B_+}(X), \Pi_+^{B_+}(Y) \rrbracket^{\tt D},  \Pi_+^{B_+}(Z) \big) = \Big(\de_+ b_+ + \big(\mbox{$\bigwedge^3$} \eta\big)[\beta_-,\beta_-]^{\tt S}_-\Big)\big(\Pi_+^{B_+}(X),\Pi_+^{B_+}(Y),\Pi_+^{B_+}(Z)\big) \ ,
\ee
where $\de_+$ is the Lie algebroid differential of $L_+$ and $[\cdot , \cdot ]^{\tt S}_-$ is the Schouten-Nijenhuis bracket of $L_-.$
The $+1$-eigenbundle of $K_{B_+}$  is weakly integrable if and only if $$ \eta\big(\llbracket \Pi_+^{B_+}(X), \Pi_+^{B_+}(Y) \rrbracket^{\tt D},  \Pi_+^{B_+}(Z) \big)=0 \ .$$ This implies that the Maurer-Cartan equation
\be
\de_+ b_+ + \big(\mbox{$\bigwedge^3$} \eta\big)[\beta_-,\beta_-]^{\tt S}_-=0 \label{maucar}
\ee
has to be satisfied in order for $K_{B_+}$ to be compatible with $K$. If this equation is not satisfied, then the components of $\de_+ b_+$ can be interpreted as fluxes, as shown in \cite{svoboda}, with $\de_+ b_+ + \big(\mbox{$\bigwedge^3$} \eta\big)[\beta_-,\beta_-]^{\tt S}_-$ giving the covariant NS--NS $H$-flux. A $B_-$-transformation gives the corresponding dual non-geometric $R$-flux.

It is shown in~\cite{svoboda} that the D-bracket $\llbracket
\cdot,\cdot \rrbracket^{\tt D}_{B_+}$ associated to $K_{B_+}$ is
related to the original D-bracket associated to an almost
para-K\"ahler structure $K$ by
\be\label{eq:DbracketB}
\eta\big(\llbracket X,Y\rrbracket^{\tt D}_{B_+},Z\big)=\eta\big(\llbracket X,Y\rrbracket^{\tt D},Z\big) - \de b_+(X,Y,Z) \ ,
\ee
where the $(+3,-0)$-component of $\de b_+$ (with respect to the
splitting defined by $K_{B_+}$) coincides with the left-hand side of
\eqref{maucar}, while the $(+2,-1)$-component gives the dual
non-geometric $Q$-flux. In this framework, the geometric $f$-flux also
arises generally via the D-bracket $\llbracket \cdot,\cdot \rrbracket^{\tt D}$
in the usual way through diagonal isometries of the
tangent bundle $TM=L_+\oplus L_-$,
\be\label{eq:nonholonomic}
\bigg( \begin{matrix}
A & 0 \\ 0 & \big(A^{-1}\big)^{\rm t}
\end{matrix} \bigg) \ \in \ {\sf O}(d,d)(M) \ ,
\ee
with $A\in{\rm End}(L_+)$, which preserve $K$ and rotate 
frames on the sub-bundles $L_+$ and $L_-$; here $L_-$ is identified
with $L_+^*$ using the Lorentzian metric $\eta$.

\subsection{Recovering the Physical Spacetime Background\label{sec:polarization}}

Let us now briefly describe the physical interpretation of the
formalism thus far, in particular how para-Hermitian geometry recovers
the usual expectations of the doubled geometry of double field theory
and reconciles them with generalized geometry. Building on the local
description in~\cite{geonongeo}, an almost
para-Hermitian structure $(K,\eta)$ on a $2d$-dimensional manifold
$M$, i.e. a splitting of the tangent bundle $TM=L_+\oplus L_-$ into
maximally isotropic sub-bundles, is also called a
\emph{polarization}. To make contact with the generalized geometry of
the standard Courant algebroid, only one of the two distributions of
$T M$ is required to be integrable, in which case $(M,K,\eta)$ is not almost para-K\"ahler (equivalently $\nabla^{\tt LC}\omega\neq0$). 
If $(M,K,\eta)$ is an $L_+$-para-Hermitian
manifold, then $L_+=T\cF_+$ is the tangent bundle of a foliation $\cF_+$ of $M$ of dimension
$d$; the Lagrangian submanifold
$\mathcal{F}_+$ is then the physical spacetime.\footnote{If
  $(M,K,\eta)$ is a para-Hermitian manifold, so that $L_-=T\cF_-$ is
  also integrable, then the foliation $\cF_-$ may be interpreted as
  the auxiliary ``dual'' manifold of the physical spacetime
  $\cF_+$. However, this situation will only arise in some very special
  instances in this paper and is not a general feature of a global doubled geometry.} The tangent bundle $TM$ can be regarded as
a metric algebroid over $\mathcal{F}_+$ with anchor map given by the
$L_+$-para-complex projection
$\Pi_+:\Gamma(TM)\to\Gamma(T\mathcal{F}_+)$. The ${\sf
  O}(d,d)(M)$-invariant metric $\eta$ identifies $L_-$ with
$L_+^*=T^*\mathcal{F}_+$, and $TM$ is identified with the generalized
tangent bundle\footnote{Globally, the
  generalized tangent bundle on $\cF_+$ is the vector bundle
  $E\to\cF_+$ defined by the exact sequence
\be\nonumber
0\longrightarrow T^*\cF_+\longrightarrow E\longrightarrow
T\cF_+\longrightarrow 0
\ee
of bundles on $\cF_+$, where the map $\rho:E\to T\cF_+$ is an anchor. This global description must be used whenever $\cF_+$ is endowed with a non-trivial NS--NS $H$-flux.}
$\mathbb{T}\mathcal{F}_+:=T\mathcal{F}_+\oplus
T^*\mathcal{F}_+$ of $\mathcal{F}_+$ via the projection
isomorphism $${\sf p}_+:\Gamma(TM) \longrightarrow
\Gamma(\mathbb{T}\mathcal{F}_+) \ , \quad X\longmapsto {\sf
  p}_+(X)=\Pi_+(X)+\eta\big(\Pi_-(X)\big) \ , $$ with inverse
${\sf p}^{-1}_+:X_++\alpha_+\mapsto X_++\eta^{-1}(\alpha_+)$ for $X_+\in
\Gamma(T\cF_+)$ and $\alpha_+\in\Gamma(T^*\cF_+)$. This isomorphism sends the metric $\eta$ and the fundamental 2-form $\omega$ to the duality pairings
\begin{align*}
\eta\big({\sf p}^{-1}_+(X_++\alpha_+),{\sf p}^{-1}_+(Y_++\beta_+)\big) &= \beta_+(X_+)+\alpha_+(Y_+) \ , \\[4pt] \omega\big({\sf p}^{-1}_+(X_++\alpha_+),{\sf p}^{-1}_+(Y_++\beta_+)\big) &= \beta_+(X_+)-\alpha_+(Y_+) \ .
\end{align*}
It also sends the canonical D-structure on $M$ to the standard Courant algebroid over $\mathcal{F}_+$, with the $\Pi_+$-projected D-bracket $\llbracket\cdot,\cdot\rrbracket^{\tt D}_+$ on sections of $TM$ mapping to the Dorfman bracket $[\cdot,\cdot]^{\tt D}_{\mathcal{F}_+}$ on sections of $\mathbb{T}\mathcal{F}_+$:
\be\nonumber
{\sf p}_+\big(\llbracket X,Y\rrbracket^{\tt D}_+\big) = [{\sf p}_+(X),{\sf p}_+(Y)]^{\tt D}_{\mathcal{F}_+} \ .
\ee
This result is established in detail in~\cite{freidel,svoboda,Borngeometry}; an analogous result is given by~\cite{membrane} in terms of the C-bracket \eqref{eq:Cbracket} on $TM$ and the Courant bracket on $\mathbb{T}\mathcal{F}_+$ for the particular case where $M=T^*\cQ$ is the total space of the cotangent bundle of a $d$-dimensional manifold $\cQ$, which in subsequent sections we will describe as a key example of an almost para-Hermitian manifold.

In order to lead to a string background, we also need to specify how to recover the physical background fields of supergravity on $\mathcal{F}_+$, such as the spacetime metric $g$ and the Kalb-Ramond field $B$. This requires a dynamical augmentation of the kinematical data of an almost para-Hermitian structure, for which we follow~\cite{metastring,Borngeometry}.
\begin{definition}
A \emph{generalized metric} on an almost para-Hermitian manifold $(M,K,\eta)$ is a Riemannian metric $\mathcal{H}$ on $M$ which is compatible with the ${\sf O}(d,d)(M)$-invariant metric $\eta$ and the fundamental 2-form $\omega$ in the sense that
\be\nonumber
\eta^{-1}\,\mathcal{H} = \mathcal{H}^{-1}\, \eta \qquad \mbox{and} \qquad \omega^{-1}\,\mathcal{H} = -\mathcal{H}^{-1}\,\omega \ .
\ee
The triple $(\eta,\omega,\mathcal{H})$ is a \emph{Born geometry} on $M$ and $(M,\eta,\omega,\mathcal{H})$ is a \emph{Born manifold}.
\end{definition}
A generalized metric can also be regarded as an almost Hermitian
metric relative to $\omega$, while a Born geometry can be regarded as a $\sf G$-structure on $M$ with $${\sf G}={\sf O}(d,d)\cap{\sf Sp}(2d,\mathbb{R})\cap {\sf O}(2d) = {\sf O}(d) \ . $$ It is shown in~\cite{Borngeometry} that there always exists a choice of frame on $TM=L_+\oplus L_-$ in which the generalized metric can be brought into the diagonal form
\be\label{eq:diaghermmetric}
\mathcal{H}_0 = \bigg( \begin{matrix}
g_+ & 0 \\ 0 & g_+^{-1}
\end{matrix} \bigg) \ ,
\ee
where $g_+$ is a metric on the sub-bundle $L_+$ and we have identified $L_-=L_+^*$ using the Lorentzian metric $\eta$. In the case that $(M,K,\eta)$ is $L_+$-para-Hermitian, this shows that the generalized metric encodes a choice of Riemannian metric on the physical spacetime submanifold $\cF_+$ and provides an ${\sf O}(d){\times}{\sf O}(d)$-structure on the generalized tangent bundle $\mathbb{T}\cF_+$.

Having discussed how to obtain the conventional spacetime description
from a choice of polarization, it is then natural to understand the
meaning of changing polarization in the framework of para-Hermitian
geometry, extending the notions introduced in~\cite{geonongeo}.
\theoremstyle{definition}
\begin{definition}\label{def:polarizationchange}
A \emph{change of polarization} on an almost para-Hermitian manifold
$(M,K, \eta)$ is an isometry $\vartheta \in {\sf O}(d,d)(M)$ mapping the almost para-Hermitian structure $(K, \eta)$ into $(K_\vartheta, \eta)$ with $K_\vartheta = \vartheta^{-1} \, K \, \vartheta$.
\end{definition}
From this definition it is easy to check that $(K_\vartheta,\eta)$ is also an almost para-Hermitian structure on $M$, i.e.~$K_\vartheta^2=\unit$ and $K_\vartheta^{\rm t} \, \eta \, K_\vartheta=-\eta$, and that the fundamental 2-form transforms into $$\omega\longmapsto\omega_\vartheta=\eta\, K_\vartheta = \vartheta^{\rm t}\, \omega\, \vartheta \ .$$
Such transformations do not generally preserve the (Frobenius or weak)
integrability of the eigendistributions, or the closure of the
fundamental 2-form. In this sense, the choice of polarization contains
all information about fluxes and the spacetime background. We will
show explicitly later on, through some prototypical examples, that the fluxes appear as obstructions to weak integrability with respect to a reference para-K\"ahler structure. On the other hand, the background geometry arises from a choice of generalized metric $\cH$, i.e. a Born geometry $(\eta,\omega,\cH)$ on $M$, which transforms under a change of polarization into
\be\nonumber
\cH\longmapsto\cH_\vartheta = \vartheta^{\rm t} \, \cH \, \vartheta \ .
\ee
Importantly, these transformations describe T-dualities (and other
symmetries) on an almost para-Hermitian manifold, interpreted as a
doubled spacetime: The smooth map $\vartheta:M\to{\sf O}(d,d)$ acts by an element of the continuous
T-duality group ${\sf O}(d,d)$. 
We have already encountered a special class of changes of polarization, namely the $B$-transformations of Section~\ref{sec:Btransformations}. In this case, a $B_+$-transformation changes the polarization by $\vartheta=e^{-B_+}$; in particular, the diagonal generalized metric \eqref{eq:diaghermmetric} is mapped to
\be\label{eq:geometricH}
\cH_{B_+} = \big(e^{-B_+}\big)^{\rm t} \, \cH_0 \, e^{-B_+} = \bigg( \begin{matrix}
g_+- b_+\, g_+^{-1}\, b_+ & b_+\, g_+^{-1} \\ -g_+^{-1}\, b_+ & g_+^{-1}
\end{matrix} \bigg) \ .
\ee
When $(M,K,\eta)$ is an $L_+$-para-Hermitian manifold, this is the familiar form from generalized geometry of the generalized metric on the physical spacetime $\cF_+$ which unifies the target space metric $g_+$ and the Kalb-Ramond 2-form field $b_+=\eta\, B_+$.

For later use, let us spell out the form of such transformations in
local coordinates, and in particular show that any two almost para-Hermitian
structures on the same manifold $M$ with the same compatible metric
$\eta$ are related by a change of
polarization in the sense of Definition~\ref{def:polarizationchange}. Let $(M, K, \eta)$ be an almost para-Hermitian
manifold whose eigendistributions $L_+$ and $L_-$ are locally
spanned, in a given open contractible chart on $M$, by vector fields $Z_i$ and $\tilde Z^i$:
$\Gamma(L_+)={\rm Span}_{C^\infty(M)}\{ Z_i \}$ and $\Gamma(L_-)={\rm Span}_{C^\infty(M)}\{
\tilde{Z}^i\}$. Let $\Theta^i$ and $\tilde{\Theta}_i$ be the
respective dual 1-forms, so that we can write the local expression of
the almost para-Hermitian structure as\footnote{Throughout implicit summation over repeated upper and lower indices is understood.}
 $$K=Z_i \otimes \Theta^i -
\tilde{Z}^i \otimes \tilde{\Theta}_i \qquad \mbox{and} \qquad \eta=
\eta_i^j\,\big(\Theta^i \otimes \tilde{\Theta}_j + \tilde{\Theta}_j
\otimes \Theta^i\big)$$ with $\omega= \eta^i_j \, \tilde{\Theta}_i \wedge \Theta^j.$
Given another almost para-Hermitian structure $(K', \eta)$ on $M$,
we write the corresponding eigendistributions locally in the same 
chart as
$\Gamma(L'_+)={\rm Span}_{C^\infty(M)}\{ Z'_i \}$ and
$\Gamma(L'_-)={\rm Span}_{C^\infty(M)}\{ \tilde{Z}^{\prime\,i}\}$, with dual
1-forms $\Theta^{\prime\,i}$ and $\tilde{\Theta}'_i$. We can then
write the ${\sf O}(d,d)(M)$-transformation from $K$ to $K'$
as $$\vartheta=Z_i \otimes \Theta^{\prime\,i} + \tilde{Z}^i \otimes
\tilde{\Theta}'_i \ ,$$ whose inverse is thus given by  
$$\vartheta^{-1}= Z'_i \otimes \Theta^i + \tilde{Z}^{\prime\,i}
\otimes \tilde{\Theta}_i \ ,$$ and whose transpose is $$\vartheta^{\rm
  t}=\tilde{\Theta}'_i \otimes \tilde{Z}^i + \Theta^{\prime\,i}
\otimes Z_i \ .$$ It is easily checked that $$\eta'=\vartheta^{\rm
  t}\, \eta\, \vartheta=  \eta_i^j\,\big(\Theta^{\prime\,i} \otimes
\tilde{\Theta}'_j + \tilde{\Theta}'_j \otimes
\Theta^{\prime\,i}\big)=\eta \ ,$$ which is indeed compatible with $K'.$

The fact that changes of polarization generally induce flux
deformations of the almost para-Hermitian structure, and hence may
spoil (Frobenius or weak) integrability of the eigendistributions,
means that a conventional spacetime description is not always
possible~\cite{hulled}. While geometric fluxes give twisted distributions which are
globally (weakly) integrable, some flux deformations preserve weak
integrability only locally and the foliations are not globally
defined; such fluxes are said to be \emph{globally
non-geometric}. Others spoil integrability altogether, so that not even
a local geometric spacetime picture can emerge; such fluxes are called
\emph{locally non-geometric}. In the following we will spell this
picture out explicitly in several concrete classes of backgrounds, and
in particular obtain a new geometric impetus on the point of view that
non-geometric backgrounds are noncommutative and nonassociative
spacetimes~\cite{Bouwknegt2004,Blumenhagen2010,Lust2010,Blumenhagen2011,Condeescu2012,Mylonas2012,Blumenhagen2013,Mylonas2013,Aschieri2015,Freidel2017}. In
the case where the polarizations are related by T-dualities or other
symmetries of string theory, they give physically equivalent string
backgrounds. In this paper we do not address the general problem of
which changes of polarization $\vartheta\in{\sf O}(d,d)(M)$ yield proper string symmetries.

\section{Dynamical Para-K\"ahler Structures}\label{sec:dynamicalpara}

In the following we will describe some dynamical systems in which
para-Hermitian structures naturally arise, giving a more elementary
appearence of para-Hermitian geometry than in the construction of a globally well-defined setting for the kinematics of double field theory.
A clarifying class of examples of para-Hermitian geometry comes from Lagrangian dynamics, i.e. from the tangent lift of the dynamics to the tangent bundle of a configuration space with a sufficiently regular function defined on this bundle which encodes the equations of motion. In this section we give a new interpretation to a widely discussed subject, commonly known as Finsler geometry which describes the geometry arising from regular functions on a manifold (such as Lagrangians and Hamiltonians), as an instance of para-K\"ahler geometry; a discussion of Lagrangian and Hamiltonian geometry in terms of Finsler geometry can be found in~\cite{bucataru}. 

\subsection{Newtonian Dynamical Systems and Their Lifts}

In order to understand the lifting procedure, we need a precise definition of a dynamical system. In this paper we will focus on Newtonian dynamical systems \cite{calcvar}.
 \theoremstyle{definition}
\begin{definition}
A \emph{Newtonian dynamical system} is given by a $d$-dimensional
manifold $\cQ$, called \emph{configuration space}, and a second order
differential equation given, in a local chart of $\cQ$ with coordinates $ q=( q^i )$, by
\be
\frac{\de^2 q^i}{\de t^2}= \Phi^i (q, \dot{q}) \ , \label{secord}
\ee
with $t$ a real parameter, $\dot{q}{}^i=\frac{\de q^i}{\de t}$ and $\Phi^i (q, \dot{q})$ a function of $(q^i, \dot{q}{}^i)$ assigning a time evolution law.\\
A \emph{trajectory} of the dynamical system is a curve $g : \mathbb{R}
\rightarrow \cQ$ given, in a local chart $(U, \phi)=(U,q^i)$
on $\cQ$, by $\phi \circ g : \mathbb{R} \ni t \mapsto ( q^i(t)) \in \mathbb{R}^d$ such that $q^i(t)$ are solutions of the differential equation \eqref{secord}.  
\end{definition}

The differential equation \eqref{secord} does not separate the trajectories on $\cQ,$ i.e. there are an infinite number of trajectories passing through each point in $\cQ,$ and hence a different description of the dynamical system is needed in order to find a unique solution to \eqref{secord} for any set of initial conditions.
Roughly speaking,  we need to find an equivalent system of \emph{first order} differential equations by enlarging the space on which they are defined so that there are enough initial conditions to formulate a well-posed Cauchy problem, and hence to obtain a unique solution. From a geometric point of view, this means that we have to find, on this enlarged manifold $M$, a vector field $\Sigma \in \Gamma(TM)$ with components locally defined by first order differential equations, whose integral curves can be \emph{projected} onto the trajectories of the dynamical system on $\cQ$. This leads to the definition of a lift of the dynamics.
\theoremstyle{definition}
\begin{definition}
A \emph{lift of the dynamics} is the association of an equivalent
first order dynamical system field $\Sigma \in \Gamma(TM)$ on a
\emph{carrier manifold} $M$ to the Newtonian dynamical system on
$\cQ$. The inverse procedure of mapping integral curves of the first
order dynamical system field $\Sigma$ to trajectories of the original system is \emph{projection}.
\end{definition}
This definition shows that fiber bundles with base space the
configuration space $\cQ$ are natural choices for lifting the
dynamics. In particular, the bundle projection plays a crucial role in
the description of the geometry of such lifts: If we consider as
carrier manifold $M$ the total space of a fiber bundle $E$ with smooth structure induced by that of the base manifold $\cQ$, the
projection is naturally defined by the surjective map $\pi: E
\rightarrow \cQ.$ The tangent map $T\pi: TE \rightarrow T\cQ$ induced
by the projection defines a splitting $TE= L_{\tt v}(E) \oplus L_{\tt
  h}(E),$ where $L_{\tt v}(E)= \ker(T\pi)$ is called the \emph{vertical
  sub-bundle} and $L_{\tt h}(E)$ is the complementary \emph{horizontal
  sub-bundle}. Any such splitting of $TE$ can be regarded as an almost
product structure on $E$ for which $L_{\tt v}(E)$ and $L_{\tt h}(E)$ are its eigenbundles.

The vertical sub-bundle is defined entirely by the projection. In order to understand how the horizontal sub-bundle encodes the information about the dynamics, we will now describe the canonical lift to the tangent bundle of the configuration space.
\theoremstyle{definition}
\begin{definition}
The \emph{canonical lift} on $T\cQ$ of a Newtonian dynamical system on $\cQ$ is given by the correspondence to the differential equations \eqref{secord} of a \emph{second order vector field} $\Sigma \in \Gamma( T(T\cQ))$ such that:  
\begin{enumerate}
\item[(a)] Integral curves of $\Sigma$ are obtained as tangent lifts of curves on $\cQ$, i.e. $h(t)= Tg(t,1) \in T\cQ$ where $h(t)$ is an integral curve of $\Sigma$ and  $g : \mathbb{R}\rightarrow \cQ$.
\item[(b)] The bundle projection $\pi: T\cQ \rightarrow \cQ$ defines a trajectory $t\mapsto\pi \circ h (t) \in \cQ$ of the dynamical system on $\cQ.$
\end{enumerate}
\end{definition}
This uniquely defines the second order vector field $\Sigma.$ In a local chart $(\pi^{-1}(U),q^i,v^i)$ on $T\cQ$ induced by a local chart $(U,\phi)=(U,q^i)$ on $\cQ$, its expression is $$\Sigma= v^i \, \frac{\partial}{\partial q^i} + \Phi^i (q,v)\,\frac{\partial}{\partial v^i} \ ,$$ such that the equivalent system of first order differential equations is 
$$ \frac{\de q^i}{\de t} = v^i \qquad \mbox{and} \qquad \frac{\de v^i}{\de t}  = \Phi^i (q,v) \ .$$

In this case, an equivalent statement is that $X\in \Gamma(T(T\cQ))$ is
a vertical vector field if its action on functions which are constant
along the fibers vanishes, i.e. $\pounds_X (\pi^* f)=0$ for all $ f
\in C^{\infty}(\cQ)$, where $\pounds$ denotes the Lie derivative. Using
the identity $\pounds_{[X,Y]}=\pounds_X\, \pounds_Y -\pounds_Y\,
\pounds_X,$ it follows that $\pounds_{[X,Y]}(\pi^* f )=0$ for all $ f \in C^{\infty}(\cQ)$ if $X,Y \in \Gamma(L_{\tt v})$ are vertical vector fields. Hence $[X,Y] \in \Gamma(L_{\tt v}(T\cQ))$ and $L_{\tt v}(T\cQ)$ is an involutive distribution. Thus it is Frobenius integrable, and so it describes the foliation of $T\cQ$ with the fibers as leaves.

On the other hand, the \emph{vertical lift} $X_{\tt v}\in \Gamma(L_{\tt v}(T\cQ))$ of a vector field $X \in \Gamma(T\cQ)$ is the infinitesimal generator of translations along the fibers, i.e. the one-parameter group of diffeomorphisms defined by $\mathbb{R} \ni t \mapsto (q, t\, X \rvert_q) \in T\cQ.$  This defines a map $\rho: \Gamma(T\cQ) \rightarrow \Gamma(L_{\tt v}(T\cQ))$ which in local coordinates reads $$\rho: X=X^i \, \frac{\partial}{\partial q^i} \longmapsto X_{\tt v}= (\pi^* X^i )\, \frac{\partial}{\partial v^i} \ ,$$ where the components $\pi^* X^i$ are functions which are constant along the fibers. Thus $\big\{ \frac{\partial}{\partial v^i}\big\}$ locally spans $\Gamma(L_{\tt v}(T\cQ))$, and so $[X_{\tt v}, Y_{\tt v}]=0$ for all $X_{\tt v}, Y_{\tt v} \in \Gamma(L_{\tt v}(T\cQ)).$

In order to describe the horizontal distribution induced by $\Sigma$ on $T(T\cQ)$, we need to introduce the vertical endomorphism of $T(T\cQ)$.
\theoremstyle{definition}
\begin{definition}
The \emph{vertical endomorphism} $S \in \mathrm{End}(T(T\cQ))$ is the $(1,1)$-tensor field which is the composition of the vertical lift and the tangent projection: $S= \rho \circ T\pi,$ or equivalently the endomorphism of $T(T\cQ)$ which makes the diagram
\begin{center}
\begin{tikzcd}
T(T\cQ) \arrow{r}{T\pi}  \arrow{rd}{S} 
  & T\cQ \arrow{d}{\rho} \\
    & T(T\cQ)
\end{tikzcd}
\end{center}
commute.
\end{definition}
The tensor $S$ is called the vertical endomorphism because when acting
on vector fields, $\ker(S)={\rm im}(S)=\Gamma(L_{\tt v}(T\cQ))$ and
$S^2=0$. This also implies that $S$ is integrable, i.e. it has
vanishing Nijenhuis tensor $N_S=0$, so that $S$ defines a nilpotent
structure, and that in local coordinates it is given by $$S= \frac{\partial}{\partial v^i} \otimes \de q^i \ .$$

It can be shown~\cite{calcvar} that $(\pounds_{\Sigma}S )^2
=\mathds{1}.$  It is also shown in \cite{calcvar} that $L_+=L_{\tt
  v}(T\cQ)$ is the $+1$-eigenbundle of $\pounds_{\Sigma}S.$ The
horizontal sub-bundle $L_-=L_{\tt h}(T\cQ)$ is therefore the
$-1$-eigenbundle of $\pounds_{\Sigma}S$ and its elements, as
\emph{horizontal lifts} of vector fields $X \in \Gamma(T\cQ),$ take
the form $X_{\tt h} = \frac{1}{2}\, ([X_{\tt v}, \Sigma]+X^{\tt c}),$
where $X^{\tt c}$ is the complete lift\footnote{Let $X \in
  \Gamma(T\cQ)$ and let $\varphi_t$ be the local one-parameter group
  of diffeomorphisms on $\cQ$ generated by $X.$ The infinitesimal
  generator $X^{\tt c} \in \Gamma(T(T\cQ))$ of the local one-parameter
  group of diffeomorphisms on $T\cQ$ defined by the canonical lift
  $\varphi^{\tt c}_t=T\varphi_t$ is called the \emph{complete lift} of
  $X.$ It defines a canonical injection
  $\Gamma(T\cQ)\to\Gamma(T(T\cQ))$ by the directional derivatives
  $X^{\tt c}(\pi^*f)=\pi^* X(f)$ for $f\in C^\infty(\cQ)$. In local coordinates, if $X=X^i\,\frac\partial{\partial q^i}$
  then
\be\nonumber
X^{\tt c} = X^i\,\frac\partial{\partial q^i} + v^i\,\frac{\partial
  X^j}{\partial q^i}\, \frac\partial{\partial v^j} \ .
\ee
} of $X.$ The rank of $L_{\tt v}(T\cQ)$ is $d$, and since $L_{\tt
  h}(T\cQ)$ is the complementary sub-bundle of $L_{\tt v}(T\cQ)$ it
also has rank $d.$ Thus the second order vector field $\Sigma$,
together with the naturally defined maps $\rho$ and $T\pi$, define an
almost para-complex structure $K=\pounds_{\Sigma}S$ on $M=T\cQ.$ As we
have seen, the $+1$-eigenbundle $L_+=L_{\tt v}(T\cQ)$ is Frobenius
integrable and the corresponding foliation $\cF_+$ is canonically
identified with the fibers of $T\cQ$, i.e. the space of velocities $v$.

In local coordinates a horizontal lift reads $$X_{\tt h}= (\pi^* X^i)\, D_i \qquad \mbox{with} \quad D_i= \Big(\frac{\partial}{\partial q^i}\Big)_{\tt h}= \frac{\partial}{\partial q^i}+ \frac{1}{2}\,\frac{\partial \Phi^k}{\partial v^i}\, \frac{\partial}{\partial v^k} \ ,$$ where $\{D_i\}$ is a local basis spanning $\Gamma(L_{\tt h}(T\cQ)).$
This easily shows that $L_{\tt h}(T\cQ)$ is not an integrable distribution, as $$[D_i, D_j]= \frac{1}{2}\, \bigg( \frac{\partial^2 \Phi^k}{\partial q^i \, \partial v^j}-  \frac{\partial^2 \Phi^k}{\partial v^i \, \partial q^j}+ \frac{1}{2}\, \Big(  \frac{\partial^2 \Phi^k}{\partial v^j \, \partial v^m} \, \frac{\partial \Phi^m}{\partial v^i}- \frac{\partial^2 \Phi^k}{\partial v^i \, \partial v^m} \, \frac{\partial \Phi^m}{\partial v^j}\Big) \bigg) \, \frac{\partial}{\partial v^k} \ ,$$ so that $[D_i, D_j] \in \Gamma(L_{\tt v}(T\cQ)).$

We can now obtain the local expressions of the 1-forms $\tau^i$ and $ \alpha^i$ dual to the spans of $\Gamma(L_{\tt v}(T\cQ))$ and $\Gamma(L_{\tt h}(T\cQ)).$ Imposing the duality conditions $$ \tau^i\Big(\frac{\partial}{\partial v^j}\Big)= \alpha^i(D_j)= \delta^i{}_j \qquad \mbox{and} \qquad \tau^i(D_j)=\alpha^i\Big(\frac{\partial}{\partial v^j}\Big)=0 \ ,$$ we obtain 
$$ \alpha^i = \de q^i \qquad \mbox{and} \qquad \tau^i= \de v^i - \frac{1}{2} \, \frac{\partial \Phi^i}{\partial v^j} \, \de q^j \ .$$
Hence the local expression of the dynamical almost para-complex structure is $$K=\pounds_{\Sigma}S= \frac{\partial}{\partial v^i} \otimes \tau^i - D_i \otimes \de q^i \ .$$
Similarly the projections $\Pi_{\pm}= \frac{1}{2}\,(\mathds{1}\pm \pounds_{\Sigma}S)$ are locally given by $$\Pi_+= \frac{\partial}{\partial v^i}\otimes \tau^i \qquad \mbox{and} \qquad \Pi_-=D_i \otimes \de q^i \ .$$
Finally it is straightforward to compute that the Nijenhuis tensor associated to $K=\pounds_{\Sigma}S$ is $$N_K= 2\,[D_i,D_j]\otimes \de q^i \otimes \de q^j \ ,$$ showing once more that the complete integrability of the almost para-complex structure is violated by the horizontal eigenbundle. 

\begin{remark}
This construction is reminescent of the definition of a linear
connection on principal and vector bundles, which also introduces a
splitting of the tangent bundle of the total space. However the
condition here defining the horizontal sub-bundle is different and generally weaker than the usual condition for connections on an associated vector bundle. Because of this such a construction is called a \emph{non-linear connection}. The properties of non-linear connections on the tangent bundle are discussed in \cite{Crampin1,Crampin2}. 
\end{remark}

\subsection{Lagrangian Dynamics and Born Geometry}

Our aim now is to connect the completely general discussion above to a
specific case, the description of Newtonian dynamical systems admitting a regular Lagrangian. In this framework we will encounter a simple instance of para-K\"ahler geometry.

We first review some useful notions about the Lagrangian formalism. 
\theoremstyle{definition}
\begin{definition}
Let $\mathcal{L} \in C^{\infty}(T\cQ)$ be a smooth function. The 1-form $\theta_{\mathcal{L}}= S(\de \mathcal{L})$ is the \emph{Cartan 1-form} associated with $\mathcal{L}$. The closed 2-form $\Omega_{\mathcal{L}}=-\de \theta_{\mathcal{L}}$ is the \emph{Lagrangian 2-form}. The function $\mathcal{L}$ is a \emph{regular Lagrangian} if and only if $\Omega_{\mathcal{L}}$ is non-degenerate, and hence a symplectic form.
\end{definition}
From this definition it follows that the Cartan 1-form is horizontal: If $X \in \Gamma(T(T\cQ)),$ then $S(X) \in \Gamma(L_{\tt v}(T\cQ))$ is a vertical vector field, so $\theta_{\mathcal{L}}(S(X))= S (\de \mathcal{L}) (S(X))=  \de \mathcal{L} (S^2 (X))=0.$

The Euler-Lagrange equations read
as $$\pounds_{\Sigma}\theta_{\mathcal{L}}- \de \mathcal{L}=0 \ .$$ By
applying the Cartan formula
\be\nonumber
\pounds_\Sigma=\de\,\imath_\Sigma+\imath_\Sigma\,\de
\ee
for the Lie derivative we obtain $$\imath_{\Sigma} \Omega_{\mathcal{L}}= \de \mathcal{E}_{\mathcal{L}} \ ,$$ where $\mathcal{E}_{\mathcal{L}}= \imath_{\Sigma}\theta_{\mathcal{L}}-\mathcal{L}.$ This shows that $\Sigma$ is the Hamiltonian vector field of the Hamiltonian function $\mathcal{E}_{\mathcal{L}}$, which is globally defined on $T\cQ$ because it is directly derived from the Lagrangian function.

In local coordinates $(q^i, v^i)$ on $T\cQ$, the Cartan 1-form
reads $$\theta_{\mathcal{L}}= \frac{\partial \mathcal{L}}{\partial
  v^i} \, \de q^i \ , $$ so that the Hamiltonian function is given by
\be\nonumber
\cE_\cL = v^i\,\frac{\partial\cL}{\partial v^i} - \cL \ ,
\ee
and the Lagrangian 2-form is $$\Omega_{\mathcal{L}}=\frac{\partial^2 \mathcal{L}}{\partial v^i \, \partial v^j} \, \de q^i \wedge \de v^j+\frac{1}{2}\,\Big( \frac{\partial^2 \mathcal{L}}{\partial v^i\, \partial q^j} -\frac{\partial^2 \mathcal{L}}{\partial q^i\, \partial v^j} \Big) \, \de q^i \wedge \de q^j \ .$$
The local expression of $\theta_{\mathcal{L}}$ explicitly shows that it is a horizontal 1-form, while the local form of $\Omega_{\mathcal{L}}$ gives another formulation of the regularity requirement for the Lagrangian: $\Omega_{\mathcal{L}}$ is non-degenerate if $\ker(\Omega_{\mathcal{L}}) =\{ X \in \Gamma(T(T\cQ)) : \imath_X \Omega_{\mathcal{L}}=0 \}=0.$ Hence, given any vector field $X=X^i\, \frac{\partial}{\partial q^i}+ \tilde{X}^i\, \frac{\partial}{\partial v^i} \in \Gamma(T(T\cQ)),$ we compute  
$$\imath_X \Omega_{\mathcal{L}}= \frac{\partial^2 \mathcal{L}}{\partial v^i\, \partial v^j} \, X^i \, \de v^j + \bigg( \Big(\frac{\partial^2 \mathcal{L}}{\partial v^i \, \partial q^j}- \frac{\partial^2 \mathcal{L}}{\partial q^i \, \partial v^j}\Big)\, X^i -\frac{\partial^2 \mathcal{L}}{\partial v^i\, \partial v^j} \, \tilde{X}^i \bigg) \, \de q^j \ .$$
This shows that $\imath_X \Omega_{\mathcal{L}} \neq 0$ for any
$X\neq0$ if and only if $\det\big( \frac{\partial^2
  \mathcal{L}}{\partial v^i \,\partial v^j} \big) \neq 0.$ Thus a
Lagrangian $\mathcal{L}$ is regular if and only if the Hessian matrix $\big(\frac{\partial^2 \mathcal{L}}{\partial v^i\, \partial v^j}\big)$ has maximum rank $d$.

It can also be shown \cite{calcvar} that 
\be
\Omega_{\mathcal{L}}(X_{\tt v},Y_{\tt v})=\Omega_{\mathcal{L}}(X_{\tt h},Y_{\tt h})=0 \qquad \mbox{and} \qquad \Omega_{\mathcal{L}}(X_{\tt v},Y_{\tt h})+\Omega_{\mathcal{L}}(X_{\tt h},Y_{\tt v})=0 \ , \label{omepro}
\ee
for all $ X_{\tt v}, Y_{\tt v} \in \Gamma(L_{\tt v}(T\cQ))$ and $ X_{\tt h}, Y_{\tt h} \in \Gamma(L_{\tt h}(T\cQ)).$ This implies that the local expression of the Lagrangian 2-form can be written as
$$\Omega_{\mathcal{L}}=\eta_{ij}\, \de q^i \wedge \tau^j \ ,$$
where $\eta_{ij}=\Omega_{\mathcal{L}}\big(D_i, \frac{\partial}{\partial v^j} \big)=\eta_{ji}.$
The vanishing conditions \eqn{omepro} give the compatibility between the almost para-complex structure $K=\pounds_{\Sigma}S$ and the Lagrangian 2-form.
 \theoremstyle{proposition}
\begin{proposition}
Let $\mathcal{L}$ be a regular Lagrangian on $T\cQ$, $\Omega_{\mathcal{L}}$ the associated Lagrangian 2-form and $K=\pounds_{\Sigma}S$ the dynamical almost para-complex structure on $T\cQ$. Then the $(0,2)$-tensor field $\eta_\cL$ defined by $$\eta_\cL(X,Y)= \Omega_{\mathcal{L}}\big(K(X),Y\big)$$ for $X,Y\in\Gamma(T(T\cQ))$ is a metric tensor with Lorentzian signature $(d,d)$, i.e. the vertical and horizontal distributions $L_{\tt v}(T\cQ)$ and $L_{\tt h}(T\cQ)$ are maximally isotropic with respect to $\eta_\cL.$ Thus $(T\cQ, K,\eta_\cL)$ is an $L_{\tt v}(T\cQ)$-para-K\"ahler manifold with fundamental 2-form $\Omega_{\mathcal{L}}$.
\end{proposition}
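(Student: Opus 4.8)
The plan is to verify in turn that $\eta_\cL$ is symmetric, compatible with $K$, non-degenerate, and of balanced signature $(d,d)$, and then to identify the fundamental $2$-form of the resulting structure with $\Omega_{\mathcal{L}}$, whence the para-K\"ahler property is immediate. Throughout I would use only the vanishing relations \eqref{omepro}, the identity $K^2=\unit$ (established earlier as $(\pounds_\Sigma S)^2=\mathds{1}$), and the non-degeneracy of $\Omega_{\mathcal{L}}$ guaranteed by regularity of $\cL$.

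First I would prove symmetry. Writing $X=X_{\tt v}+X_{\tt h}$ and $Y=Y_{\tt v}+Y_{\tt h}$ along $T(T\cQ)=L_{\tt v}(T\cQ)\oplus L_{\tt h}(T\cQ)$, so that $K(X)=X_{\tt v}-X_{\tt h}$, the relations \eqref{omepro} kill the vertical-vertical and horizontal-horizontal terms and reduce $\eta_\cL(X,Y)=\Omega_{\mathcal{L}}(K(X),Y)$ to $\Omega_{\mathcal{L}}(X_{\tt v},Y_{\tt h})-\Omega_{\mathcal{L}}(X_{\tt h},Y_{\tt v})$; antisymmetry of $\Omega_{\mathcal{L}}$ then makes this manifestly invariant under $X\leftrightarrow Y$. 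The very same decomposition gives $\eta_\cL(K(X),K(Y))=\Omega_{\mathcal{L}}(X,K(Y))=-\eta_\cL(X,Y)$, i.e. the compatibility \eqref{compcon}, so that $(K,\eta_\cL)$ is an almost para-Hermitian structure. Non-degeneracy is then immediate: if $\eta_\cL(X,\,\cdot\,)=\Omega_{\mathcal{L}}(K(X),\,\cdot\,)$ vanishes identically, non-degeneracy of $\Omega_{\mathcal{L}}$ forces $K(X)=0$, and $K^2=\unit$ gives $X=0$.

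The isotropy of the eigenbundles is read straight off \eqref{omepro}: for $X_{\tt v},Y_{\tt v}\in\Gamma(L_{\tt v}(T\cQ))$ one has $\eta_\cL(X_{\tt v},Y_{\tt v})=\Omega_{\mathcal{L}}(X_{\tt v},Y_{\tt v})=0$, and for $X_{\tt h},Y_{\tt h}\in\Gamma(L_{\tt h}(T\cQ))$ one has $\eta_\cL(X_{\tt h},Y_{\tt h})=-\Omega_{\mathcal{L}}(X_{\tt h},Y_{\tt h})=0$. Since $L_{\tt v}(T\cQ)$ and $L_{\tt h}(T\cQ)$ are complementary of rank $d$, both are maximally isotropic, and a non-degenerate symmetric form admitting two transverse maximally isotropic subbundles of rank $d$ has signature $(d,d)$; concretely, in the dual coframe $\{\de q^i,\tau^j\}$ the matrix of $\eta_\cL$ has block-off-diagonal form with the symmetric non-degenerate matrix $\eta_{ij}=\Omega_{\mathcal{L}}\big(D_i,\frac{\partial}{\partial v^j}\big)$ filling the off-diagonal blocks. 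This establishes that $\eta_\cL$ is Lorentzian of signature $(d,d)$. Finally, the fundamental $2$-form of $(K,\eta_\cL)$ is $\omega(X,Y)=\eta_\cL(K(X),Y)=\Omega_{\mathcal{L}}(K^2(X),Y)=\Omega_{\mathcal{L}}(X,Y)$, so $\omega=\Omega_{\mathcal{L}}$; since $\Omega_{\mathcal{L}}=-\de\theta_{\mathcal{L}}$ is globally exact we have $\de\omega=0$, and together with the already-established Frobenius integrability of $L_{\tt v}(T\cQ)=\ker(T\pi)$ this exhibits $(T\cQ,K,\eta_\cL)$ as an $L_{\tt v}(T\cQ)$-para-K\"ahler manifold with fundamental $2$-form $\Omega_{\mathcal{L}}$, the distribution $L_{\tt h}(T\cQ)$ remaining non-integrable. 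The only step requiring genuine care is the signature claim, where one must argue that maximal isotropy of two complementary rank-$d$ distributions promotes mere non-degeneracy to the balanced signature $(d,d)$; this holds because $\eta_\cL$ then restricts to a perfect pairing $L_{\tt v}(T\cQ)\times L_{\tt h}(T\cQ)\to\R$, while every other step is a mechanical consequence of \eqref{omepro}, $K^2=\unit$, and the non-degeneracy of $\Omega_{\mathcal{L}}$.
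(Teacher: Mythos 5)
Your proof is correct and follows essentially the same route as the paper's: isotropy of both eigenbundles from \eqref{omepro}, non-degeneracy from that of $K$ and $\Omega_{\mathcal{L}}$, maximality from the rank count, and the $L_{\tt v}(T\cQ)$-para-K\"ahler property from closedness of $\Omega_{\mathcal{L}}$ together with the previously established integrability of $L_{\tt v}(T\cQ)$. The only notable difference is that you obtain symmetry of $\eta_\cL$ invariantly from isotropy plus antisymmetry of $\Omega_{\mathcal{L}}$ (and likewise verify compatibility and the identification $\omega=\Omega_{\mathcal{L}}$ via $K^2=\unit$), whereas the paper reads symmetry off the local expression $\eta_\cL(X_{\tt v},Y_{\tt h})=\eta_{ij}\,X_{\tt v}^i\,Y_{\tt h}^j$ using the symmetry of $\eta_{ij}$ encoded in the third relation of \eqref{omepro}; this is a cosmetic rather than structural difference.
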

\begin{proof}
The non-degeneracy of $\eta_\cL$ follows from the non-degeneracy of both $K$ and $\Omega_{\mathcal{L}}.$
Recall that $K(X_{\tt v})=X_{\tt v}$ and $ K(X_{\tt h})=-X_{\tt h}.$  It follows that
\begin{align*}
\eta_\cL(X_{\tt v},Y_{\tt v})&= \Omega_{\mathcal{L}}\big(K(X_{\tt v}),Y_{\tt v}\big)=\Omega_{\mathcal{L}}(X_{\tt v},Y_{\tt v})=0 \ , \\[4pt]
\eta_\cL(X_{\tt h},Y_{\tt h})&=\Omega_{\mathcal{L}}\big(K(X_{\tt h}),Y_{\tt h}\big)=-\Omega_{\mathcal{L}}(X_{\tt h},Y_{\tt h})=0 \ , \\[4pt]
\eta_\cL(X_{\tt v},Y_{\tt h})&= \Omega_{\mathcal{L}}\big(K(X_{\tt v}),Y_{\tt h}\big)=\Omega_{\mathcal{L}}(X_{\tt v},Y_{\tt h})=\eta_{ij}\, X_{\tt v}^i\, Y_{\tt h}^j \ ,
\end{align*}
in which we used \eqn{omepro} and the local expression of $\Omega_{\mathcal{L}}$. From the last equation, we infer that $\eta_\cL$ is symmetric, since $\eta_{ij}$ is symmetric. Therefore $\eta_\cL$ defines a metric compatible with $K$ and the Lagrangian  2-form. It also follows that the two eigenbundles $L_{\tt v}(T\cQ)$ and $L_{\tt h}(T\cQ)$ of $K$ are isotropic with respect to $\eta_\cL$ and, since they are both of rank $d$, they are maximal. We have already seen that only $L_{\tt v}(T\cQ)$ is an integrable eigenbundle.
Thus $(T\cQ, K=\pounds_{\Sigma}S,\eta_\cL)$ is an $L_{\tt v}(T\cQ)$-para-K\"ahler manifold, since $\Omega_{\mathcal{L}}$ is symplectic for a regular Lagrangian.
\end{proof}
In local coordinates, the metric $\eta_\cL$ takes the
form $$\eta_\cL=\eta_{ij}\, \big(\de q^i \otimes \tau^j + \tau^i \otimes \de
q^j \big) \qquad \mbox{with} \quad \eta_{ij}= \frac{\partial^2
  \mathcal{L}}{\partial v^i \, \partial v^j} \ , $$ showing once more the importance of the regularity condition for the Lagrangian.

Any dynamical system described by a regular Lagrangian $\mathcal{L}$ induces an almost K\"ahler structure on the tangent bundle $T\cQ$ (see e.g.~\cite{calcvar}). The almost complex structure $I$ on the tangent bundle, associated with the second order vector field $\Sigma \in \Gamma(T(T\cQ)),$ is given by $$I=S+\frac{1}{2}\,K\,\Pi_+ \ ,$$ where  
$K=\pounds_{\Sigma}S$ is the almost para-complex structure associated with $\Sigma$ and $\Pi_+= \frac12\,(\mathds{1}+K)$ its vertical projector. It is easy to show that $I(X_{\tt v})= - X_{\tt h}$ and $I(X_{\tt h})=X_{\tt v}$, where $X_{\tt h}$ and $ X_{\tt v}$ are, respectively, the horizontal and vertical lift of a vector field $X \in \Gamma(T\cQ).$ In local coordinates, by fixing the splitting $T(T\cQ)=L_{\tt v}(T\cQ)\oplus L_{\tt h}(T\cQ)$ the almost complex structure $I$ reads $$I= \frac{\partial}{\partial v^i}\otimes \de q^i - D_i \otimes \tau^i \ .$$
Given the properties of the Lagrangian 2-form $\Omega_{\mathcal{L}},$
it follows that it is compatible with the almost complex structure
$I,$ i.e. they satisfy the
relation $$\Omega_{\mathcal{L}}\big(I(X),Y\big)+\Omega_{\mathcal{L}}\big(X,I(Y)\big)=0
\ ,$$ for all $ X,Y \in \Gamma(T(T\cQ)).$ We can then introduce the Hermitian metric $$\mathcal{H}_\cL(X,Y) = \Omega_{\mathcal{L}}\big(I(X),Y\big) \ ,$$ such that 
$$\mathcal{H}_\cL\big(I(X), I(Y)\big)=\mathcal{H}_\cL(X,Y) \ , $$
for all $X,Y \in \Gamma(T(T\cQ))$. In local coordinates, it has the expression $$\mathcal{H}_\cL=\eta_{ij}\,\big(\de q^i \otimes \de q^j + \tau^i \otimes \tau^j\big) \ ,$$ so that for a regular Lagrangian it defines a Riemannian metric on $T\cQ$.

It follows that the Lagrangian almost para-K\"ahler structure and
almost K\"ahler structure have the same fundamental 2-form. With this
data it is then straightforward to show that the almost para-K\"ahler structure\footnote{It is redundant here to write all three of the tensors, but we wish to stress what objects are involved in the structures under consideration.} $(K, \eta_\cL, \Omega_{\mathcal{L}})$ and the almost K\"ahler structure $(I, \mathcal{H}_\cL,\Omega_{\mathcal{L}})$ on $T\cQ$ satisfy the relations $$\eta^{-1}_\cL\,\mathcal{H}_\cL= \mathcal{H}^{-1}_\cL\,\eta_\cL \qquad \mbox{and} \qquad \Omega_{\mathcal{L}}^{-1}\,\mathcal{H}_\cL=-\mathcal{H}_\cL^{-1}\,\Omega_{\mathcal{L}} \ .$$ Thus $(\eta_\cL, \Omega_{\mathcal{L}}, \mathcal{H}_\cL)$ is a {Born geometry} and the tangent bundle $T\cQ$ of the configuration space $\cQ$, for a dynamical system arising from a regular Lagrangian, is a {Born manifold}. The chiral structure $(\eta_\cL, J_\cL)$, introduced in the usual way by $J_\cL=\eta_\cL^{-1}\,\mathcal{H}_\cL,$ makes the triple $(I,J_\cL,K)$ an \emph{almost para-quaternionic structure} \cite{ivanov,Borngeometry}.  
Generalizing this construction, we can infer that any almost para-Hermitian structure and almost Hermitian structure {having the same fundamental 2-form} and the same splitting of the tangent bundle give rise to a Born geometry.

\begin{example}\label{ex:Riemanniantangent}
A particular instance of this geometry \cite{calcvar} is given by 
geodesic motion on any Riemannian manifold $\cQ$ with metric tensor
$g=g_{ij}\, \de q^i\otimes\de q^j.$
In this case the Lagrangian function is given by
$\mathcal{L}=\frac{1}{2}\,g_{ij}(q) \,v^i\,v^j$ and the second order
vector field is $$\Sigma= v^i\,\frac{\partial}{\partial
  q^i}-{\mit\Gamma}^i_{km}(q) \,v^k\, v^m\, \frac{\partial}{\partial v^i}
\ ,$$ where ${\mit\Gamma}^i_{km}$ are the Christoffel symbols of the
Levi-Civita connection compatible with $g.$ The eigenbundle $L_{\tt h}(T\cQ)$ here is the horizontal
distribution of the Levi-Civita connection and has a local frame given by the vector fields
\be\nonumber
D_i = \frac\partial{\partial q^i} - {\mit\Gamma}^k_{ij}(q) \, v^j\,
\frac\partial{\partial v^k} \qquad \mbox{with} \quad [D_i,D_j] =
R^k{}_{ijm}(q) \, v^m\, \frac\partial{\partial v^k} \ ,
\ee
where $R^k{}_{ijm}$ are the components of the Riemann curvature tensor
of $g$; in other words, the horizontal distribution is locally spanned
by the tangent vectors of the paths in $T\cQ$ defined via parallel
transport of a vector $v\in T_q\cQ$ along paths through $q$ in $\cQ$. Thus
$(T\cQ,K,\eta_\cL)$ is a para-K\"ahler manifold if and only if $g$ is a flat metric.
In general $L_{\tt v}(T\cQ)$ is, as always,
canonically identified with the tangent bundle of the space of velocities, while here the
inverse metric $g^{-1}=g^{ij}\,\frac\partial{\partial
  q^i}\otimes\frac\partial{\partial q^j}$ identifies $L_{\tt h}(T\cQ)$ with the cotangent
bundle. The fundamental Lagrangian 2-form is
\be\nonumber
\Omega_{\cL}= g_{ij}(q) \, \de q^i\wedge\tau^j \qquad \mbox{with}
\quad \tau^i = \de v^i+{\mit\Gamma}^i_{jk}(q) \, v^k\, \de q^j \ ,
\ee
while the generalized metric is the Sasaki metric on $T\cQ$~\cite{Vai1}:
\be\nonumber
\cH_\cL = g_{ij}(q)\, \big(\de q^i \otimes \de q^j + \tau^i \otimes
\tau^j\big) \ ,
\ee
This is similar to the example presented in \cite{freidel}, in
which an arbitrary connection compatible with $g$ is used in the
definition of the horizontal sub-bundle, with the difference that in
our case $K$ is always an almost para-K\"ahler structure for any
choice of metric $g$.
\end{example}

It also follows generally that the D-bracket and the bracket associated to the
Levi-Civita connection compatible with $\eta_{\cL}$ coincide, since $(T\cQ,
K, \eta_{\cL})$ is almost para-K\"ahler. 

\subsection{Para-K\"ahler Geometry of Phase Spaces}

We shall now describe how the Legendre transform of a Lagrangian function allows one to import the para-K\"ahler structure from the tangent bundle $T\cQ$ to the cotangent bundle $T^*\cQ$ of a configuration space $\cQ$.
Assuming the existence of a regular Lagrangian also implies that the Legendre transform is well-defined as the fiber derivative of the Lagrangian function $\mathcal{L}$, hence all the structures defined thus far can also be introduced on the cotangent bundle $T^*\cQ.$ Let us recall the definition of Legendre transform \cite{Marsden}, for a more general statement see \cite{Marmo}.
\theoremstyle{definition}
\begin{definition}
The \emph{Legendre transform} of a Lagrangian function $\mathcal{L}\in C^{\infty}(T\cQ)$ is the fiber derivative $$\mathrm{F}\mathcal{L}: T\cQ \longrightarrow T^*\cQ$$ given, at any point $q\in \cQ,$ by $$\big(\mathrm{F}\mathcal{L}(v)\big)(z)=\frac{\de}{\de t}\mathcal{L}(q, v+t\,z)\Big\rvert_{t=0} \qquad \mbox{with} \quad \mathrm{F}\mathcal{L}(v)\in T^*_q \cQ\ , $$ for all $v,z \in T_q \cQ$. This transformation is fiber preserving, and in a local chart it reads
$$\mathrm{F}\mathcal{L}: (q^i, v^i) \longmapsto (q^i, p_i) \qquad \mathrm{with} \quad p_i := \big(\mathrm{F}\mathcal{L}(v)\big)_i= \frac{\partial \mathcal{L}}{\partial v^i} \ .$$
\end{definition}
If $\mathcal{L}$ is a regular Lagrangian, then the Legendre transform
defines a local diffeomorphism between $T\cQ$ and $T^*\cQ.$  A regular
Lagrangian for which this diffeomorphism is globally defined will be called \emph{hyper-regular}. From now on, we will assume hyper-regularity of the Lagrangian.

The pushforward $\mathrm{F}\mathcal{L}_* : \Gamma(T(T\cQ)) \rightarrow \Gamma(T(T^*\cQ))$ in local coordinates, for any vector field $X= X^i\, \frac{\partial}{\partial q^i} + \tilde X^i\, \frac{\partial}{\partial v^i} \in \Gamma(T(T\cQ)),$ is given by $$X \longmapsto X_{\mathcal{L}}=X^i\, \frac{\partial}{\partial q^i} + \pounds_X \Bigl( \frac{\partial \mathcal{L}}{\partial v^i}\Bigr)\bigg\rvert_{(q,p)}\, \frac{\partial}{\partial p_i} \ .$$  
Using this definition, it is easy to check that there exists a unique
1-form $\theta_0 \in \Omega^1(T^*\cQ)$ such that
$(\mathrm{F}\mathcal{L})^* \theta_0 = \theta_{\mathcal{L}},$ where
$\theta_{\mathcal{L}}$ is the Cartan 1-form on $T\cQ.$ In local
coordinates it is given by $\theta_0 = p_i\, \de q^i.$ A similar
statement holds for the Lagrangian 2-form $\Omega_{\mathcal{L}}$:
There exists a unique closed 2-form $\omega_0$ on $T^*\cQ$ that pulls
back to $\Omega_{\mathcal{L}}$, which is just the canonical 2-form
given locally in Darboux coordinates by $$\omega_0=\de p_i \wedge \de
q^i \ .$$  

Having defined $\mathrm{F}\mathcal{L}_*$, we can also show how the
splitting of $T(T\cQ)$ pushes forward to a splitting of $T(T^*\cQ)$:
The basis vectors $\frac{\partial}{\partial v^i}$ locally spanning
$\Gamma (L_{\tt v}(T\cQ))$ push forward to $V_i= \frac{\partial^2
  \mathcal{L}}{\partial v^i\, \partial v^j}\big\rvert_{(q,p)}\,
\frac{\partial}{\partial p_j}.$ Since $\mathcal{L}$ is regular, the
matrix $\big(\frac{\partial^2 \mathcal{L}}{\partial v^i \partial
  v^j}\big\rvert_{(q,p)}\big)$ acts as a ${\sf
  GL}(d,\IR)$-transformation of the vertical distribution, hence the
basis spanning the vertical sub-bundle of $T(T^*\cQ)$ can be written
as $Q^i=\frac{\partial}{\partial p_i}$; we will see that using this
transformed basis does not change the almost para-complex structure
$K$ on $T^*\cQ$, as expected. Thus $\mathrm{F}\mathcal{L}$ preserves verticality. Similarly, the horizontal distribution on $T(T\cQ)$ pushes forward to the horizontal distribution on $T(T^*\cQ)$ spanned by 
$$H_i=\frac{\partial}{\partial q^i} + \pounds_{D_i}\Bigl( \frac{\partial \mathcal{L}}{\partial v^j} \Bigr)\bigg\rvert_{(q,p)}\,\frac{\partial}{\partial p_j}=:P_i + N_{ij}\,Q^j \ ,$$
which defines a non-linear connection. It is shown in \cite{Hamiltonrom} that the connection coefficients $N_{ij}$ are symmetric for a hyper-regular Lagrangian. This will play a crucial role for the interpretation given in the following.

We can also write down the corresponding dual 1-forms $$ \zeta_i= \de p_i - N_{ji}\,\de q^j \qquad \mbox{and} \qquad \de q^i \ ,$$ so that the almost para-complex structure induced by the splitting on $T(T^*\cQ)$ is 
\be
K_N=Q^i \otimes \zeta_i - H_i \otimes \de q^i= Q^i \otimes \de p_i - P_i \otimes \de q^i -2\, N_{ji}\,P^i \otimes \de q^j \ .\label{kleg}
\ee 
Note that, had we considered the ${\sf GL}(d,\IR)$-transformed vertical basis, $K_N$ would not have changed, since the vertical dual 1-form would have changed by the inverse matrix of the transformed vertical basis.

We now need to take a step back and describe the natural para-K\"ahler structures of the cotangent bundle, in order to show how they are intrinsically related to this construction.
We have already discussed above the natural splitting of the tangent bundle of any fiber bundle $E\to \cQ$ induced by its projection map. For the cotangent bundle $T^*\cQ,$ this implies that $T(T^*\cQ)=L^0_{\tt v}(T^*\cQ)\oplus L^C_{\tt h}(T^*\cQ),$ which in local Darboux coordinates is given by
 $$\Gamma\big(L^0_{\tt v}(T^*\cQ)\big)={\rm Span}_{C^\infty(T^*\cQ)}\big\{ Q^i\big\} \qquad \mbox{and} \qquad \Gamma\big(L^C_{\tt h}(T^*\cQ)\big)={\rm Span}_{C^\infty(T^*\cQ)}\big\{ E_i=P_i + C_{ij}\,Q^j \big\} \ ,$$
where $C: \Gamma(T(T^*\cQ)) \rightarrow \Gamma(L^0_{\tt v}(T^*\cQ))$ is any
local map. This is the most general form that can be assigned to the
horizontal distribution, since $C$ is defined in each patch but does
not necessarily transform as a tensor. This is also called a
non-linear connection, since it is obtained by the requirement that
the horizontal sub-bundle is annihilated by vertical distributions,
but without further assumptions,\footnote{For instance, we may assume
  that the vertical 1-forms transform in a prescribed way under the adjoint action of a subgroup of ${\sf GL}(d, \mathbb{R}),$ which is another way to define a linear connection.} a freedom of choice is left in the definition of a vertical 1-form that is given by all the possible choices of $C.$  Hence the almost para-complex structure on $T^*\cQ$ is locally given by 
\be
K_C= Q^i \otimes \tau_i - E_i \otimes \de q^i \ , \label{Npro}
\ee
where $\tau_i= \de p_i - C_{ji}\,\de q^j$ and $\de q^i$ are the dual 1-forms to $Q^i$ and $ E_i$ respectively.

Since $T^*\cQ$ also has a natural symplectic structure given by the canonical 2-form $\omega_0=\de p_i \wedge \de q^i,$ we may ask if there exists a compatibility condition between $K_C$ and $\omega_0$ which endows $T^*\cQ$ with the structure of an almost para-K\"ahler manifold. In other words, we want to construct a metric $\eta_C(X,Y)= \omega_0 (K_C(X),Y)$ for which $L^0_{\tt v}(T^*\cQ)$ and $L^C_{\tt h}(T^*\cQ)$ are maximally isotropic distributions.
We first compute the $(0,2)$-tensor $\eta_C$ by expressing $K_C$ in Darboux coordinates to get
\be
\eta_C=\omega_0\, K_C= \de q^i \otimes \de p_i + \de p_i \otimes \de q^i -2\, C_{ji}\,\de q^i \otimes \de q^j \ .\label{Nmet}
\ee
We notice that $\eta_C$ is non-degenerate because $\omega_0$ and $K_C$ are. We can easily see that the vertical distribution $L^0_{\tt v}(T^*\cQ)$ is isotropic with respect to $\eta_C$: $$\eta_C(X_{\tt v},Y_{\tt v})=\eta_C(Y_{\tt v},X_{\tt v})=0 \ .$$ We further have $$\eta_C(X_{\tt h},Y_{\tt v})=\eta_C(Y_{\tt v},X_{\tt h})= (X_{\tt h})^i \, (Y_{\tt v})_i \ ,$$ where $X_{\tt h}= (X_{\tt h})^i\, E_i\in \Gamma(L^C_{\tt h}(T^*\cQ))$ and 
$Y_{\tt v}=(Y_{\tt v})_i\, Q^i \in \Gamma(L^0_{\tt v}(T^*\cQ)).$
Requiring isotropy of the horizontal distribution, we obtain
$$\eta_C(E_i, E_j)=C_{ij}+C_{ji}-2\,C_{ji}=0 \ .$$ Hence $C_{ij}$ must be symmetric, or equivalently the map $$\eta_C\, C: \Gamma\big(T(T^*\cQ)\big)\times \Gamma\big(T(T^*\cQ)\big) \longrightarrow C^{\infty}(T^*\cQ)$$ must be symmetric in each local trivialization. This condition also implies the symmetry of the tensor $\eta_C$, thus $\eta_C$ is a metric for which $L^0_{\tt v}(T^*\cQ)$ and $L^C_{\tt h}(T^*\cQ)$ are maximally isotropic sub-bundles. Thus with these conditions, $(T^*\cQ,K_C,\eta_C)$ is an $L_{\tt v}^0(T^*\cQ)$-para-K\"ahler manifold.

This result may be interpreted as follows. Any collection of locally
symmetric $(0,2)$-tensor fields $C$ on $T^*\cQ$ corresponds to a
different splitting of $T(T^*\cQ)$ and a different metric $\eta_C$,
which together give an almost para-K\"ahler structure for which the
canonical 2-form $\omega_0$ on $T^*\cQ$ is the fundamental 2-form. We
say that such an almost para-K\"ahler structure on $T^*\cQ$ is
$\omega_0$-\emph{compatible}. The $\omega_0$-compatible 
para-K\"ahler structure corresponding to $C=0$ is called
\emph{canonical}, since it has the Levi-Civita connection as its
canonical para-Hermitian connection.\footnote{We consider the
  canonical splitting here up to constant $C$.}  In this case both
$L^0_{\tt v}(T^*\cQ)$ and $L_{\tt h}^0(T^*\cQ)$, given respectively by
$\Gamma(L^0_{\tt v}(T^*\cQ))={\rm Span}_{C^\infty(T^*\cQ)}\{Q^i\}$ and $\Gamma(L_{\tt
  h}^0(T^*\cQ))={\rm Span}_{C^\infty(T^*\cQ)}\{P_i\},$ are integrable distributions
and Lagrangian with respect to $\omega_0.$ The para-complex structure
is locally given by $K_0=Q^i \otimes \de p_i - P_i \otimes \de q^i$
and $\eta_0= \de q^i \otimes \de p_i + \de p_i \otimes \de q^i$ is a
flat metric. The deformation of the canonical splitting by
$C:\Gamma(L_{\tt h}^0)\to\Gamma(L_{\tt v}^0)$ is somewhat analogous to
the deformations of almost para-Hermitian structures that we discussed
in Section~\ref{sec:Btransformations}, with the important difference
that here the deformation is not realized by an ${\sf O}(d,d)(T^*\cQ)$-transformation.

Going back to the splitting induced by a dynamical system, we see that
the almost para-complex structure $K_N$ from \eqn{kleg} is compatible
with $\omega_0$ and together they define an almost para-K\"ahler
structure on $T^*\cQ$; in this case we identify $C_{ij}=N_{ij}$. Let
us also stress that a bijective Legendre transform maps a Lagrangian
almost para-K\"ahler structure on $T\cQ$ into one of the members of
the class of $\omega_0$-compatible almost para-K\"ahler structures on
$T^*\cQ,$ and the generalized metric $\cH_{\cL}$ on $T\cQ$ to a
generalized metric $\cH_C$ on $T^*\cQ$.

\begin{example}
As in the setting of Example~\ref{ex:Riemanniantangent}, consider the case that $(\cQ,g)$ is a Riemannian manifold of dimension $d$ with the hyper-regular Lagrangian function $\cL=\frac12\, g_{ij}\, v^i\, v^j$. In this case the connection coefficients
\be\nonumber
C_{ij}=N_{ij}={\mit\Gamma}_{ij}^k(q) \, p_k
\ee
coincide with the Christoffel symbols and again the 
sub-bundle $L_{\tt h}^C(T^*\cQ)$ is the horizontal distribution of the Levi-Civita
connection of $g$. As previously $(T^*\cQ,K_C,\eta_C)$ is a
para-K\"ahler manifold if and only if $g$ has vanishing
curvature. A generalized metric on $T^*\cQ$ is defined
in the generic case by the Sasaki metric~\cite{Vaisman2013}
\be\nonumber
\cH_C = g_{ij}(q) \, \de q^i\otimes\de q^j + g^{ij}(q) \,
\tau_i\otimes\tau_j \qquad \mbox{with} \quad \tau_i = \de p_i -
{\mit\Gamma}^k_{ij}(q) \, p_k\, \de q^j \ .
\ee
This is a special case of the 6-parameter family of natural almost
para-Hermitian structures constructed in~\cite{Romaniuc2013} as the general natural lifts of the metric $g$ from the configuration manifold $\cQ$ to the total space $T^*\cQ$ of its cotangent bundle.
\end{example}

\section{Dynamical Nonassociativity and Generalized Fluxes\label{sec:nonassociativity}} 

In this section we will describe some examples of how the description
of dynamical systems in terms of para-Hermitian geometry goes beyond
systems admitting Lagrangian functions which are regular; in these
settings the almost para-K\"ahler structures are relaxed to almost
para-Hermitian structures. In
particular, we look at certain dynamical systems which do not even
admit a Lagrangian function due to the presence of fluxes which induce
a nonassociative deformation of the phase space Poisson algebra. This
will pave the way, within a physically elementary setting, to a
general understanding of how to incorporate geometric and
non-geometric fluxes as deformations of the local para-K\"ahler geometry of the phase space $T^*\cQ$ of any configuration space $\cQ$. 

\subsection{Para-Hermitian Geometry of a Non-Lagrangian System}\label{beyond}

We consider a particular dynamical system in which non-regularity means the lack of a globally defined (regular) Lagrangian or Hamiltonian function. Our goal is to demonstrate how a para-Hermitian structure can be introduced on the cotangent bundle in order to compensate the lack of a regular Hamiltonian, and obtain a geometric description of the dynamics of this system. The configuration space is $\cQ=\mathbb{R}^d$ with $d\geq3$ and the equations of motion are given by
\be
\frac{\de^2 q^i}{\de t^2}=2\,\delta^{ik}\, B_{kj}\, \frac{\de q^j}{\de t} \ , \label{magn}
\ee
where $B=\frac12\, B_{ij}(q)\,\de q^i\wedge\de q^j$ is any 2-form on $\cQ$ (regarded as a skew-symmetric map). For $d=3$ this is the Lorentz force law describing the dynamics of a classical spinless point particle with unit mass and electric charge moving in a magnetic field $\mathcal{B}^i=\varepsilon^{ijk}\,B_{jk}$, where the force exerted by the electric field of the charged particle is neglected; our main interest is the case where $\mathcal{B}$ is generated by a smooth distribution of magnetic monopoles (see e.g.~\cite{Jackiw1984,Gunaydin1985,lustcoc,Bojowald2014,Bojowald2015,Szabo2017,kup,Bunk2018} for other treatments of this dynamical system). In the following we work in arbitrary dimensionality since our later considerations will be naturally adapted to this general setting, but the reader interested in concrete examples may wish to keep this special case in mind.

By lifting the dynamics on the cotangent bundle $T^*\cQ$, we obtain the second order vector field                              
\be\nonumber
\Sigma=\delta^{ij}\, p_j \, \frac{\partial}{\partial q^i}+ 2\,\delta^{jk}\, B_{ij}\, p_k \, \frac{\partial}{\partial p_i} \ \in \ \Gamma\big(T(T^*\cQ)\big) \ .
\ee
The corresponding integral curves yield the system of first order differential equations $$ \frac{\de q^i}{\de t}=\delta^{ij}\, p_j \qquad \mbox{and} \qquad \frac{\de p_i}{\de t}= 2\,B_{ij}\, \delta^{jk}\, p_k \ , $$ which is equivalent to \eqn{magn}. A similar expression is obtained for the lift on the tangent bundle $T\cQ.$ However, such dynamical systems do not generally admit any (global) Lagrangian or Hamiltonian function, hence a fiber derivative connecting the two lifts cannot be defined as previously. 

The splitting $T(T^*\cQ)= L^0_{\tt v}(T^*\cQ)\oplus L^B_{\tt h}(T^*\cQ)$ induced by $\Sigma$ is given by
\begin{align*}
\Gamma\big(L^0_{\tt v}(T^*\cQ)\big)&={\rm Span}_{C^\infty(T^*\cQ)}\Big\{
Q^i=\frac{\partial}{\partial p_i}\Big\} \ , \\[4pt] \Gamma\big(L^B_{\tt
  h}(T^*\cQ)\big)&={\rm Span}_{C^\infty(T^*\cQ)}\big\{D_i= P_i +
B_{ij}\,Q^j\big\} \ ,
\end{align*}
with $P_i=\frac\partial{\partial q^i}$.
The Lie algebra defining these distributions is 
$$[D_i, D_j]=\big(\partial_i B_{jk}-\partial_j B_{ik}\big)\,
Q^k \ , \quad [D_i , Q^j]=0 \qquad \mbox{and} \qquad [Q^i,
Q^j]=0 \ ,$$ where $\partial_i$ denotes the partial derivative with
respect to $q^i$. This shows that the horizontal distribution $L^B_{\tt
  h}(T^*\cQ)$ is not involutive, while the vertical distribution $L^0_{\tt v}(T^*\cQ)$
is integrable and can be identified with the tangent bundle of the
fibers of $T^*\cQ$.

The respective dual 1-forms to the basis vector fields $Q^i$ and $D_i$ are 
$$ \tilde\theta_i= \de p_i + B_{ij}\, \de q^j \qquad \mbox{and} \qquad \de q^i \ ,$$
thus we can write the almost para-complex structure $K_B$ defined by the splitting $T(T^*\cQ)=L^0_{\tt v}(T^*\cQ)\oplus L^B_{\tt h}(T^*\cQ)$, i.e. such that $K_B|_{L^0_{\tt v}(T^*\cQ)}=\unit$ and $K_B|_{L^B_{\tt h}(T^*\cQ)}=-\unit$:
\be
K_B= Q^i \otimes \tilde\theta_i - D_i \otimes \de q^i= Q^i \otimes \de p_i - P_i \otimes \de q^i + 2\, B_{ij}\,Q^i \otimes \de q^j \ . \label{almK}
\ee
We now define a Lorentzian metric $\eta_B$ on $T^*\cQ$ in order to obtain a suitable almost symplectic 2-form which gives the canonical equations of motion. The introduction of such a metric can be regarded as a way to get around the problem of the non-existence of a global Hamiltonian.
The flat Lorentzian metric with $\eta_B(X_{\tt v},Y_{\tt
  h})=\eta_B(Y_{\tt h},X_{\tt v})$ and $\eta_B(X_{\tt v},Y_{\tt
  v})=\eta_B(X_{\tt h},Y_{\tt h})=0$ in local coordinates
reads $$\eta_B= \de q^i \otimes \tilde\theta_i + \tilde\theta_i \otimes \de q^i \ ,$$
and may be regarded as a lift of the natural flat Euclidean metric
defined on the configuration space $\cQ=\mathbb{R}^d$. Then $(T^*\cQ, K_B,
\eta_B)$ is an $L^0_{\tt v}$-para-Hermitian manifold. We stress that a lack of a Hamiltonian or Lagrangian function translates into a weakening of the properties of the carrier manifold $M$, i.e. $T\cQ$ (or $T^*\cQ$) endowed with a regular Lagrangian (or Hamiltonian) is an almost para-K\"ahler manifold, while $T\cQ$ (or $T^*\cQ$) without a regular function is only an almost para-Hermitian manifold.

Our main goal now is to obtain the almost symplectic 2-form describing the coordinate algebra on the phase space from the geometry of the phase space itself. In this case, such a 2-form is given by the fundamental 2-form of para-Hermitian geometry, i.e. $\omega_B(X,Y)=\eta_B(K_B(X),Y).$ In local coordinates $(q^i, p_i)$ it reads 
\be
\omega_B= \tilde\theta_i \wedge \de q^i= \de p_i \wedge \de q^i + 2\, B_{ij}\, \de q^i \wedge \de q^j \ , \label{omb}
\ee
and its inverse leads to the coordinate algebra 
$$ \{ q^i, q^j\}_B=0 \ , \quad \{ q^i, p_j \}_B= \delta^i{}_j \qquad \mbox{and} \qquad \{p_i, p_j\}_B=2\, B_{ij}(q) \ .$$
These define twisted Poisson brackets which have non-zero Jacobiators amongst the
fiber momentum coordinates given by $$\{ p_i, p_j, p_k\}_B=
3\,H_{ijk}(q) \ ,$$ where $H=\de B=\frac1{3!}\, H_{ijk}(q)\,\de
q^i\wedge\de q^j\wedge\de q^k$ is a 3-form on $\cQ$ with
$H_{ijk}=\partial_{[i}B_{jk]}$. The nonassociativity of the coordinate
algebra is related to the lack of closure of the fundamental 2-form
$\omega_B$: $$\de \omega_B= 2\,H\ ,$$ in which we see the emergence of
$H$-flux.\footnote{It is natural to think of this flux as the geometric NS--NS $H$-flux, and later on it will indeed be identified in that way; for $d=3$ the $H$-flux can be
  interpreted as a field of magnetic charges in the present context.} This algebra is
associative only when the $H$-flux vanishes; for $d=3$ this is the
classical Maxwell theory, where $\partial_i \mathcal{B}^i=0.$ In general, the dynamical vector field $\Sigma$ is Hamiltonian with respect to $\omega_B$ for the locally defined Hamiltonian function $\mathcal{E}=\frac{1}{2}\, \delta^{ij}\, p_i\, p_j.$

In the setting of para-Hermitian geometry, the nonassociativity of the coordinate algebra means the violation of the weak integrability condition (Definition \ref{weint}).  In order to show how locally non-geometric fluxes obstruct the relative weak integrability in this example, we compute the D-bracket on $(T^*\cQ, K_B, \eta_B)$ and the bracket associated to the Levi-Civita connection compatible with $\eta_B$ (which is the D-bracket when $B=0$),  and then compare them using \eqn{dciv}. Here $\eta_B$ is a flat metric, hence the Levi-Civita connection has vanishing Christoffel symbols, and we also know that $\llbracket Q^i, Q^j \rrbracket^{\tt D}= \llbracket Q^i, Q^j \rrbracket^{{\nabla}^{\tt LC}}=0$ from Proposition~\ref{flatd}, i.e. $L^0_{\tt v}(T^*\cQ)$ is a weakly integrable distribution. By computing
$$\eta_B\big({\nabla}^{\tt LC}_{D_i}D_j-{\nabla}^{\tt LC}_{D_j}D_i, Z\big)=  Z^k\,\big(\partial_i B_{jk}-\partial_j B_{ik}\big) \qquad \mbox{and} \qquad \eta_B\big({\nabla}^{\tt LC}_Z D_i, D_j\big)= Z^k\, \partial_k B_{ij} \ ,$$ where $Z=Z^i\,D_i+\tilde Z_i\,Q^i\in \Gamma(T(T^*\cQ))$, we find that the bracket associated to the Levi-Civita connection for two horizontal basis elements is given by
\be\nonumber
\eta_B\big(\llbracket D_i, D_j \rrbracket^{{\nabla}^{\tt LC}}, Z\big)= H_{ijk} \, Z^k \ .
\ee
Considering the canonical para-Hermitian connection, which for the almost para-complex structure $K_B$ is not the Levi-Civita connection, the D-bracket is given by
\be\nonumber
\eta\big(\llbracket D_i, D_j \rrbracket_B^{\tt D}, Z\big)=0 \ ,
\ee
and thus $L^B_{\tt h}(T^*\cQ)$ is also weakly integrable with respect to $K_B$, as it should be. The difference between the D-bracket and the ${\nabla}^{\tt LC}$-bracket is exactly measured by $\de \omega_B=2H,$ as formulated by~\eqn{dciv}. 

Following \cite{svoboda}, we may give a new perspective on this
dynamical nonassociativity, based on the flux deformations of almost
para-Hermitian structures that we discussed in
Section~\ref{sec:ParaHermitianGeometry}. The almost para-Hermitian
structure $(K_B, \eta_B)$ on $T^*\cQ$ can be regarded as a deformation
via a $B_-$-transformation of the canonical para-K\"ahler structure
$(K_0, \eta_0),$ where $\eta_B=\eta_0$ since by definition $e^{B_-} \in {\sf O}(d,d)(T^*\cQ)$, whereas the closure of $\omega_0$ is no longer preserved by $\omega_B=\eta_B\, K_B.$
In the present case the map $B_-: \Gamma(L_{\tt h}^0(T^*\cQ)) \rightarrow
\Gamma(L^0_{\tt v}(T^*\cQ))$ is defined by $$B_-= B_{ij}\, Q^i
\otimes \de q^j$$ and it satisfies the skew condition $$b_-=\eta_B\,
B_-= B_{ij}\, \de q^i \wedge \de q^j \ ,$$ with $b_-$ having only a
$(+0,-2)$-component with respect to the canonical
splitting\footnote{In Section~\ref{sec:ParaHermitianGeometry} we
  considered a $B_+$-transformation associated to a $(+2,-0)$-form
  $b_+$. In the present case we consider instead a
  $B_-$-transformation which is associated to a 2-form $b_-$ on the
  base manifold $\cQ$, hence it becomes a $(+0,-2)$-form since the
  coordinates on the base manifold are the adapted coordinates of the
  horizontal eigenbundle $L_{\tt h}^0,$ which has eigenvalue $-1$.}
$T(T^*\cQ)=L^0_{\tt v}(T^*\cQ)\oplus L^0_{\tt h}(T^*\cQ)$. In this sense the
horizontal distribution of the dynamical splitting of $T(T^*\cQ)$ can
be regarded as the graph $\Gamma(L^B_{\tt
  h}(T^*\cQ))=\mathrm{Graph}_{T(T^*\cQ)}(B_-)=\{Z+B_-(Z):Z\in \Gamma(T(T^*\cQ))\},$ using standard terminology from generalized geometry.
Thus  $K_B=e^{B_-} \, K_0 \, e^{-B_-}$ in \eqn{almK} is exactly given by \eqn{matrK}, and the fundamental 2-form is given by $\omega_B=\omega_0+2b_-$ as confirmed by its local form \eqn{omb}. This description also confirms that fluxes are generally a relative notion obstructing the compatibility of two (almost) para-Hermitian structures in the form of structure constants of the D-bracket algebra of vector fields.

\subsection{Born Reciprocity and the $R$-Flux Model}\label{sec:R-flux}

An important application of the non-Lagrangian dynamical system discussed above comes from applying the duality transformation $(q^i, p_i) \mapsto (p_i, -q^i)$ of order~4 \cite{lustcoc}, sometimes called \emph{Born reciprocity}. 
Born reciprocity is a symplectomorphism of the canonical 2-form
$\omega_0$ on $T^*\cQ,$ but it does not preserve the canonical
para-K\"ahler structure because it sends $K_0\mapsto -K_0$ and
$\eta_0\mapsto-\eta_0.$ In fact, this transformation can be understood
as a composition of deformations of para-Hermitian structures: It
sends the 2-form $B=\frac12\,B_{ij}(q)\, \de q^i\wedge\de q^j$ on the 
configuration space $\cQ$ to the 2-form $\beta=\frac12\,\beta^{ij}(p)\,
\de p_i\wedge \de p_j$ on the fiber spaces of the cotangent bundle
$\pi:T^*\cQ\to \cQ$, and correspondingly the $H$-flux
$H_{ijk}=\partial_{[i}B_{jk]}$ to the $R$-flux\footnote{For example,
  in $d=3$ dimensions the 2-form $B_{ij}=\frac13\,\rho\,
  \varepsilon_{ijk}\, q^k$, which can be interpreted as a magnetic
  field sourced by a
  uniform distribution $\rho$ of magnetic charges, is mapped to
  $\beta^{ij}=\frac13\,\rho\,\varepsilon^{ijk}\, p_k$.}
$R^{ijk}=\tilde\partial{}^{[i}\beta^{jk]}$, where $\tilde\partial^i$
denotes the partial derivative with respect to $p_i$. It therefore sends the map $B_-: \Gamma(L_{\tt h}^0)(T^*\cQ) \rightarrow \Gamma(L^0_{\tt v})(T^*\cQ)$ to the map $\beta_+: \Gamma(L^0_{\tt v})(T^*\cQ) \rightarrow \Gamma(L_{\tt h}^0)(T^*\cQ)$ defined by $\beta_+= \beta^{ij}\, P_i \otimes \de p_j$.
From a more general point of view, we may define Born reciprocity in
para-Hermitian geometry as a morphism sending a $B_-$-transformation
into a $\beta_+$-transformation, where a $\beta_+$-transformation is
any map $\beta_+: \Gamma(L^0_{\tt v}(T^*\cQ)) \rightarrow \Gamma(L_{\tt h}^0(T^*\cQ))$
whose composition with the metric $\eta_0$ is a 2-form.\footnote{In
  the present case a $\beta_+$-transformation is associated to a
  2-form along the fibers, hence it is a $(+2,-0)$-form with respect
  to the canonical splitting.} $\beta_+$-transformations will be discussed more generally later on.

In terms of its action on the almost para-Hermitian structure $(K_B,
\eta_B)$ on $T^*\cQ$, Born reciprocity can be regarded as the change of polarization $\vartheta=e^{\beta_+}\, e^{-B_-}$:
$$\vartheta:K_B \xrightarrow{e^{-B_-}} K_0 \xrightarrow{\ e^{\beta_+}\
} K_{\beta} \ ,$$ where $\beta$ is a bivector field of type $(+0,
-2)$, or equivalently a 2-form of type $(+2, -0)$ depending only on
the fiber directions.\footnote{Here the bigrading is with respect to
  the canonical para-K\"ahler structure.} This change of polarization exchanges the role of the twisted distribution between the horizontal and vertical sub-bundle of the canonical para-K\"ahler structure: There is a new splitting $T(T^*\cQ)=L_{\tt v}^{\beta}(T^*\cQ)\oplus L_{\tt h}^0(T^*\cQ)$, where the vertical distribution $L^0_{\tt v}(T^*\cQ)$ is twisted to $L^\beta_{\tt v}(T^*\cQ)$ with $$\Gamma\big(L_{\tt v}^{\beta}(T^*\cQ)\big)= {\rm Span}_{C^\infty(T^*\cQ)}\big\{ \tilde D^i=Q^i + \beta^{ij}\,P_j\big\} \ ,$$ while the horizontal distribution $L_{\tt h}^0(T^*\cQ)$ is unchanged. Thus the Lie algebra of the new splitting is  
$$[P_i, P_j]= 0 \ , \quad [P_i, \tilde D^j]=0 \qquad \mbox{and} \qquad [\tilde D^i,\tilde D^j]=\big(\tilde{\partial}^i \beta^{jk}-\tilde{\partial}^j \beta^{ik}\big)\, P_k \ .$$
The twisted sub-bundle is still maximally isotropic with respect to the metric $\eta_0,$ thus $(K_{\beta}, \eta_0)$ is an almost para-Hermitian structure on $T^*\cQ$ (a $\beta_+$-transformation of the canonical para-K\"ahler structure as we showed above) with fundamental 2-form $$\omega_{\beta}= \de p_i \wedge \de q^i + 2\, \beta^{ij}\, \de p_i \wedge \de p_j= \omega_0 + 2\,\tilde{b}_+ \ , $$ where 
$\tilde{b}_+= \eta_0 \, \beta_+ = \beta^{ij}\, \de p_i \wedge \de p_j$
is a $(+2,-0)$-form with respect to the canonical splitting. The
inverse of the 2-form $\omega_\beta$ yields the local coordinate
algebra with twisted Poisson brackets
\be\nonumber
\{q^i,q^j\}_\beta=2\,\beta^{ij}(p) \ , \quad \{q^i,p_j\}_\beta=\delta^i{}_j \qquad \mbox{and} \qquad \{p_i,p_j\}_\beta=0 \ ,
\ee
which together with the non-vanishing Jacobiators
\be\nonumber
\{q^i,q^j,q^k\}_\beta=3\,R^{ijk}(p)
\ee
exhibit a nonassociative deformation of the configuration space $\cQ$. 

This dynamical system is called the \emph{$R$-flux model} and is
purported to describe the phase space dynamics of closed strings propagating in
locally non-geometric $R$-flux
backgrounds~\cite{Lust2010,Blumenhagen2011,Mylonas2012,Aschieri2015}. It
is also possible to introduce a Born geometry on $T^*\cQ$ in this
setting, analogously to what we did in Section~\ref{sec:polarization}:
Starting from the generalized metric
$$
\widetilde{\cH}_0 = \bigg(\begin{matrix}
g^{-1} & 0 \\ 0 & g
\end{matrix} \bigg)
$$
with respect to the canonical polarization $T(T^*\cQ) = L^0_{\tt
  v}(T^*\cQ)\oplus L^0_{\tt h}(T^*\cQ)$, where $g$ is a Riemannian metric on $\cQ$, one
computes the change of metric in the new $\beta$-twisted polarization
to be
\be\label{eq:tildeHbeta}
\widetilde{\cH}_{\beta_+} = \big(e^{-\beta_+}\big)^{\rm t} \,
\widetilde{\cH}_0 \, e^{-\beta_+} = \bigg( \begin{matrix}
g^{-1} -\beta \, g \, \beta & - \beta\,g \\ g\,\beta & g
\end{matrix} \bigg) \ .
\ee
This is the correct global parameterization for the generalized metric in a
non-geometric polarization familiar from generalized geometry and
double field theory, which is a particular T-duality transformation of
the generalized metric \eqref{eq:geometricH} of a geometric polarization with $g_+=g$
and $b_+ = B$~\cite{Grana2008,Hull2009,Andriot2012}.

Repeating the analysis of Section~\ref{beyond}, we find that the ${\nabla}^{\tt LC}$-bracket describes the emergence of $R$-flux as an obstruction to the weak integrability of $L_{\tt v}^{\beta}(T^*\cQ)$ with respect to the canonical para-K\"ahler structure: $$\llbracket \tilde D^i, \tilde D^j \rrbracket^{{\nabla}^{\tt LC}}= R^{ijk}\, P_k \ ,$$ while the horizontal distribution is weakly integrable. The D-bracket with respect to $K_{\beta}$ vanishes, and the difference between the D-bracket and the ${\nabla}^{\tt LC}$-bracket is measured by $\de \omega_{\beta}= 2\, \de \tilde{b}_+=2R.$ 

Since the notion of flux in this context appears as an obstruction to relative
integrability of two (almost) para-Hermitian structures, we can also compute the D-bracket associated to the almost para-complex structure $K_B$ from \eqn{almK} of the vectors spanning $\Gamma(L_{\tt h}^0(T^*\cQ))$ and $\Gamma(L_{\tt v}^{\beta}(T^*\cQ)),$ which will demonstrate how the fluxes thus far obtained can be generalized. The D-bracket associated with $K_B$ is defined using the canonical para-Hermitian connection $$\nabla^{\tt can}=\Pi_+^B\, {\nabla}^{\tt LC}\,\Pi_+^B + \Pi_-^B\, {\nabla}^{\tt LC}\, \Pi_-^B$$ where 
\begin{align*}
\Pi^B_+&=\frac12\,\big(\mathds{1}+K_B\big)=Q^i\otimes \de p_i +B_{ij}\,Q^i \otimes \de q^j \ , \nonumber \\[4pt] \Pi^B_-&=\frac12\,\big(\mathds{1}-K_B\big)=P_i\otimes \de q^i +B_{ij}\, Q^j \otimes \de q^i \ ,
\end{align*}
and ${\nabla}^{\tt LC}$ is the Levi-Civita connection of $\eta_0$.
Then we obtain the D-brackets
\be 
\llbracket P_i, P_j \rrbracket^{\tt
  D}_B=\mathscr{H}_{ijk}\, \tilde D^k + {\mathscr{F}_{ij}}^k\, P_k
\qquad \mbox{and} \qquad \llbracket \tilde D^i, \tilde D^j \rrbracket^{\tt D}_B=
\mathscr{Q}^{ij}{}_k\, \tilde D^k +\mathscr{R}^{ijk}\, P_k \ , \nonumber
\ee
where 
\begin{align}
\mathscr{H}_{ijk}&=-3\,\partial_{[i}B_{jk]} \ , \nonumber \\[4pt]
{\mathscr{F}_{ij}}^k&=\beta^{km}\, \mathscr{H}_{mij} \ , \nonumber \\[4pt]
\mathscr{Q}^{ij}{}_k &= \beta^{im}\, \beta^{jl}\, \mathscr{H}_{mlk}+ B_{km}\,\tilde{\partial}^m \beta^{ij} \ , \nonumber \\[4pt]
\mathscr{R}^{ijk}&=3\, \tilde{\partial}^{[i}\beta^{jk]}+ 3\, B_{lm}\,\beta^{[i|l}\, \tilde{\partial}^m \beta^{|jk]}+ \beta^{il}\, \beta^{jm}\, \beta^{kn}\, \mathscr{H}_{lmn} \ . \nonumber 
\end{align}
These structure constants are precisely the generalized fluxes of double field theory associated to an NS--NS background written in a holonomic basis (see e.g.~\cite{membrane}), after noticing that $K_{\beta}$ can be obtained as a $(-B_-){+}\beta_+$-deformation of $K_{B}$, and requiring that $B$ and $\beta$ do not depend on $p_i$ and $q^i$ respectively. 

\subsection{Fluxes from $B$- and $\beta$-Transformations}

The $R$-flux model can be extended by considering more general $B+\beta$-transformations in order to formulate generalized fluxes as obstructions to compatibility between (almost) para-Hermitian structures.
Generalized fluxes on a cotangent bundle may be interpreted in the
sense of \cite{membrane}, where the cotangent bundle of an arbitrary
manifold is the doubled target space of a membrane sigma-model for
double field theory which involves geometric and non-geometric fluxes
as components of a generalized Wess-Zumino term in the membrane action. Here we will see how the complete expressions of $H$-, $f$-, $Q$- and $R$-fluxes in double field theory emerge from suitable twists of the canonical para-K\"ahler structure on $T^*\cQ,$ in local coordinates $(q^i, p_i),$ for any $d$-dimensional manifold~$\cQ$.

\paragraph{Geometric Fluxes from $\boldsymbol B$-Transformations.}
As a starting example, following \cite{svoboda} we may consider a slightly more general $B$-transformation than that considered in Section~\ref{beyond} by allowing a further dependence on the fiber coordinates $p_i$, and acting on the canonical para-K\"ahler structure on $T^*\cQ$ in the usual way. In this case the tensors $(K_B, \eta_0, \omega_B)$ take the same forms as given in Section~\ref{beyond}, with the differences that the Lie algebra of the two distributions is now
\begin{align}
[D_i, D_j]&=\big(\partial_i B_{jk}-\partial_j B_{ik}+ B_{il}\,\tilde{\partial}^l B_{jk}-B_{jl}\,\tilde{\partial}^l B_{ik}\big) \, Q^k \ , \nonumber \\[4pt]
 [ D_i,Q^j]&=-\tilde{\partial}^j B_{ik} \,
                            Q^k \ , \label{eq:LiealgHf} \\[4pt] [Q^i, Q^j]&=0 \ , \nonumber
\end{align}
and the flux of the fundamental 2-form becomes
\be
\de \omega_B=2 \, \big(\partial_i B_{jk}+ B_{im}\, \tilde{\partial}^m B_{jk}\big) \, \de q^i \wedge \de q^j \wedge \de q^k + 2 \, \tilde{\partial}^i B_{jk} \, \tilde\theta_i \wedge \de q^j \wedge \de q^k \ , \label{deomb}
\ee
when expressed in the splitting $ T(T^*\cQ)= L^0_{\tt v}(T^*\cQ)\oplus L^B_{\tt h}(T^*\cQ).$ The closure of $\omega_B$ is obstructed by a covariant $H$-flux and an $f$-flux with the bivector field $\beta$ set to zero.

By \eqn{deomb} the bracket associated to the Levi-Civita connection also changes: With $Z= Z^i\, D_i + \tilde{Z}_i\, Q^i \in \Gamma(T(T^*\cQ)),$ we compute
\begin{align*}
\eta\big({\nabla}^{\tt LC}_{D_i}D_j-{\nabla}^{\tt LC}_{D_j}D_i, Z\big) &= Z^k \, \big(\partial_i B_{jk}-\partial_j B_{ik}+B_{im}\,\tilde{\partial}^m B_{jk}-B_{jm}\,\tilde{\partial}^m B_{ik}\big) \ , \\[4pt]
\eta\big({\nabla}^{\tt LC}_Z D_i, D_j\big)&= Z^k \, \big(\partial_k B_{ij}+B_{km}\,\tilde{\partial}^m B_{ij}\big) + \tilde{Z}_k\, \tilde{\partial}^k B_{ij} \ ,
\end{align*}
to obtain
\be
\eta\big(\llbracket D_i, D_j \rrbracket^{{\nabla}^{\tt LC}}, Z\big)= 3\,Z^k \,  \big(\partial_{[i}B_{jk]}+B_{[i|m}\,\tilde{\partial}^m B_{|jk]}\big) + \tilde{Z}_k\,\tilde{\partial}^k B_{ij} \ . \label{dsec}
\ee
This still agrees with \eqref{deomb} (up to the usual factor $\frac{1}{2}$). The D-bracket again vanishes:
\be\nonumber
\eta\big(\llbracket D_i, D_j \rrbracket^{\tt D}_B, Z\big)= 0 \ ,
\ee
so that the obstruction to (relative) weak integrability of $L^B_{\tt h}(T^*\cQ)$ is characterized by the components of the covariant $H$-flux, i.e. the $H$-flux without the section condition. The horizontal component of the ${\nabla}^{\tt LC}$-bracket in \eqn{dsec} takes the form of the $f$-flux when the bivector field $\beta$ vanishes~\cite{membrane}. In other words, such fluxes also appear as the $(+0,-3)$-component of \eqn{deomb} with respect to $K_0$:
$$\de \omega_B^{(+0,-3)_0}=2 \, \big(\partial_i B_{jk}+ B_{im}\,\tilde{\partial}^m B_{jk}\big)\, \de q^i \wedge \de q^j \wedge \de q^k \ ,$$
and as the $(+1,-2)$-component
\be\nonumber
\de \omega_B^{(+1,-2)_0}= 2 \, \tilde{\partial}^i B_{jk} \, \de p_i \wedge \de q^j \wedge \de q^k \ .
\ee
In particular, this now implies that the Jacobiators $\{q^i,p_j,p_k\}_B$ are also non-vanishing with the additional nonassociativity induced by the dependence of the $f$-flux on the dual fiber coordinates $p_i$.
We can also see how the fluxes are related to the Lie algebra \eqref{eq:LiealgHf} of the maximally isotropic distributions $L_{\tt h}^B(T^*\cQ)$ and $L_{\tt v}^0(T^*\cQ)$: The $H$-flux appears as the vertical component of the Lie bracket of two horizontal basis vectors, while the $f$-flux is exactly given by the vertical component of the bracket of a vertical and a horizontal basis vector. The vertical distribution remains weakly integrable with vanishing D-bracket and ${\nabla}^{\tt LC}$-bracket; this is due to the vanishing of the non-geometric $Q$- and $R$-fluxes because $\beta$ has been set to zero here. We have thus shown how geometric fluxes appear in the context of para-Hermitian geometry, and in particular as certain deformations of para-Hermitian structures; this provides an explicit example of the general global formulation of fluxes given in \cite{svoboda}.

The lack of integrability in this case also means that $b=\eta_0\,B$ does not satisfy the Maurer-Cartan equation \eqn{maucar}: The obstruction to integrability in \eqn{dsec} is given by $$\Big(\de_+b+ \big(\mbox{$\bigwedge^3$} \eta\big)[b,b]^{\tt S}_-\Big)(D_i,D_j,Z)= \de\omega_B^{(+3,-0)_B}(D_i,D_j,Z) \ ,$$ where $[b,b]^{\tt S}_-$ is the Schouten-Nijenhuis bracket of $b$ regarded as a bivector field and the bigrading on the right-hand side is with respect to the twisted para-complex structure $K_B.$ The relation \eqref{eq:DbracketB} between D-brackets is also easily verified in the present case, since the D-bracket associated to the canonical para-K\"ahler structure is the bracket associated to the Levi-Civita connection of the flat metric $\eta_0$. 

\paragraph{Non-Geometric Fluxes from $\boldsymbol\beta$-Transformations.}
The $B$-transformations considered above preserve the natural splitting induced by the projection map, i.e. they only twist the horizontal distribution while preserving the vertical distribution, since the cotangent projection does not uniquely define a horizontal sub-bundle. On the other hand, a skew transformation $\beta,$ i.e. $\eta_0\,\beta$ is a 2-form, which does not preserve the natural splitting arising from the projection $\pi: T^*\cQ \rightarrow \cQ,$ leads to the emergence of locally non-geometric $R$-flux, as we saw in the discussion of the $R$-flux model in Section~\ref{sec:R-flux}. For this, we explain the notion of $\beta$-transformation on the cotangent bundle, in order to provide a more general interpretation of the $R$-flux model.

Let us consider the canonical para-K\"ahler structure $(K_0, \eta_0)$ on the cotangent bundle $T^*\cQ$ for which $T(T^*\cQ)=L^0_{\tt v}(T^*\cQ) \oplus L_{\tt h}^0(T^*\cQ)$, and define a $\beta$-transformation on it by a map $\beta: \Gamma(L^0_{\tt v}(T^*\cQ)) \rightarrow \Gamma(L_{\tt h}^0(T^*\cQ)),$ so that in local coordinates $\beta= \beta^{ij}\, P_j \otimes \de p_i.$ 
The splitting is twisted to $T(T^*\cQ)=L_{\tt v}^{\beta}(T^*\cQ)\oplus L_{\tt h}^0(T^*\cQ)$ such that $\Gamma(L_{\tt h}^0(T^*\cQ))= {\rm Span}_{C^\infty(T^*\cQ)}\{ P_i \}$ and $\Gamma(L_{\tt v}^{\beta}(T^*\cQ))={\rm Span}_{C^\infty(T^*\cQ)}\{\tilde D^i= Q^i + \beta^{ij}\, P_j \}$, i.e. the almost para-complex structure $$K_{\beta}=  \tilde D^i \otimes \de p_i - P_i \otimes \theta^i$$ is defined on $T^*\cQ,$ where $\de p_i$ and 
$ \theta^i= \de q^i + \beta^{ij}\,\de p_j$ are the dual 1-forms of the vectors $\tilde D^i$ and $P_i$ respectively. Then $K_{\beta}$ is obtained as a \emph{$\beta$-transformation} of $K_0$: $$K_0 \longmapsto K_{\beta}= e^{-\beta}\,K_0\, e^{\beta} \ ,$$ where
\be
e^{\beta}=
\bigg( \begin{matrix}
\mathds{1} & \beta \\
0 & \mathds{1}
\end{matrix} \bigg) \ \in \ {\sf O}(d,d)(T^*\cQ)
 \ , \nonumber
\ee
with respect to the canonical splitting $T(T^*\cQ) = L_{\tt v}^0(T^*\cQ)\oplus L_{\tt h}^0(T^*\cQ)$. 
In this case a $\beta$-transformation does not preserve verticality, hence it ruins the natural construction of a para-Hermitian structure on any bundle since the vertical sub-bundle is intrinsically defined by the underlying structures characterizing a bundle.  This is analogous to what happens in generalized geometry, where there is a distinctive difference between $B$-transformations and $\beta$-transformations, since only $B$-transformations (with a closed 2-form) preserve the Courant bracket~\cite{gualtieri:tesi}.
 
Since $e^\beta\in{\sf O}(d,d)(T^*\cQ)$, any $\beta$-transformation is an isometry of the flat metric $\eta_0$, hence $L_{\tt v}^{\beta}(T^*\cQ)$ and $L_{\tt h}^0(T^*\cQ)$ are still maximally isotropic distributions. This shows that the metric $\eta_0$ can be written in local coordinates as $$\eta_0= \theta^i \otimes \de p_i + \de p_i \otimes \theta^i \ .$$ Thus $\eta_0$ and $K_{\beta}$ are compatible and together define an almost para-Hermitian structure.
Then the fundamental 2-form reads $$\omega_{\beta}= \eta_0\, K_{\beta}=\omega_0 + 2\,\eta_0\, \beta= \theta^i \wedge \de p_i \ ,$$ where $2\,\eta_0\, \beta$ is a $(+2,-0)$-form with respect to the canonical splitting, and its exterior derivative is
\be
\de \omega_{\beta}= \partial_k \beta^{ij}\, \theta^k \wedge \de p_j \wedge \de p_i + \big(\tilde{\partial}^k \beta^{ij}+ \beta^{km}\, \partial_m\, \beta^{ij}\big) \, \de p_k \wedge \de p_j \wedge \de p_i \ . \label{dombeta}
\ee
From \eqn{dombeta} we see that $\omega_{\beta}$ fails to be closed and
the obstruction to closure is given by $\partial_k \beta^{ij}$, which represents a globally non-geometric $Q$-flux when the $B$-field is turned off, and by $\tilde{\partial}^k \beta^{ij}+ \beta^{km}\,\partial_m \beta^{ij}$, which is a locally non-geometric $R$-flux when $B=0.$

We can now compute the D-bracket associated to $K_{\beta}$ between two basis vectors $\tilde D^i,$ since $\llbracket P_i, P_j \rrbracket^{\tt D}_{\beta}=0,$ i.e. $L_{\tt h}^0(T^*\cQ)$ is a Frobenius integrable and weakly integrable eigenbundle. This computation relates the failure of closure of $\omega_{\beta}$ to the relative concept of weak integrability of the twisted vertical distribution $L_{\tt v}^{\beta}(T^*\cQ)$ with respect to the canonical para-K\"ahler structure on $T^*\cQ.$ We first need the bracket associated to the (flat) Levi-Civita connection compatible with $\eta_0$ which is given by 
\be
\eta_0\big(\llbracket \tilde D^i, \tilde D^j \rrbracket^{{\nabla}^{\tt LC}}, Z\big)= Z^k\, \partial_k \beta^{ij} + 3\,\tilde{Z}_k\, \big(\tilde{\partial}^{[k}\beta^{ij]}+ \beta^{[k|m}\, \partial_m \beta^{|ij]}\big) \ . \label{dbrabeta}
\ee
Thus from the bracket \eqref{dbrabeta} and \eqn{dombeta}, we obtain $\llbracket \tilde D^i, \tilde D^j \rrbracket^{\tt D}_{\beta}= 0.$ 
In \eqn{dbrabeta} we clearly see that the $R$-flux obstructs weak integrability while the $Q$-flux arises as the involutive component of the ${\nabla}^{\tt LC}$-bracket, as expected on general grounds from the distinction between local versus global non-geometry.

\paragraph{Generalized Fluxes from $\boldsymbol{B{+}\beta}$-Transformations.}
Finally we describe a twist of the canonical para-K\"ahler structure on $T^*\cQ$ given by the composition of a $B$-transformation and a $\beta$-transformation, as discussed in \cite{membrane} in the context of generalized geometry. Following the discussion above, the new almost para-Hermitian structure is given by  
$$K= e^{-\beta}\,e^{B}\,K_0\, e^{-B}\, e^{\beta} \ , \quad \eta=\eta_0 \qquad \mbox{and} \qquad \omega=\eta_0\, K \ .$$ In matrix notation this reads
\be
K=
\bigg(\begin{matrix}
\mathds{1}-2\,\beta\, B & 2\,(\beta- \beta\, B\, \beta) \\
2\,B & -\mathds{1}+ 2\,B\, \beta 
\end{matrix}\bigg)
 \ , \quad
\eta_0=
\bigg(\begin{matrix}
0 & \mathds{1} \\
\mathds{1} & 0
\end{matrix}\bigg)
\qquad \mbox{and} \qquad
\omega=
\bigg(\begin{matrix}
2\,B & -\mathds{1}+ 2\,B\, \beta \\
\mathds{1}- 2\, \beta\, B & 2\,(\beta- \beta\, B\, \beta) 
\end{matrix}\bigg)
\nonumber
\ee
where the fundamental 2-form can be written as $$\omega= \omega_0  + 2 \,\eta_0\, (B\, \beta -\beta\, B)+ 2\,\eta_0\, B + 2\,\eta_0\, (\beta- \beta\, B\, \beta) \ . $$ The 2-form $\omega_0 +2\, \eta_0\, (B\,\beta - \beta\, B)$ is its $(+1,-1)$-component, $2\,\eta_0\, B$ is its $(+0,-2)$-component, and $ 2\,\eta_0\, (\beta- \beta\, B\, \beta)$ is its $(+2,-0)$-component with respect to the canonical splitting $T(T^*\cQ)=L^0_{\tt v}(T^*\cQ)\oplus L_{\tt h}^0(T^*\cQ).$

The splitting $T(T^*\cQ)= L_{\tt v}^{\beta}(T^*\cQ)\oplus L_{\tt h}^B(T^*\cQ) $ is given in a local patch by
\begin{align*}
\Gamma\big(L_{\tt v}^{\beta}(T^*\cQ)\big)&= {\rm
                                           Span}_{C^\infty(T^*\cQ)}\big\{
                                           \tilde D^i=Q^i+\beta^{ij}\,
                                           D_j\big\} \ , \\[4pt]
  \Gamma\big(L_{\tt h}^B(T^*\cQ)\big)&= {\rm
                                       Span}_{C^\infty(T^*\cQ)}\big\{
                                       D_i=P_i + B_{ij}\,Q^j\big\} \ ,
\end{align*}
with $$\eta_0(D_i, D_j)= \eta_0 (\tilde D^i, \tilde D^j)=0 \qquad \mbox{and} \qquad \eta_0(D_i, \tilde D^j)=\eta_0 (\tilde D^j, D_i)= \delta_i^j \ .$$
The Lie algebra defined by these distributions is
\begin{align}
[D_i, D_j]&= J_{ijk} \, \tilde D^k -J_{ijm} \, \beta^{mk} \, D_k \ ,
            \nonumber \\[4pt]
  [D_i,\tilde D^j]&= C_i{}^{jk}\, D_k  - W^j{}_{ik}\, \tilde D^k \
                    , \label{eq:BbetaLie} \\[4pt] [\tilde D^i, \tilde D^j]&={P^{ij}}_k \, \tilde D^k + A^{ijk} \, D_k \ , \nonumber
\end{align}
where 
\begin{align}
J_{ijk}&=2\,\big(\partial_{[i} B_{j]k} +B_{[i|l}\,\tilde{\partial}^l
         B_{|j]k}\big) \ , 
\nonumber \\[4pt]
C_i{}^{jk}&= \partial_i\beta^{jk}+B_{il}\,\tilde\partial^l\beta^{jk} + \beta^{jm}\,\beta^{lk}\,\big(\partial_iB_{ml}+\partial_mB_{il}+B_{mp}\,\tilde\partial^pB_{il}-B_{ip}\,\tilde\partial^pB_{ml}\big) \ , \nonumber \\[4pt]
W^j{}_{ik}&= \tilde\partial^jB_{ik}+B_{il}\, B_{mk}\,
            \tilde\partial^l\beta^{jm}+\beta^{jl}\,\big(\partial_iB_{lk}
            + \partial_lB_{ik}+B_{lm}\,
            \tilde\partial^mB_{ik}-B_{im}\,\tilde\partial^mB_{lk}\big)
            \ , \label{integra} \\[4pt]
{P^{ij}}_k&=2\,\big(\beta^{p[i}\,\tilde{\partial}^{j]}\,B_{pk} -
            \beta^{l[i}\,\beta^{j]m}\,\partial_l B_{mk} -
            \beta^{l[i}\,\beta^{j]p}\,B_{lm}\,\tilde{\partial}^m
            B_{pk}\big) \ , 
\nonumber \\[4pt]
A^{ijk}&= 2\,\big(\tilde{\partial}^{[i}\beta^{j]k}+
         \beta^{[i|l}\, \partial_l\, \beta^{|j]k}
         +\beta^{[i|l}\,B_{lm}\, \tilde{\partial}^m \beta^{|j]k}\big) -
         {W^{ij}}_l\, \beta^{lk} \ . \nonumber
\end{align}
Thus neither the vertical nor the horizontal distribution is integrable in this polarization. The dual 1-forms to $\tilde D^i$ and $D_i$ respectively are
$$\tilde\theta_i=\de p_i + B_{ij}\,\de q^j \qquad \mbox{and} \qquad \theta^i=\de q^i +\beta^{ij}\,\tilde\theta_j \ .$$ 
Then we can write the local expressions of $K$ and $\eta_0$ as 
$$K= \tilde D^i \otimes \tilde\theta_i - D_i \otimes \theta^i \qquad \mbox{and} \qquad \eta_0= \tilde\theta_i \otimes \theta^i + \theta^i \otimes \tilde\theta_i \ ,$$ 
so that the fundamental 2-form reads $\omega= \eta_0\, K= \theta^i \wedge \tilde\theta_i.$

Generalized fluxes emerge from the bracket associated to the original canonical para-K\"ahler structure on $T^*\cQ.$ Hence we compute the ${\nabla}^{\tt LC}$-bracket, which is the D-bracket of this structure, on the basis vectors of the twisted vertical and horizontal distributions:
\be\nonumber
\llbracket D_i, D_j \rrbracket^{{\nabla}^{\tt LC}}=\mathscr{H}_{ijk}\,
\tilde D^k + {\mathscr{F}_{ij}}^k\, D_k \qquad \mbox{and} \qquad \llbracket
\tilde D^i, \tilde D^j \rrbracket^{{\nabla}^{\tt LC}}= \mathscr{Q}^{ij}{}_k\, \tilde D^k + \mathscr{R}^{ijk}\, D_k \ ,
\ee
where 
\begin{align}
\mathscr{H}_{ijk}&=3\,\partial_{[i}B_{jk]}+3\,B_{[i|m}\,\tilde{\partial}^m B_{|jk]} \ , \nonumber \\[4pt]
{\mathscr{F}_{ij}}^k&=\tilde{\partial}^k B_{ij}+\beta^{km}\, \mathscr{H}_{mij} \ , \nonumber \\[4pt]
\mathscr{Q}^{ij}{}_k &=\partial_k \beta^{ij}+ \beta^{im}\,\beta^{jl}\,\mathscr{H}_{mlk}+ B_{km}\,\tilde{\partial}^m \beta^{ij}+ 2\,\beta^{p[i}\,\tilde{\partial}^{j]}B_{pk} \ , \nonumber \\[4pt]
\mathscr{R}^{ijk}&=3\, \tilde{\partial}^{[i}\beta^{jk]}+ 3\,\beta^{[i|l}\,\beta^{|j|m}\,\tilde{\partial}^{|k]}B_{ml}+ 3\,\beta^{[i|m}\,\partial_m \beta^{|jk]}+ 3\, B_{lm}\,\beta^{[i|l}\,\tilde{\partial}^m \beta^{|jk]}+ \beta^{il}\,\beta^{jm}\,\beta^{kn}\,\mathscr{H}_{lmn} \ . \nonumber 
\end{align}
These are precisely the generalized fluxes\footnote{The Bianchi
  identities for the fluxes follow from the Jacobi identity for the
  Lie brackets \eqref{eq:BbetaLie}.} of double field theory in a
holonomic frame obtained from the standard Courant algebroid
description~\cite{Geissbuhler2013,membrane} or from the Roytenberg
bracket \cite{Halmagyi}; their counterparts in an arbitrary
non-holonomic frame can be obtained by further applying an ${\sf
  O}(d,d)(T^*\cQ)$-transformation of the form \eqref{eq:nonholonomic} to write
the change of basis $\tilde E^a=(A^{-1})^a{}_i\,Q^i$ on
$L^0_{\tt v}$ and $E_a=A_a{}^i\,P_i$ on $L^0_{\tt h}$, where
$A\in{\rm End}(T\cQ)$ is a local ${\sf GL}(d,\IR)$-transformation inducing
geometric $f$-flux through the non-vanishing Lie brackets
$[E_i,E_j]=f_{ij}{}^k\,E_k$ with $f_{ij}{}^k = (A^{-1})^a{}_{[i}\,
(A^{-1})^b{}_{j]}\, \partial_aA_b{}^k$. In the present case these
fluxes arise as a measure of the relative weak integrability between
two (almost) para-Hermitian structures related via a composition of a
$B$-transformation and a $\beta$-transformation,\footnote{Note that an
  arbitrary change of polarization $\vartheta\in{\sf O}(d,d)(M)$ can
  be parameterized as $\vartheta=e^{-B}\, A\,e^\beta$, cf. \ Interlude~\ref{int:Odd}.} with the
$H$-flux obstructing the integrability of $L_{\tt h}^B$ and the
$R$-flux obstructing integrability of $L_{\tt v}^\beta$. This also justifies
once more the choice in \cite{membrane} of the cotangent bundle for
the doubled target space of the membrane sigma-model for double field
theory: It carries a natural para-K\"ahler structure obtained solely
from the properties of the bundle itself, while the generalized fluxes
can be encoded in its deformation and used to introduce a Wess-Zumino
type topological term in the membrane action. Such a construction can of course also be carried out on any flat para-K\"ahler manifold, with the same result; we have chosen the cotangent bundle because it naturally carries such a structure.

The D-bracket associated to the $B{+}\beta$-twisted para-Hermitian structure gives only the integrable part of the $\nabla^{\tt LC}$-bracket algebra defined by the two distributions:
$$\llbracket D_i, D_j \rrbracket^{\tt D} = J_{ijm}\,\beta^{mk}\, D_k \qquad \mbox{and} \qquad \llbracket \tilde D^i, \tilde D^j \rrbracket^{\tt D} = {P^{ij}}_k\, \tilde D^k \ .$$
These brackets also show that the fundamental 2-form $\omega$ is not
closed; the components of $\de\omega$ can be obtained directly from
$\omega=\theta^i\wedge\tilde\theta_i$, or equivalently as the difference of the D-bracket and the ${\nabla}^{\tt LC}$-bracket by using \eqref{dciv} to get
\begin{align*}\nonumber
\de\omega &= 2\,\big( \mathscr{H}_{ijk}\, \theta^i\wedge\theta^j\wedge\theta^k
- \mathscr{R}^{ijk}\,
            \tilde\theta_i\wedge\tilde\theta_j\wedge\tilde\theta_k
  \\[4pt] & \qquad +\, ({\mathscr{F}_{ij}}^k-J_{ijm}\,\beta^{mk}) \,
  \theta^i\wedge\theta^j\wedge\tilde\theta_k - (\mathscr{Q}^{ij}{}_k -
  P^{ij}{}_k) \, \tilde\theta_i\wedge\tilde\theta_j\wedge\theta^k\big) \ .
\end{align*}
The generalized fluxes then appear amongst the Jacobiators of the
corresponding twisted Poisson brackets
\be\nonumber
\{q^i,q^j\}_{B,\beta} = 2\,\big(\beta^{ij}-\beta^{ik}\,B_{kl}\,\beta^{lj}\big) \ , \quad \{q^i,p_j\}_{B,\beta} = \delta^i{}_j-2\,\beta^{ik}\,B_{kj} \qquad \mbox{and} \qquad \{p_i,p_j\}_{B,\beta} = 2\,B_{ij} \ .
\ee
Finally, we can start from a reference generalized metric \eqref{eq:tildeHbeta} and apply the change of polarization
\begin{align*}
\widetilde{\cH}_{B,\beta} &= \big(e^{-\beta}\big)^{\rm t} \, \big(e^{-B}\big)^{\rm t} \, \widetilde{\cH}_0 \, e^{-B} \, e^{-\beta} \\[4pt]
&= \bigg( \begin{matrix}
g^{-1}-\beta\,g\,\beta+(g^{-1}\, B\, \beta)_{\rm ss} + \beta\,B\,g^{-1}\,G\,\beta &g^{-1}\,B-\beta\,g+\beta\,B\,g^{-1}\,B \\ -B\,g^{-1}+g\, \beta-B\,g^{-1}\,B\,\beta & g-B\,g^{-1}\,B
\end{matrix} \bigg)
\ ,
\end{align*}
where the subscript ${}_{\rm ss}$ means the skew-symmetric part of a
$(0,2)$-tensor on $M=T^*\cQ$; this is indeed the correct form of the covariant generalized metric on $T^*\cQ$~\cite{membrane}.

\section{Para-Hermitian Geometry of Drinfel'd Doubles} \label{paradrinfeld}

We shall now move on to study other related examples of how fluxes arise in
para-Hermitian geometry, which extend our previous considerations
globally to certain classes of parallelizable manifolds. A broad class of examples of
para-Hermitian manifolds naturally arises in the form of Lie groups
that are Drinfel'd
doubles, which provide a global extension of the local geometry of the cotangent bundles
we considered previously to curved backgrounds; from a dynamical perspective, they describe duality transformations between particular field theories valued in Lie groups which are given by principal chiral models. At the same time, they automatically capture the relation with generalized geometry and provide a natural notion of non-abelian T-duality for Lie groups. Double field theory and in particular Poisson-Lie T-duality on Drinfel'd doubles has also been considered by~\cite{hulled,edwards:nonpub,Hassler2017,osten}. In this section we will
adapt the description presented in~\cite{hulled} to the formalism of
Sections~\ref{sec:ParaHermitianGeometry}
and~\ref{sec:nonassociativity} in this setting. 

\subsection{The Left-Invariant Para-Hermitian Structure} \label{paraherdri}

A \emph{(classical) Drinfel'd double} $\sfD$ is a $2d$-dimensional Lie
group whose Lie algebra $\frd$ can be given in the split form $\mathfrak{d}=\mathfrak{g} \Join \tilde{\mathfrak{g}},$ where $\mathfrak{g}$ and $\tilde{\mathfrak{g}}$ are two dual Lie subalgebras of $\mathfrak{d}$ generated, respectively, by elements $T_i$ and $\tilde{T}^i$ with $i=1, \dots,d$ satisfying 
\be
[T_i,T_j]= f_{ij}^{\ \ k}\,T_k \ , \quad [T_i, \tilde{T}^j]= f_{ki}^{\ \ j}\,\tilde{T}^k- Q^{kj}{}_i\, T_k \qquad \mbox{and} \qquad [\tilde{T}^i, \tilde{T}^j]=Q^{ij}{}_k\,\tilde{T}^k\ . \label{drinflie}
\ee
The Jacobi identity for the Lie bracket implies the algebraic Bianchi
identities
\be \nonumber
f_{[ij}{}^m\, f_{k]m}{}^n=0 \ , \quad f_{ij}{}^m\, Q^{kl}{}_m= Q^{m[k}{}_{[i}\, f_{j]m}{}^{l]} 
\qquad \mbox{and} \qquad Q^{[ij}{}_m\, Q^{k]m}{}_n=0 \ .
\ee
The Lie subalgebras $\mathfrak{g}$ and $\tilde{\mathfrak{g}}$ together define a Lie bialgebra $(\frg,\tilde\frg)$, and they
respectively generate two Lie subgroups $\sfG$ and $\tilde\sfG$ of
$\sfD$ such that $\sfD=\sfG\Join \tilde\sfG$ which are dual in
the sense that their Lie algebras are dual. This duality induces a
natural ${\sf Ad}(\sfD)$-invariant inner product on the Lie algebra of
the Drinfel'd double. The Lie brackets \eqref{drinflie} describe the
gauge algebra of a string compactification on a Poisson-Lie background
(see e.g.~\cite{edwards:nonpub}).

This is not the only possible splitting of the Lie algebra $\mathfrak{d}.$ Generally, a splitting of $\mathfrak{d}$ given in terms of two maximally isotropic subspaces is called a \emph{polarization}.\footnote{See~\cite{alekseev:articolo,kossman:articolo} for a more precise definition and comprehensive treatment of the geometry of Drinfel'd doubles.} When the polarization is given in terms of two maximally isotropic Lie subalgebras $\mathfrak{g}$ and $\tilde{\mathfrak{g}}$ as above, the triple $(\mathfrak{d},\mathfrak{g}, \tilde{\mathfrak{g}})$ is called a \emph{Manin triple}. 
In this case the duality pairing between $\mathfrak{g}$ and $\tilde{\mathfrak{g}}$ can be regarded as an invariant ${\sf O}(d,d)$-metric $\eta$ on the Lie algebra $\mathfrak{d}$ such that $$\eta(T_i,T_j)=\eta(\tilde{T}^i, \tilde{T}^j)=0 \qquad \mbox{and} \qquad \eta(T_i,\tilde{T}^j)=\eta(\tilde{T}^j, T_i)=\delta_i^j \ .$$
The splitting of $\mathfrak{d}$ can be equivalently regarded as an
invariant para-complex structure $K$ on $\frd$ such that
$\mathfrak{g}$ is its $+1$-eigenspace and $\tilde{\mathfrak{g}}$ is
its $-1$-eigenspace, so that $$K= T_i \otimes \tilde{T}^i -
\tilde{T}^i \otimes T_i \ .$$ 
This is the para-K\"ahler structure associated to the Manin triple
polarization $\mathfrak{d}=\frg\Join\tilde\frg$; alternatively, the
para-K\"ahler structure $-K$ is associated to the polarization
$\frd=\tilde\frg\Join\frg$. In the spirit of Section~\ref{sec:R-flux},
the change of polarization $K\mapsto-K$ is a type of Born reciprocity
transformation.

The Lie algebra $\mathfrak{d}$ is isomorphic to the Lie algebra of left-invariant vector fields on the Drinfel'd double $\sfD,$ which are globally defined because
the tangent and cotangent bundles of a Lie group are trivial vector
bundles. Hence we may translate this construction to the group manifold $\sfD,$ where coordinates are given by $x^I= (x^i, \tilde{x}_i)$ in a local chart. For this, we need to construct left-invariant 1-forms and vector fields. We first describe this construction in a Manin triple polarization, and then later on give the general form in an arbitrary polarization.

In order to obtain the local expression of the left-invariant 1-forms,
we fix the Iwasawa decomposition of $\sfD$ to be $\gamma=g \,
\tilde{g}$, for any element $\gamma\in\sfD$ in terms of elements
$g=\exp(x^i\,T_i)\in\sfG$ and $\tilde g=\exp(\tilde x_i\,\tilde
T^i)\in\tilde\sfG$; an equivalent discussion is possible with the dual
Iwasawa decomposition $\gamma=\tilde{g}\,g$. Then any left-invariant
1-form on $\sfD$ is valued in $\frd\otimes T^*\sfD$ and has the expression
$$\Theta= \gamma^{-1}\,\de \gamma = \tilde{g}^{-1}\, g^{-1} \, \de (g \, \tilde{g})= \tilde{g}^{-1}\,\big(g^{-1}\,\de g\big)\,\tilde{g}+ \tilde{g}^{-1}\,\big(\de \tilde{g} \, \tilde{g}^{-1}\big) \, \tilde{g} \ ,$$  
where $\lambda= g^{-1}\,\de g= \lambda^m \, T_m$ (depending only on the
coordinates $x^i$) and $\tilde{\rho}=\de\tilde{g} \, \tilde{g}^{-1}=\tilde{\rho}_m \,\tilde{T}^m$ (depending only on the coordinates $\tilde{x}_i$) are Lie algebra-valued left- and right-invariant 1-forms on $\sfG$ and $\tilde{\sfG}$ respectively. Then the left-invariant 1-form can be written as 
$$\Theta=\lambda^m \, \big(\tilde{g}^{-1} \, T_m \, \tilde{g}\big) + \tilde{\rho}_m
\, \big(\tilde{g}^{-1} \, \tilde{T}^m \, \tilde{g} \big) \ .$$ This
shows that we need the adjoint action of $\tilde{\sfG}$ on the
generators $T_M={{T_m}\choose{\tilde T{}^m}}$ of $\frd$, with the Lie
brackets $[T_M,T_N]=t_{MN}{}^P\, T_P$, which has the form~\cite{hulled}
\be\nonumber
\tilde{g}^{-1} \,
\bigg(\begin{matrix}
T_m \\
\tilde{T}^m
\end{matrix} \bigg) \,
\tilde{g}=
\bigg(\begin{matrix}
\big(\tilde{A}^{-1}\big)_m{}^n & \tilde b_{mn} \\
0 & \tilde{A}^m{}_n
\end{matrix}\bigg) 
\bigg(\begin{matrix}
T_n \\
\tilde{T}^n
\end{matrix}\bigg) \ .
\ee
Here the block matrices are defined by the adjoint action and they all
depend only on the coordinates $\tilde{x}_i$, where $\tilde b_{mn}$ is skew-symmetric because we have chosen a Manin triple polarization. Therefore the left-invariant 1-form is $$\Theta=\lambda^m \, \big(\tilde{A}^{-1}\big)_m{}^n \, T_n + \big(\lambda^m \, \tilde b_{mn}+ \tilde{\lambda}_n\big) \, \tilde{T}^n \ ,$$
where in the second term on the right-hand side we used $\tilde{g}^{-1}\, \tilde{\rho}\,\tilde{g}=\tilde{\lambda},$ where $\tilde{\lambda}=\tilde{g}^{-1}\,\de \tilde{g}=\tilde{\lambda}_m \, \tilde{T}^m,$ hence $ \tilde{\rho}_m\,\tilde{A}^m{}_n=\tilde{\lambda}_n.$ 
The Lie algebra components of $\Theta=\Theta^M\,T_M$ are given by the 1-forms
\be \label{leinfo}
\Theta^M=
\bigg(\begin{matrix}
\Theta^m \\
\tilde{\Theta}_m
\end{matrix} \bigg)
=
\bigg(\begin{matrix}
\lambda^n_i \, \big(\tilde{A}^{-1}\big)_n{}^m \, \de x^i \\
\lambda^n_i\, \tilde b_{nm} \, \de x^i + \tilde{\lambda}^i_m \, \de \tilde{x}_i 
\end{matrix} \bigg)
\ee
from which we obtain the dual left-invariant vector fields
\be
Z_M=
\bigg(\begin{matrix}
Z_m \\
\tilde{Z}^m
\end{matrix}\bigg)
=
\bigg(\begin{matrix}
\tilde{A}^n{}_m \,\big( (\lambda^{-1})^i_n\, \frac\partial{\partial x^i} - (\tilde{\lambda}^{-1})^p_i\, \tilde b_{np}\, \frac{{\partial}}{\partial\tilde x_i} \big)\\
\big(\tilde{\lambda}^{-1}\big)^m_i \, \frac\partial{\partial\tilde x_i}
\end{matrix}\bigg)
\ . \label{leid}
\ee
The Lie brackets $[Z_M, Z_N]=t_{MN}{}^{P}\, Z_P$ close to the Lie algebra $\mathfrak{d}$ from \eqn{drinflie}. 

We can now construct the left-invariant para-Hermitian structure induced by the para-Hermitian structure defined on the Lie algebra $\mathfrak{d}$ in a Manin triple polarization; see also~\cite{osten} for a similar discussion of this construction.
The (globally defined) left-invariant para-complex structure is given by
\be\nonumber
K= Z_m \otimes \Theta^m - \tilde{Z}^m \otimes \tilde{\Theta}_m = \frac\partial{\partial x^i} \otimes \de x^i - 2 \,\tilde{b}_{pn} \, \big(\tilde{\lambda}^{-1}\big)^n_i \, \lambda^p_j \, \frac\partial{\partial\tilde x_i} \otimes \de x^j - \frac\partial{\partial\tilde x_i} \otimes \de \tilde{x}_i \ , 
\ee
where the eigenbundles $L_+$ and $L_-$ of $K$ are both integrable
since they are generated, respectively, by the vector fields $Z_m
$ and $\tilde{Z}^m$ which close to the Lie subalgebras
$\mathfrak{g}$ and $\tilde{\mathfrak{g}}$ from \eqn{drinflie}. 

The left-invariant metric with Lorentzian signature induced by the duality pairing is $$\eta= \Theta^m \otimes \tilde{\Theta}_m + \tilde{\Theta}_m \otimes \Theta^m \ . $$ In this splitting (polarization), it can be regarded as the ${\sf O}(d,d)$-invariant constant metric 
\be 
\eta=
\bigg(\begin{matrix}
0 & \mathds{1} \\
\mathds{1} & 0
\end{matrix}\bigg)
\ . \nonumber
\ee
In the local coordinates $x^I=(x^i, \tilde{x}_i)$ on the group
manifold $\sfD$ it is given by 
\be \label{metrloca}
\eta=\lambda^m_i \, \big(\tilde{A}^{-1}\big)_m{}^n \, \tilde{\lambda}_n^j \, \big(\de x^i \otimes \de \tilde{x}_j + \de \tilde{x}_j \otimes \de x^i\big) \ ,
\ee
since $\tilde b_{mn}$ is skew-symmetric.

Finally, the left-invariant fundamental 2-form is thus given by $$\omega= \eta\, K= \tilde{\Theta}_m \wedge \Theta^m \ .$$ It can be shown that $\omega$ is not closed:
$$\de \omega= \de \tilde{\Theta}_m \wedge \Theta^m - \tilde{\Theta}_m \wedge \de \Theta^m \ . $$ Using the Maurer-Cartan structure equations 
$$\de \Theta^P = -\frac{1}{2} \, {t_{MN}}^P \, \Theta^M \wedge \Theta^N \ ,$$ which in polarized components read
\begin{align*}
\de \Theta^p &= -\frac{1}{2}\,{f_{mn}}^p \, \Theta^m \wedge \Theta^n - Q^{mp}{}_n \, \Theta^n \wedge \tilde{\Theta}_m \ , \\[4pt]
\de \tilde{\Theta}_p &= -{f_{mp}}^n \, \tilde{\Theta}_n \wedge \Theta^m-\frac{1}{2} \, Q^{mn}{}_p \, \tilde{\Theta}_m \wedge \tilde{\Theta}_n \ ,
\end{align*}
we obtain $$\de \omega=-\frac{1}{2}\, \big({f_{mn}}^p \, \tilde{\Theta}_p
\wedge \Theta^m \wedge \Theta^n -{Q^{mn}}_p \, \Theta^p \wedge \tilde{\Theta}_m \wedge \tilde{\Theta}_n
\big) \ .$$ 

We have
therefore shown that a Drinfel'd double $\sfD$ is endowed with a
natural para-Hermitian structure, having two Lagrangian foliations
with leaves given by $\sfG$ and $\tilde\sfG$; this result does not depend on the
choice of Iwasawa decomposition of~$\sfD.$ It is para-K\"ahler if and only if both both groups $\sfG$ and $\tilde\sfG$ (and hence $\sfD$) are abelian. Since $\tilde{\frg}$ is
dual to $\frg$, there is a bundle isomorphism $T^*\sfG\simeq
T\tilde{\sfG}$, and the splitting $T\sfD\simeq
T\sfG\oplus T\tilde{\sfG}$ appropriate to double field theory is
naturally identified with the splitting $T\sfD\simeq \mathbb{T}\sfG$
appropriate to generalized geometry on the Lie group $\sfG$. If $\sfG$ is connected and simply connected, then the Lie bialgebra $(\frg,\tilde\frg)$ makes $\sfG$ into a Poisson-Lie group~\cite{Drinfeld1983} and
endows the pair $(T\sfG,T^* \sfG)$ with the structure of a Lie
bialgebroid, while the Drinfel'd double structure $\frd=\frg \Join
\tilde\frg$ makes the generalized tangent bundle
$\mathbb{T}\sfG=T\sfG\oplus T^*\sfG$ into a Courant algebroid. Generally, a Manin triple polarization of $\frd$ gives a decomposition of $T\sfD$ into Dirac structures, i.e. integrable maximally isotropic
sub-bundles of $T\sfD$. 

Thus for a Drinfel'd double, doubled geometry naturally coincides with
generalized geometry. In this
case we recover $\cF_+=\sfG$ from the Manin triple
$(\frd,\frg,\tilde\frg)$ as the physical spacetime, identified as the
coset $\sfG=\sfD/\tilde\sfG$ by the left action of the subgroup
$\tilde\sfG$ whose isotropy group $\tilde\sfG$ is generated by the
right action. Alternatively, using instead the Manin triple
$(\frd,\tilde\frg,\frg)$ recovers $\cF_+=\tilde\sfG=\sfD/\sfG$ as the
physical spacetime, and the process of exchanging the Manin triple
polarizations $(\frd,\frg,\tilde\frg)$ and $(\frd,\tilde\frg,\frg)$ is
often called \emph{Poisson-Lie T-duality}~\cite{Klimcik1995,edwards:nonpub}. A change in a Manin
triple polarization of $\frd$ is also called a \emph{non-abelian T-duality}~\cite{osten}.

\subsection{Manin Triples as Flux Deformations of Para-K\"ahler Structures} \label{manindefor}

We shall now give a different geometric interpretation of this construction in terms of deformations of a fixed reference para-K\"ahler structure on the group manifold $\sfD$, similarly to what we did in Section~\ref{sec:nonassociativity}.
From the local coordinate expression \eqn{metrloca} of the
left-invariant metric $\eta$, we see that it can be obtained as a
transformation of the ${\sf O}(d,d)(\sfD)$-metric $\eta_0 =\de x^i \otimes
\de \tilde{x}_i + \de \tilde{x}_i \otimes \de x^i$ on the group
manifold via a sequence of transformations; however, in contrast to
the situation of Section~\ref{sec:nonassociativity}, these
transformations are generally \emph{not} valued in the T-duality group
${\sf O}(d,d)(\sfD).$  From this perspective we may infer that, since the group manifold $\sfD$ is $2d$-dimensional, a flat para-K\"ahler structure can always be defined by
$$K_0=\frac\partial{\partial x^i} \otimes \de x^i - \frac\partial{\partial\tilde x_i} \otimes \de \tilde{x}_i \qquad \mbox{and} \qquad \eta_0=\de x^i \otimes \de \tilde{x}_i + \de \tilde{x}_i \otimes \de x^i \ ,$$ 
with abelian eigenbundles of $K_0$ spanned by the vector fields $\frac\partial{\partial x^i}$ and $\frac\partial{\partial\tilde x_i}$; the canonical para-Hermitian connection of this structure is the Levi-Civita connection of $\eta_0$, and the fundamental 2-form is the closed form $\omega_0=\eta_0\,K_0=\de x^i\wedge\de\tilde x_i$. 

Then the para-Hermitian structure $(K,\eta)$ induced by a Manin triple polarization of the Lie algebra $\mathfrak{d}$ is a deformation of $(K_0,\eta_0)$ given by a composition of three types of transformations:

\noindent
$\bullet$ \ \underline{\sl ${\sf GL}(d,\mathbb{R})_+ {\times} {\sf GL}(d,\mathbb{R})_-$-transformations.} \  Such a transformation of the trivial para-K\"ahler structure $(K_0,\eta_0)$ is given by  
$$K'=K_0=\frac\partial{\partial x^i} \otimes \de x^i - \frac\partial{\partial\tilde x_i} \otimes \de \tilde{x}_i \qquad \mbox{and} \qquad \eta'=\lambda^m_i \, \tilde{\lambda}_m^j \, \big(\de x^i \otimes \de \tilde{x}_j + \de \tilde{x}_j \otimes \de x^i\big) \ .$$
The corresponding vector fields spanning the two maximally isotropic distributions are $$Z'_m =  \big(\lambda^{-1}(x)\big)^i_m \, \frac\partial{\partial x^i} \qquad \mbox{and} \qquad \tilde{Z}^{\prime\, m} = \big(\tilde{\lambda}^{-1}(\tilde{x})\big)^m_i \, \frac\partial{\partial\tilde x_i} \ ,$$ whose dual 1-forms are obtained, respectively, from
$$\lambda=g^{-1}\, \de g= \lambda^m_i \, \de x^i \, T_m=\Theta^{\prime\, m}\, T_m \qquad \mbox{and} \qquad \tilde{\lambda}=\tilde{g}^{-1}\, \de \tilde{g}= \tilde{\lambda}^i_m \, \de \tilde{x}_i\, \tilde{T}^m=\tilde{\Theta}'_m \, \tilde{T}^m \ .$$
It follows that the vectors $Z'_m$ and $\tilde{Z}^{\prime\, m}$ spanning
the two distributions close, respectively, to the Lie algebras
$\mathfrak{g}$ and $\tilde{\mathfrak{g}}$ such that
$[Z'_m,\tilde{Z}^{\prime\, n}]=0,$ which means that they separately
give two foliations of $\sfD$ with the subgroups $\sfG$ and $\tilde{\sfG}$ as
leaves. From this point of view, despite the fact that $Z'_m$ and
$\tilde{Z}^{\prime\, m}$ close to the Lie algebras defining the
Drinfel'd double, they are not obtained as global left-invariant
vector fields on $\sfD.$ The fundamental 2-form transforms to
$\omega'= \lambda^m \wedge \tilde{\lambda}_m, $ which is not closed: $$\de
\omega'=-\frac{1}{2}\,\big({f_{pq}}^m \,
\tilde{\lambda}_m \wedge \lambda^p\wedge \lambda^q - Q^{pq}{}_m \, \lambda^m \wedge \tilde{\lambda}_p \wedge
\tilde{\lambda}_q\big) \ ,$$ where we used the Maurer-Cartan structure
equations for the Lie groups $\sfG$ and $\tilde{\sfG}.$ Thus the
non-closure of the fundamental 2-form is, in this framework, related
to the non-abelian nature of the Lie algebras of the distributions. We
stress that this choice is fundamental, since it gives two canonical ``dual'' foliations, one of which can be interpreted as the physical spacetime submanifold. In this way the duality plays a fundamental role in linking the description of a physical spacetime to its dual.

\noindent
$\bullet$ \ \underline{\sl $B$-transformations.} \ Denoted here $\tilde b$, these transformations act to give
\be
\bigg(\begin{matrix}
Z''_m \\
\tilde{Z}^{\prime\prime\, m}
\end{matrix}\bigg)
=
\bigg(\begin{matrix}
\delta_m{}^n & \tilde b_{nm}\\
0 & \delta^m{}_n
\end{matrix}\bigg)
\bigg(\begin{matrix}
Z'_n \\
\tilde{Z}^{\prime\, n}
\end{matrix}\bigg)
 \ , \nonumber
\ee
which yields the globally defined left-invariant vector fields 
\be
Z_M=
\bigg(\begin{matrix}
Z_m \\
\tilde{Z}^m
\end{matrix}\bigg)
=
\bigg(\begin{matrix}
\big( (\lambda^{-1})^i_m \, \frac\partial{\partial x^i} - (\tilde{\lambda}^{-1})^p_i \, \tilde b_{mp} \, \frac\partial{\partial\tilde x_i} \big)\\
\big(\tilde{\lambda}^{-1}\big)^m_i \, \frac\partial{\partial\tilde x_i}
\end{matrix}\bigg)
 \ . \nonumber
\ee
Thus from this perspective a $B$-transformation acts by twisting the two distributions, with $K= e^{-B}\,K_0\,e^B$, and preserving the metric $\eta'$, i.e. it is an ${\sf O}(d,d)(\sfD)$-transformation. The fundamental 2-form $\omega''$ is not closed and $\de \omega''$ has the same coefficients as $\de \omega',$ written in the new dual basis. 

\noindent
$\bullet$ \ \underline{\sl ${\sf GL}(d, \mathbb{R})_+$-transformations.} \ Such a transformation rotates the distribution spanned by $Z''_m$ while preserving the other:
\be
\bigg(\begin{matrix}
Z_m \\
\tilde{Z}^m
\end{matrix}\bigg)
=
\bigg(\begin{matrix}
\tilde A_m{}^n &  0\\
0 & \delta^m{}_n
\end{matrix}\bigg)
\bigg(\begin{matrix}
Z''_n \\
\tilde{Z}^{\prime\prime\, n}
\end{matrix}\bigg)
\ , \nonumber
\ee
giving finally the basis \eqn{leid}. The combined action of the last
two transformations allows for non-vanishing brackets
$[Z'_m,\tilde{Z}^{\prime\, n}]$ and does not affect the Lie algebras
closed by the vector fields $Z'_m$ and $\tilde{Z}^{\prime\, m},$ thus
giving the desired geometric interpretation of how the Manin triple polarization is obtained in terms of deformations of para-Hermitian structures. This last transformation does not change the para-complex structure $K,$ which is affected only by a $B$-transformation, while the Lorentzian metric is finally transformed into the local expression \eqref{metrloca}. Again the fundamental 2-form $\omega$ is not closed, as shown previously, and $\de \omega$ still has the same components as $\de\omega'.$ Thus the last two transformations preserve the 3-form $\de \omega'$ in a Manin triple polarization.  

We have thus shown that a ${\sf GL}(d,\mathbb{R})_+ {\times} {\sf
  GL}(d,\mathbb{R})_-$-transformation fixes the Lie algebra closed by
the two distributions separately, thereby governing the closure of the
fundamental 2-form. A $B$-transformation (or $\beta$-transformation)
is needed to twist the two distributions together, as seen from its
action on the para-complex structure which deforms $K_0$ into $K.$
This means that any time there is a $B$- or $\beta$-transformation
involved in this deformation, we will find a non-trivial Kalb-Ramond
field on the physical spacetime submanifold; this is not the same as having non-vanishing generalized fluxes, since fluxes are governed by the non-integrability of the chosen polarization as we saw in Section~\ref{sec:nonassociativity}. Finally, a ${\sf GL}(d, \mathbb{R})_+$-transformation does not affect the para-complex structure $K$ but is needed to obtain the complete Lie algebra $\mathfrak{d}.$  
Note that, using the dual Iwasawa decomposition, the second
transformation becomes a $\beta$-transformation while the last
transformation becomes a ${\sf GL}(d,\IR)_-$-transformation rotating
only the distribution spanned by $\tilde{Z}^{\prime\prime\, m}$, and giving the same final result. Later on we will give an explicit example of this latter approach in the Drinfel'd double description of the cotangent bundle $T^*\sfG$ of a Lie group $\sfG$.

Let us now show how fluxes arise from the para-Hermitian geometry of Drinfel'd doubles.
In order to compute the D-bracket associated to the Lie algebra induced para-Hermitian structure $(K, \eta)$ and the ${\nabla}^{\tt LC}$-bracket compatible with the undeformed trivial para-K\"ahler structure $(K_0, \eta_0)$ on $\sfD$, we need the connection coefficients in the non-holonomic frame $Z_M$, which we compute to be
\begin{align}
{{\mit\Gamma}^i}_{jk} &=\frac{1}{2}\,{f_{kj}}^i \ , \quad {{\mit\Gamma}^{ij}}_k =\frac{1}{2}\, Q^{ji}{}_k \ , \quad {{\mit\Gamma}_{ij}}^k =\frac{1}{2}\, {f_{ij}}^k \qquad \mbox{and} \qquad {\mit\Gamma}^{ijk} =0 \ , \nonumber \\[4pt]
{{\mit\Gamma}_i}^j{}_k &=\frac{1}{2}\, {f_{ik}}^j \ , \quad {{\mit\Gamma}_i}^{jk} = \frac{1}{2}\, Q^{kj}{}_i \ , \quad {{\mit\Gamma}^i}_j{}^k =\frac{1}{2}\, Q^{ik}{}_j \qquad \mbox{and} \qquad {\mit\Gamma}_{ijk} =0 \ . \nonumber
\end{align}
Thus the ${\nabla}^{\tt LC}$-bracket is 
$$\llbracket Z_m, Z_n \rrbracket^{{\nabla}^{\tt LC}} =\frac{1}{2}\, {f_{mn}}^k \, Z_k \qquad \mbox{and} \qquad \llbracket \tilde{Z}^m, \tilde{Z}^n \rrbracket^{{\nabla}^{\tt LC}} = \frac{1}{2}\, Q^{mn}{}_k \, \tilde{Z}^k$$
while the D-bracket is
$$\llbracket Z_m, Z_n \rrbracket^{\tt D} ={f_{mn}}^k \, Z_k \qquad \mbox{and} \qquad \llbracket \tilde{Z}^m, \tilde{Z}^n \rrbracket^{\tt D} = Q^{mn}{}_k \, \tilde{Z}^k \ .$$
From these brackets it is clear that the left-invariant para-Hermitian
structure $(K, \eta)$ is compatible with the trivial para-K\"ahler
structure $(K_0, \eta_0)$ on $\sfD$, since $(K, \eta)$ is weakly
integrable with respect to the D-bracket of $(K_0, \eta_0).$ This
holds in any Manin triple polarization of $\sfD$.

The difference between these brackets is measured by $\de \omega.$ The lack of closure of the fundamental 2-form $\omega$ can be interpreted as a transformation from two abelian integrable distributions (the eigenbundles of the trivial para-K\"ahler structure on $\sfD$) to two non-abelian integrable distributions (the eigenbundles of the left-invariant para-Hermitian structure on $\sfD$). As $\de \omega$ is characterized by the structure constants of the Lie algebras closed by the two distributions, this impinges on the difference between the D-bracket and  the ${\nabla}^{\tt LC}$-bracket. This can be regarded as a situation in which only (globally geometric) $f$-flux and (locally geometric) $Q$-flux arise. The ${\nabla}^{\tt LC}$-bracket does not give any $H$-flux or $R$-flux in a Manin triple polarization, hence in order to induce such fluxes it is necessary to choose a polarization with non-integrable distributions, at least if the reference structure is the trivial para-K\"ahler structure; we consider this construction in detail below. Alternatively, one can accordingly change the reference para-Hermitian structure.

No matter what the polarization, in order to recover the physical background fields on the spacetime submanifold it is necessary to introduce a Born geometry $(K, \eta, \mathcal{H})$ that is left-invariant and globally defined on $\sfD.$
Following \cite{hulled}, this can be achieved by the introduction of
the globally defined (left-invariant and Riemannian) generalized
metric $$\mathcal{H}=\delta_{MN} \, \Theta^M\otimes\Theta^N = \delta_{mn}\, \Theta^m\otimes \Theta^n + \delta^{mn}\, \tilde{\Theta}_m \otimes \tilde{\Theta}_n$$ on the Drinfel'd double $\sfD$.

\begin{example}
A non-trivial example of a Born geometry for Drinfel'd doubles is
given by $\sfD={\sf SL}(2,\mathbb{C})$, regarded as a six-dimensional
real Lie group; this is studied in~\cite{mpv} in the context of the isotropic rigid rotator, whose
configuration space is the Lie group ${\sf SU}(2)$, as an alternative
carrier manifold (to the tangent bundle) for the lift of the dynamics. It has a Manin triple polarization ${\sf
  SL}(2,\IC)={\sf SU}(2)\Join{\sf SB}(2,\IC)$.\footnote{The Lie group
  ${\sf SB}(2,\IC)$ is the Borel subgroup of $2\times2$ upper
  triangular complex matrices with determinant equal to $1$.} In a suitable basis of $\mathfrak{sl}(2,\IC)$ the generators satisfy the commutation
relations
\be\nonumber
[T_i,T_j] = \frac12\,\varepsilon_{ij}{}^k\, T_k \ , \quad [ T_i,\tilde
T^j] = \frac12\,
\varepsilon_{ki}{}^j \,\tilde T^k-\frac12\,
\varepsilon^{kjl}\,\varepsilon_{l3i}\, T_k\qquad \mbox{and} \qquad [\tilde T^i,\tilde
T^j]=\frac12\, \varepsilon^{ijl}\, \varepsilon_{l3k}\,\tilde T^k
\ .
\ee
The ${\sf O}(d,d)$-invariant metric $\eta$ is obtained from the
Cartan-Killing form $\langle a,b\rangle =
2\,\mathrm{Im}\big(\mathrm{Tr}(a\,b)\big),$ for 
$a,b\in\mathfrak{sl}(2,\mathbb{C}),$ which gives the duality pairing
between the Lie subalgebras $\mathfrak{su}(2)$ and
$\mathfrak{sb}(2,\mathbb{C})$, and hence realises ${\sf SU}(2)$ and ${\sf SB}(2,\IC)$ as T-dual submanifolds of the Drinfel'd double $\sfD={\sf SL}(2,\mathbb{C})$. Writing
$F_i^\pm=\frac1{\sqrt2}\,\big(T_i\pm(\delta_{ij}\pm\varepsilon_{ij3}\,\tilde
T^j)\big)$, the isotropy conditions read as $\langle
F_i^+,F_j^+\rangle=\delta_{ij}=-\langle F_i^-,F_j^-\rangle$ and
$\langle F_i^+,F_j^-\rangle=0$. On the other hand, the generalized
metric $\mathcal{H}$ is obtained from the other natural inner product
$(a,b) = 2\,\mathrm{Re}\big(\mathrm{Tr}(a\,b)\big)$ (which does not
define a Manin triple polarization), for which one finds
\be\nonumber
\cH = \delta^{ij}\, \big(F_i^+\otimes F_j^++ F_i^-\otimes F_j^-\big)
\ .
\ee
Expanding this out with respect to the splitting $\mathfrak{sl}(2,\IC)=
\mathfrak{su}(2)\Join{\mathfrak{sb}}(2,\IC)$, and comparing with
\eqref{eq:geometricH}, then identifies the metric $g_{ij}=\delta_{ij}$
and 2-form $b_{ij}=\varepsilon_{ij3}$ on $\mathfrak{su}(2)$, which
lead to the standard round metric and Kalb-Ramond field (whose
$H$-flux is the volume form) on the 3-sphere ${\sf SU}(2)=S^3$.
See~\cite{mpv} for further details.
\end{example}

\subsection{Polarizations and Generalized Fluxes} \label{anotear}

Let us now describe arbitrary polarizations as alluded to above. This
is the situation where the doubled group $\sfD$ is now a
\emph{twisted} Drinfel’d
double. In this case, the description presented in \cite{hulled}
becomes much more complicated because of the more general form of the
adjoint action of $\tilde\sfG$, so describing the framework in which the final polarization is obtained via a chain of transformations of the trivial para-K\"ahler structure becomes highly non-trivial since it should mix rotations, $B$-transformations and $\beta$-transformations.  In the following we will describe how the choice of a polarization with non-involutive sub-bundles affects the D-brackets, showing that generalized fluxes associated with the structure constants of the Lie algebra of $\sfD$ emerge from such a choice.

The Lie algebra $\mathfrak{d}$ now splits into two maximally isotropic subspaces, i.e. we assume that the left-invariant vector fields on $\sfD$ close to the Lie algebra $[Z_M, Z_N]={t_{MN}}^P \, Z_P$ where
\begin{align*}
[Z_m, Z_n]&={f_{mn}}^k\, Z_k + H_{mnk}\, \tilde{Z}^k \ , \\[4pt] [ Z_m,\tilde{Z}^n]&= {f_{km}}^n \, \tilde{Z}^k -{Q^{kn}}_m\, Z_k \ , \\[4pt] [\tilde{Z}^m, \tilde{Z}^n]&=Q^{mn}{}_k\, \tilde{Z}^k + R^{mnk}\, Z_k \ .
\end{align*}
The Jacobi identity for this Lie bracket yields the algebraic Bianchi
identities
\begin{align*}
f_{[ij}{}^m\, f_{k]m}{}^n &= Q^{mn}{}_{[i}\, H_{jk]m} \ , \\[4pt]
f_{ij}{}^m\, Q^{kl}{}_m - Q^{m[k}{}_{[i}\, f_{j]m}{}^{l]} &= R^{mkl}\,
H_{mij} \ , \\[4pt] Q^{[ij}{}_m\, Q^{k]m}{}_n &=
f_{mn}{}^{[i}\, R^{jk]m} \ , \\[4pt]
f_{[ij}{}^m\,H_{kl]m} &= 0 = Q^{[ij}{}_m\, R^{kl]m} \ .
\end{align*}
The dual left-invariant 1-forms satisfy the Maurer-Cartan structure equations $$\de \Theta^P = -\frac{1}{2}\, {t_{MN}}^P\, \Theta^M \wedge \Theta^N \ ,$$ whose polarized components read
\begin{align*}
\de \Theta^p &= -\frac{1}{2}\,\big({f_{mn}}^p\, \Theta^m \wedge \Theta^n +R^{mnp}\, \tilde{\Theta}_m \wedge \tilde{\Theta}_n\big) - {Q^{mp}}_n\, \Theta^n \wedge \tilde{\Theta}_m \ , \\[4pt]
\de \tilde{\Theta}_p &= -{f_{mp}}^n\, \tilde{\Theta}_n \wedge \Theta^m-\frac{1}{2}\, \big({Q^{mn}}_p\, \tilde{\Theta}_m \wedge \tilde{\Theta}_n + H_{mnp}\, \Theta^m \wedge \Theta^n\big) \ .
\end{align*}
The left-invariant almost para-Hermitian structure is still well-defined and given by
\be
K= Z_m \otimes \Theta^m - \tilde{Z}^m \otimes \tilde{\Theta}_m \ , \quad \eta= \Theta^m \otimes \tilde{\Theta}_m + \tilde{\Theta}_m \otimes \Theta^m \qquad \mbox{and} \qquad \omega= \tilde{\Theta}_m \wedge \Theta^m \ , \label{paranonint}
\ee
with
\begin{align}
\de \omega&= -\frac{1}{2}\,\big(H_{mnp}\, \Theta^m \wedge \Theta^n \wedge \Theta^p - R^{mnp}\, \tilde{\Theta}_m \wedge \tilde{\Theta}_n \wedge \tilde{\Theta}_p \nonumber \\ & \hspace{1.5cm} +{f_{mn}}^p\, \Theta^m \wedge \Theta^n \wedge \tilde{\Theta}_p - {Q^{mn}}_p\, \tilde{\Theta}_m \wedge \tilde{\Theta}_n \wedge \Theta^p\big) \ .
\label{eq:domegaHRfQ}\end{align}

We can compute the ${\nabla}^{\tt LC}$-bracket, which is the D-bracket of the trivial para-K\"ahler structure on the group manifold $\sfD,$ to measure the compatibility between the trivial para-K\"ahler structure and the left-invariant almost para-Hermitian structure defined above. In order to compute the bracket associated to $\nabla^{\tt LC}$, we proceed as before by computing the connection coefficients of the Levi-Civita connection of the metric $\eta$ in the non-holonomic basis $Z_M$ to get
\begin{align}
{{\mit\Gamma}^i}_{jk} &=\frac{1}{2}\,{f_{kj}}^i \ , \quad {{\mit\Gamma}^{ij}}_k  =\frac{1}{2}\, {Q^{ji}}_k \qquad \mbox{and} \qquad {{\mit\Gamma}_{ij}}^k =\frac{1}{2}\, {f_{ij}}^k \ , \nonumber \\[4pt]
{{\mit\Gamma}_i}^j{}_k &=\frac{1}{2}\, {f_{ik}}^j \ , \quad {{\mit\Gamma}_i}^{jk} = \frac{1}{2}\, {Q^{kj}}_i \qquad \mbox{and} \qquad {{\mit\Gamma}^i}_j{}^k =\frac{1}{2}\, {Q^{ik}}_j \ , \nonumber \\[4pt]
 {\mit\Gamma}^{ijk} &=\frac{1}{2}\,\big(R^{ijk}+ R^{ikj}-R^{jki}\big) \qquad \mbox{and} \qquad {\mit\Gamma}_{ijk} =\frac{1}{2}\,\big(H_{ijk}+H_{ikj}-H_{jki}\big) \ . \nonumber 
\end{align}
Then the ${\nabla}^{\tt LC}$-bracket is given by
$$\llbracket Z_m, Z_n \rrbracket^{{\nabla}^{\tt LC}}=\frac{1}{2}\, {f_{mn}}^k\, Z_k + \frac{3}{2}\, H_{[mnk]}\, \tilde{Z}^k \qquad \mbox{and} \qquad \llbracket \tilde{Z}^m, \tilde{Z}^n \rrbracket^{{\nabla}^{\tt LC}}=\frac{1}{2}\, {Q^{mn}}_k\, \tilde{Z}^k +\frac{3}{2}\,R^{[mnk]}\, Z_k \ ,$$
whereas the D-bracket computed using \eqref{eq:domegaHRfQ} keeps only the integrable part of the ${\nabla}^{\tt LC}$-bracket algebra, as before.
It follows that the fluxes appear here as an incompatibility between the almost para-Hermitian structure $(K,\eta)$ and the trivial para-K\"ahler structure $(K_0,\eta_0)$, i.e. as an obstruction to weak integrability. This implies that such a choice of polarization gives an almost para-Hermitian structure which is not compatible with the trivial para-K\"ahler structure.

\begin{example}
Following \cite{hulled}, a simple case of this construction comes from
a Drinfel'd double $\sfD$ with Lie algebra in the non-involutive
polarization given by $$[T_i, T_j]= H_{ijk}\, \tilde{T}^k \ , \quad [ T_i,\tilde{T}^j]=0 
\qquad \mbox{and} \qquad [\tilde{T}^i, \tilde{T}^j]=0 \ .$$
The left-invariant vector fields $Z_M$ must close to the same algebra and they can be written as $$Z_m= \frac\partial{\partial x^m} -\frac{1}{2}\, H_{mnk}\, x^k \, \frac\partial{\partial\tilde x_n} \qquad \mbox{and} \qquad \tilde{Z}^m=\frac\partial{\partial\tilde x_m}$$ with dual 1-forms 
$$\Theta^m=\de x^m \qquad \mbox{and} \qquad \tilde{\Theta}_m=\de \tilde{x}_m +\frac{1}{2}\, H_{nmk}\, x^k \, \de x^n \ .$$
The almost para-Hermitian structure $(K,\eta)$ can be easily introduced as before. In this example, it is particularly interesting to note the analogy with the deformation of the canonical para-K\"ahler structure on a cotangent bundle $T^*Q$ from Section~\ref{sec:nonassociativity}: We can regard the almost para-Hermitian structure $(K, \eta)$ as a $B$-transformation of the trivial para-K\"ahler structure $(K_0,\eta_0)$ on the group manifold $\sfD$ given by $$B=-\frac{1}{2}\, H_{ijk}\, x^k\, \frac\partial{\partial\tilde x_j} \otimes \de x^i \ .$$ According to the ${\nabla}^{\tt LC}$-bracket, this is a specific case in which the NS--NS $H$-flux naturally arises and is encoded by non-closure of the fundamental 2-form:
\be\nonumber
\de \omega = -\frac12\, H_{ijk}\, \de x^i\wedge\de x^j\wedge\de x^k \ .
\ee
This construction will be useful in the example of the doubled twisted torus that we study in Section~\ref{sec:doubledtorus}.
\end{example}

\subsection{The Drinfel'd Double $T^*\sfG$} \label{specialcase}

In order to draw a parallel between the para-Hermitian geometry of dynamical systems, studied in Sections~\ref{sec:dynamicalpara} and~\ref{sec:nonassociativity}, and of Drinfel'd doubles considered in this section, we describe the special case where the doubled group is $\sfD_\sfG=T^*\sfG$ for a semisimple Lie group $\sfG$.
The cotangent bundle of a $d$-dimensional Lie group $\sfG$ is the best
known example of a Drinfel'd double; it has the structure of a
semi-direct product Lie group, whose Lie algebra $\frd_\frg$ is given by
$\mathfrak{d}_\frg= \mathfrak{g} \ltimes \mathbb{R}^d.$  We will closely
follow~\cite{hulled} in the description of the geometry of $T^*\sfG$
as a doubled Lie group.

The semi-direct product structure $\mathfrak{d}_\frg= \mathfrak{g}
\ltimes \mathbb{R}^d$ means that the Lie algebra $\mathfrak{d}_\frg$
is given by the brackets
\be\nonumber
[T_i,T_j]= {f_{ij}}^k \, T_k \ , \quad [T_i,\tilde{T}^j]= {f_{ik}}^j
\, \tilde{T}^k \qquad \mbox{and} \qquad [\tilde{T}^i, \tilde{T}^j]=0 \ . 
\ee
The natural duality pairing between $\frg$ and $\IR^d$ in this case comes from the
feature that $\frg$ is the fiber of the tangent bundle $T\sfG$ while
$\IR^d$ is the fiber of the cotangent bundle $T^*\sfG$.
Assuming $\sfG$ is semisimple, a $2d$-dimensional matrix
representation of this Lie algebra is given by 
\be 
T_i=
\bigg(\begin{matrix}
t_i & 0 \\
0 & t_i
\end{matrix}\bigg)
\qquad \mbox{and} \qquad \tilde{T}^i =
\bigg(\begin{matrix}
0 & \kappa^{ij}\, t_j \\
0 & 0
\end{matrix}\bigg)
\ , \label{genny}
\ee
where $t_i$ are $d \times d$ matrices obeying the commutation
relations $[t_i, t_j]= {f_{ij}}^k\, t_k$ and
$\kappa_{ij}=\frac{1}{2}\, {f_{il}}^k \, {f_{jk}}^l$ is the
bi-invariant Cartan-Killing metric of $\sfG$. 
Fixing the Iwasawa decomposition of a general element $\gamma\in T^*\sfG$ to be $\gamma=g \, \tilde{g}$, by exponentiating the generators \eqn{genny} we get the matrix representation
\begin{equation}
\gamma =
\bigg(\begin{matrix}
g & 0 \\
0 & g
\end{matrix}\bigg) \,
\bigg(\begin{matrix}
\mathds{1} & \tilde{x} \\
0 & \mathds{1}
\end{matrix}\bigg)
=
\bigg(\begin{matrix}
g & g\,\tilde{x} \\
0 & g
\end{matrix}\bigg)
\ , \label{dectstarg}
\end{equation}
where here $\tilde x=\tilde x_i\,\kappa^{ij}\,t_j$ is valued in the
Lie coalgebra of $\sfG$.

Hence the left-invariant 1-forms are given by
\be\nonumber
\Theta=\gamma^{-1}\, \de \gamma =
\bigg(\begin{matrix}
g^{-1} \, \de g & \de \tilde{x}+ [\tilde{x}, g^{-1}\, \de g] \\
0 & g^{-1}\, \de g
\end{matrix}\bigg)
\ .
\ee
Writing $\lambda=g^{-1}\, \de g= \lambda_i^m\, T_m \, \de x^i,$ we can give the Lie algebra components of $\Theta$ as
\be \label{tgle}
\Theta^M=
\bigg(\begin{matrix}
\Theta^m \\
\tilde{\Theta}_m
\end{matrix}\bigg)
=
\bigg(\begin{matrix}
\lambda^m_i\, \de x^i \\
\de \tilde{x}_m+ {f_{mn}}^k\, \tilde{x}_k \, \lambda^n_i \, \de x^i
\end{matrix}\bigg) \ .
\ee
From this expression we can characterize the adjoint action by confronting \eqn{tgle} with \eqn{leinfo}: Since the Lie group $\tilde{G}=\mathbb{R}^d$ is abelian we have $\tilde{\lambda}^i_m=\delta^i_m$, hence $\tilde{A}_m{}^n=\delta_m{}^n$ and $\tilde b_{mn}= {f_{mn}}^k \, \tilde{x}_k.$
By duality the left-invariant vector fields are then
\be\nonumber
Z_M=
\bigg(\begin{matrix}
Z_m \\
\tilde{Z}^m
\end{matrix}\bigg)
= 
\bigg(\begin{matrix}
(\lambda^{-1})^i_m \, \frac\partial{\partial x^i} - {f_{mn}}^k \, \tilde{x}_k \, \frac\partial{\partial\tilde x_n} \\
\frac\partial{\partial\tilde x_m}
\end{matrix}\bigg)
\ .
\ee

We can now write the para-complex structure as
\be\nonumber
K= Z_m \otimes \Theta^m - \tilde{Z}^m \otimes \tilde{\Theta}_m= \frac\partial{\partial x^i} \otimes \de x^i - \frac\partial{\partial\tilde x_i} \otimes \de \tilde{x}_i + 2\, {f_{mn}}^k\, \tilde{x}_k\, \lambda^n_j\, \frac\partial{\partial\tilde x_m} \otimes \de x^j \ .
\ee
Similarly, the compatible metric with Lorentzian signature is given by
\be \nonumber
\eta= \Theta^m \otimes \tilde{\Theta}_m + \tilde{\Theta}_m \otimes \Theta^m= \lambda^m_i\, \big(\de x^i \otimes \de \tilde{x}_m + \de \tilde{x}_m \otimes \de x^i \big) \ .
\ee
Since the group manifold is now a cotangent bundle, it can be endowed
with the canonical para-K\"ahler structure that now plays the role of
the trivial para-K\"ahler structure defined on $\sfD_\sfG.$ Assuming that
$(x^i, \tilde{x}_i)$ are Darboux coordinates, it follows that
$(T^*\sfG, K, \eta)$ is obtained as a ${\sf GL}(d,
\mathbb{R})_+$-transformation followed by a $B$-transformation of
$(T^*\sfG, K_0, \eta_0)$ defined by the map $\tilde
b_{mn}={f_{mn}}^k\, \tilde{x}_k,$ as we discussed in Section~\ref{manindefor}.
In this case the fundamental 2-form has the expression $$\omega= \tilde{\Theta}_m \wedge \Theta^m = \lambda^m_i\, \de \tilde{x}_m \wedge \de x^i$$
and the Maurer-Cartan structure equations are given by
\begin{align*}
\de \Theta^p = -\frac{1}{2}\,{f_{mn}}^p\, \Theta^m \wedge \Theta^n \qquad \mbox{and} \qquad
\de \tilde{\Theta}_p = -{f_{mp}}^n\, \tilde{\Theta}_n \wedge \Theta^m \ ,
\end{align*}
which can be used to show that $$\de \omega=-\frac{1}{2}\, {f_{mn}}^p\, \tilde{\Theta}_p \wedge \Theta^m \wedge \Theta^n \ ,$$ since now only one distribution becomes non-abelian under the transformation described above. The brackets associated to the two para-Hermitian structures considered here behave exactly as described in the general case: It suffices to put ${Q^{ij}}_k=0$ in all previous general expressions from Section~\ref{manindefor}.

\section{Para-Hermitian Geometry of Doubled Twisted Tori}\label{sec:doubledtorus}

The description of the different para-Hermitian structures on a
cotangent bundle and the para-Hermitian geometry of Drinfel'd doubles
find a common ground in the setting of doubled twisted tori, which
applies the formalism developed thus far to certain parallelizable
string backgrounds which arise as duality twisted compactifications. Beginning
with a general discussion of the underlying para-Hermitian geometry, we shall then specialize to
the specific example of the doubled twisted torus 
which is defined as the quotient of the Drinfel'd double $\sfD_\sfH=T^*\sfH$ of
the Heisenberg group $\sfH$ by a discrete cocompact subgroup. After
working out its para-Hermitian structure explicitly, we will then
demonstrate explicitly how the different polarizations of the doubled twisted torus are obtained by T-duality transformations of its natural para-Hermitian structure as a Drinfel'd double.

\subsection{Torus Bundles and Their Doubles} \label{torusbundle}

In order to apply our previous constructions to some specific
examples, let us begin by reviewing the construction of a particular
class of compact manifolds obtained as quotients of certain
non-compact Lie groups by a discrete cocompact subgroup; this is
sometimes called a \emph{twisted
  torus}~\cite{Dabholkar2002,Hull2005,hulled,edwards:nonpub} and it can be described
as a torus bundle. These arise as the $d$-dimensional internal
compactification manifolds in string theory from standard Kaluza-Klein
reduction on a $d{-}1$-dimensional torus $T^{d-1}$, followed by a
Scherk-Schwarz reduction on an additional circle $S^1$ with a
geometric duality twist around $S^1$ valued in the mapping class group ${\sf GL}(d-1,\IZ)$ of $T^{d-1}$; such a reduction is equivalent to compactification on a $T^{d-1}$-bundle over $S^1$.

We consider a $d$-dimensional Lie group $\sfG$ which is generated by the
Lie algebra $$[t_x, t_a]= {N^b}_a \, t_b \qquad \mbox{and} \qquad
[t_a, t_b]=0 \ ,$$ with $a,b=1, \dots, d-1$ and $\cM=\exp(N)\in{\sf
  GL}(d-1,\IZ)$. A $d$-dimensional matrix representation
of this Lie algebra is given by 
\be
t_x=
\bigg( \begin{matrix}
{N^a}_b & 0 \\
0 & 0
\end{matrix} \bigg)
\qquad \mbox{and} \qquad t_a=
\bigg( \begin{matrix}
0 & E_a \\
0 & 0
\end{matrix} \bigg)
\ , \nonumber
\ee
where $E_a$ is the $d{-}1$-dimensional column vector whose only
non-vanishing entry is $1$ in the $a$-th row. Then in (adapted) local
coordinates $(x,z^a)$ on the group
manifold, any element $g=\exp(x\, t_x+z^a\,t_a) \in\sfG$ can be cast in the form 
\be 
g=
\bigg( \begin{matrix}
\exp(x\,N)^a{}_b & z^a \\
0 & 1
\end{matrix} \bigg)
\ . \nonumber
\ee
Thus we obtain the left-invariant 1-forms $\Theta=g^{-1}\,
\de g = \Theta^m\, t_m$ with components
$$\Theta^x=\de x \qquad \mbox{and} \qquad \Theta^a= \exp(-x\,N)_b{}^a
\, \de z^b \ ,$$ with dual left-invariant vector fields
$$Z_x= \frac\partial{\partial x} \qquad \mbox{and} \qquad Z_a=\exp(x\,N)^b{}_a
\, \frac\partial{\partial z^b} \ . $$

The twisted torus is defined as the quotient $\cT_\sfG=\sfG / \sfG(\IZ)$ by the
equivalence relation $g\sim h\,g$, for all $g\in \sfG$ and $h\in\sfG(\IZ),$ where $\sfG(\IZ)$ is the discrete cocompact subgroup of $\sfG$ whose elements take the form
\be 
h=
\bigg( \begin{matrix}
\exp(\alpha\,N)^a{}_b & \beta^a \\
0 & 1
\end{matrix} \bigg)
\ , \nonumber
\ee
with $\alpha, \beta^a \in \mathbb{Z}.$ Therefore the global structure
of $\cT_\sfG$ is given by the simultaneous identifications 
$$ x \sim x+\alpha \qquad \mbox{and} \qquad z^a \sim
\exp(\alpha\,N)^a{}_b\, z^b + \beta^a \ .$$ From these identifications
it follows that the twisted torus is a torus bundle over a circle, with
local fiber coordinates $(z^a)\in T^{d-1}$ and base coordinate $x\in S^1$,
 whose monodromy is specified by the matrix $\cM=\exp(N)\in {\sf
   GL}(d-1,\IZ).$ The map $x\mapsto\exp(x\,N)$ appearing above is a local section of a
 ${\sf GL}(d-1,\IZ)$-bundle over $S^1$, and the torus bundle may be
 thought of as parameterizing a family of string theories over a circle: For each
 $x\in S^1$, there is a conformal field theory with target space the
 torus $T^{d-1}$.
 Since the equivalence relation defining the quotient is given by the
 left action of the subgroup $\sfG(\IZ),$ the left-invariant 1-forms
 and vector fields on $\sfG$ are globally defined on the compact
 manifold $\cT_\sfG= \sfG/ \sfG(\IZ).$ 

A natural way to construct the double of the twisted torus would be to
start with the Drinfel'd double $\sfD_\sfG=T^*\sfG$ of the Lie group $\sfG$,
as defined in Section~\ref{specialcase}, and then take the quotient
$M_\sfG=T^*\sfG / \sfD_\sfG(\IZ)$ generated by the equivalence relation given from the left action of a discrete cocompact subgroup $\sfD_\sfG(\IZ)$ on $T^*\sfG.$
However, we cannot follow the prescriptions discussed in Section~\ref{specialcase}  to write down the explicit form of the elements of
$T^*\sfG,$ since $\sfG$ is not semisimple in the present case: In the
construction of Section~\ref{specialcase}, the Lie algebra of
$T^*\sfG$ is represented using the inverse of the Cartan-Killing form of $\sfG,$
which is degenerate here. In other words, unless $\sfG$ carries an
invariant metric $\kappa$, we do not
have a general way to represent the Lie algebra of $T^*\sfG$ using $2d
\times 2d$ matrices whose blocks are given by the $d \times d$
matrices $t_x$ and $ t_a$ that realize the Lie algebra of $\sfG.$ On
the other hand, the discrete subgroup $\sfD_\sfG(\IZ)$ and the identifications defining the global structure of the quotient manifold can be explicitly written once a specific form of the monodromy matrix $\cM$ is given.

Nevertheless, we can still make some general remarks concerning the pertinent global
features of the doubled twisted torus: The left-invariant 1-forms and
their dual vector fields on $T^*\sfG$ are still globally defined on the quotient
$M_\sfG=T^*\sfG/ \sfD_\sfG(\IZ),$ since the equivalence relations giving the global
structure of the quotient arise from the left action of the subgroup
$\sfD_\sfG(\IZ).$ Thus the Lie algebra of the left-invariant vector fields
$Z_n$ and $\tilde{Z}^n$ on $M_\sfG$ is the Lie algebra of
$T^*\sfG=\sfG\ltimes\IR^d,$ whose generators $T_n$ and $\tilde T^n$ have the non-vanishing Lie brackets
$$[T_x,T_a]= {N^b}_a \, T_b \ , \quad [ T_x,\tilde{T}^a]= -{N^a}_b\,
\tilde{T}^b \qquad \mbox{and} \qquad [ T_a,\tilde{T}^b]= -{N^b}_a\,
\tilde{T}^x \ . $$
Since the structure constants here are rational, up to isomorphism
there exists a unique discrete cocompact subgroup $\sfD_\sfG(\IZ)$ of
$T^*\sfG$ by Malcev's Theorem. 
By parameterizing a generic group element $\gamma\in T^*\sfG$ as
\be\nonumber
\gamma = \exp\big(\tilde x\,\tilde T^x\big)\,\exp\big(\tilde
z_a\,\tilde T^a\big)\,\exp\big(x\,T_x\big)\,\exp\big(z^a\,T_a\big) \ ,
\ee
this can then be used to describe the doubled twisted torus as a
doubled torus bundle over a pair of circles~\cite{Dabholkar2005}, with
local fibre
coordinates $(z^a,\tilde z^a)\in T^{d-1}\times T^{d-1}$ and base
coordinates $(x,\tilde x)\in S^1\times S^1$.

Following~\cite{DallAgata2007}, from this parametrization of $T^*\sfG$ we obtain the left-invariant
1-forms
$\Theta=\gamma^{-1}\,\de\gamma=\Theta^m\,T_m+\tilde\Theta_m\,\tilde
T^m$ with components
\be\nonumber
\Theta^x = \de x \qquad \mbox{and} \qquad \Theta^a =
\exp(-x\,N)_b{}^a\, \de z^b \ ,
\ee
and
\be\nonumber
\tilde\Theta_x = \de\tilde x - N^b{}_a\, z^a\,\de \tilde z_b \qquad
\mbox{and} \qquad \tilde\Theta_a = \exp(x\,N)^b{}_a\,\de\tilde z_b \ ,
\ee
with dual left-invariant vector fields
\be\nonumber
Z_x= \frac\partial{\partial x} \qquad \mbox{and} \qquad Z_a=\exp(x\,N)^b{}_a
\, \frac\partial{\partial z^b} \ ,
\ee
and
\be\nonumber
\tilde Z^x = \frac\partial{\partial\tilde x} \qquad \mbox{and} \qquad
\tilde Z^a = \exp(-x\,N)^a{}_b\,\Big(\frac\partial{\partial\tilde z_b}
+ N^b{}_c\,z^c\,\frac\partial{\partial\tilde x}\Big) \ .
\ee
The left-invariant para-Hermitian structure from
Section~\ref{specialcase} is given by
$K=Z_n\otimes\Theta^n-\tilde Z^n\otimes \tilde\Theta_n$ and
$\eta=\Theta^n\otimes\tilde\Theta_n + \tilde\Theta_n\otimes\Theta^n$,
whose integrable distributions $L_+$ and $L_-$ are respectively
spanned by $Z_n$ and $\tilde Z^n$, with foliations having leaves
$\sfG$ and $\IR^d$. The fundamental 2-form $\omega =
\tilde\Theta_n\wedge\Theta^n$ yields the geometric $f$-flux
\be\nonumber
\de \omega = -3\,N^b{}_a\, \de x\wedge\de z^a\wedge\de\tilde z_b \ ,
\ee
and the generalized metric $\cH$ from Section~\ref{manindefor} shows
that the $B$-field is zero in this background.

From our previous analysis of Section \ref{manindefor}, we thus
obtain a globally defined para-Hermitian structure on $M_\sfG=T^*\sfG/
\sfD_\sfG(\IZ)$, and both the ${\nabla}^{\tt LC}$-bracket and the D-bracket
are determined by the structure constants of the Lie
algebra. The monodromy matrix therefore explicitly determines the algebra of both brackets, giving
$$\llbracket Z_x, Z_a \rrbracket^{{\nabla}^{\tt LC}}=
\frac{1}{2}\,{N^b}_a \, Z_b \ , \quad \llbracket Z_a, Z_b
\rrbracket^{{\nabla}^{\tt LC}}=0 \qquad \mbox{and} \qquad \llbracket
\tilde{Z}^m, \tilde{Z}^n \rrbracket^{{\nabla}^{\tt LC}}=0 \ ,$$ and
$$\llbracket Z_x, Z_a \rrbracket^{\tt D}={N^b}_a\, Z_b \ , \quad
\llbracket Z_a, Z_b \rrbracket^{\tt D}=0 \qquad \mbox{and} \qquad
\llbracket \tilde{Z}^m, \tilde{Z}^n \rrbracket^{\tt D}=0 \ .$$
In this case, we see how the monodromy matrix gives a geometric
$f$-flux, because we considered a Manin triple polarization for the
Lie algebra of $T^*\sfG,$ thereby leading to Frobenius and weak
integrability of the corresponding eigendistributions. The leaves of their
foliations are respectively given by the twisted torus $\cT_\sfG= \sfG/\sfG(\IZ)$
and the $d$-torus $T^d=\IR^d/\IZ^d$. 

In order to obtain other fluxes, a change of
polarization $\vartheta\in{\sf O}(d,d)(M_\sfG)$ is needed, which also
acts on the monodromy matrix $\cM$ as
\be\nonumber
\cM\longmapsto \cM_\vartheta = \vartheta^{-1}\, \cM\, \vartheta \ .
\ee
When
the transformed monodromy matrix $\cM_\vartheta$ lies in a geometric
subgroup ${\mathsf \Delta}(\IZ)$ of the
T-duality group ${\sf O}(d-1,d-1;\IZ)$, as in the case of the twisted
torus where ${\mathsf \Delta}(\IZ)={\sf GL}(d-1,\IZ)$ is the mapping
class group of the torus fibers $T^{d-1}$, the choice of polarization describes a geometric background,
while if $\cM_\vartheta\in {\sf
  O}(d-1,d-1;\IZ)$ is a monodromy in the Kalb-Ramond field the polarization selects a string
background with NS--NS $H$-flux. If $\cM_\vartheta$ involves a
T-duality, the polarization picks out a T-fold, and from this
perspective globally non-geometric $Q$-flux backgrounds
correspond to submanifolds of the doubled twisted torus $M_\sfG$, because
their monodromy is valued in the subgroup ${\sf O}(d-1,d-1;\IZ)$ of the mapping class group $
{\sf GL}(2(d-1),\IZ)$ of the doubled torus
fibers. On the other hand, locally non-geometric $R$-flux backgrounds are characterized by
monodromies in the full T-duality group ${\sf
  O}\big(d,d;\sfD_\sfG(\IZ)\big)={\sf O}(d,d)\cap {\sf Aut}_{\sfD_\sfG}\big(\sfD_\sfG(\IZ)\big)$ of the
doubled twisted torus, and may be thought of as $T^{d-1}$-bundles over
the dual $S^1$ with coordinate $\tilde x$. Our goal in the following is to understand
these features more intrinsically in the language of para-Hermitian geometry. This will be discussed through the
concrete example of the Heisenberg nilmanifold, which is obtained by
applying the above construction to the $3$-dimensional Heisenberg
group and wherein all of the considerations above can be made explicit. 

\subsection{The Heisenberg Nilmanifold and Its Double}

Our main example will be the $6$-dimensional doubled twisted torus
\cite{hulled}, with its different polarizations, in which T-duality
and fluxes are naturally described in terms of (almost) para-Hermitian
structures. In order to define the global structure of this manifold,
we need to recall the construction of the Heisenberg nilmanifold from
the 3-dimensional Heisenberg group following the general formalism of Section~\ref{torusbundle}.

The 3-dimensional Heisenberg group $\sfH$ has a non-compact group
manifold, with generators $ t_x $, $t_y$ and $t_z$ closing the Lie algebra
\be
[t_x, t_z]= m\,t_y \ , \quad [t_y, t_z]=0 \qquad \mbox{and} \qquad
[t_x,t_y]=0 \ , \label{heial}
\ee
which is not semisimple. It has a $3$-dimensional matrix representation given by  
\be \nonumber
t_x=
\begin{pmatrix}
0 & m & 0 \\
0 & 0 & 0 \\
0 & 0 & 0
\end{pmatrix}
\ , \quad t_y=
\begin{pmatrix}
0 & 0 & 1 \\
0 & 0 & 0 \\
0 & 0 & 0
\end{pmatrix}
\qquad \mbox{and} \qquad t_z=
\begin{pmatrix}
0 & 0 & 0 \\
0 & 0 & 1 \\
0 & 0 & 0
\end{pmatrix}
\ .
\ee
This is an example of the construction from Section \ref{torusbundle}:
The monodromy matrix is given by
\be
\cM = \exp(N) = \bigg( \begin{matrix}
1 & m \\ 0 & 1
\end{matrix} \bigg) \qquad \mbox{with} \quad
N =
\bigg( \begin{matrix}
0 & m \\
0 & 0
\end{matrix} \bigg)
\ , \nonumber
\ee
where $m\in \mathbb{Z}$ and $\cM$ lives in a parabolic conjugacy
class of ${\sf SL}(2,\IZ)$.
The exponential map gives, in local coordinates $(x,y,z)$ on the group
manifold $\sfH$, the general expression for an element $h=\exp(x\,t_x+y\,t_y+z\,t_z)\in \sfH$
given by
\be \nonumber
h=
\begin{pmatrix}
1 & m\,x & y \\
0 & 1 & z \\
0 & 0 & 1
\end{pmatrix}
\ .
\ee
The inverse group element is 
\be \nonumber
h^{-1}=
\begin{pmatrix}
1 & -m\,x & m\,x\,z-y \\
0 & 1 & -z \\
0 & 0 & 1
\end{pmatrix}
\ ,
\ee
thus the left-invariant 1-form $\Theta= h^{-1}\, \de h= \Theta^n \, t_n$ is given by 
\be 
\Theta^x = \de x \ , \quad \Theta^y=\de y - m\,x\, \de z \qquad
\mbox{and} \qquad \Theta^z= \de z \ .
\label{leftinvh}
\ee
From \eqn{leftinvh} we obtain, by duality, the left-invariant vector fields
\be
Z_x= \frac\partial{\partial x} \ , \quad Z_y= \frac\partial{\partial y} \qquad \mbox{and} \qquad
Z_z= \frac\partial{\partial z} + m\,x\, \frac\partial{\partial y} \ ,
\label{lefvec}
\ee
which clearly satisfy the Lie algebra \eqn{heial}. 

Similarly the right-invariant 1-forms $\Xi=\de h\, h^{-1}$ are given by 
$$ \Xi^x = \de x \ , \quad \Xi^y= \de y - m\,z \, \de x \qquad
\mbox{and} \qquad \Xi^z= \de z \ .$$
The dual right-invariant vector fields are then
$$Y_x=\frac\partial{\partial x} + m\,z\, \frac\partial{\partial y} \ , \quad Y_y= \frac\partial{\partial y}= Z_y
\qquad \mbox{and} \qquad Y_z= \frac\partial{\partial z} \ .$$ 

A natural metric on $\sfH$ is defined by using the left-invariant
1-forms to write
\be 
g= \delta_{np} \, \Theta^n \otimes \Theta^p= \de x \otimes \de x +
(\de y - m\,x\, \de z)\otimes  (\de y - m\,x\, \de z) + \de z \otimes
\de z \ , \label{metrlef}
\ee
which can be written in the matrix form
\be \label{eq:metricnilmanifold}
g=
\begin{pmatrix}
1 & 0 & 0 \\
0 & 1 & -m\,x \\
0 & -m\,x & 1+ (m \, x)^2
\end{pmatrix}
 \ ,
\ee
in the basis $\de x, \de y, \de z.$
Note that $Z_y$, $Z_z$, $Y_y$ and $Y_z$ are all Killing vector fields
of the metric $g.$ A similar metric can be introduced by using the right-invariant 1-forms.

The \emph{Heisenberg nilmanifold} $\cT_\sfH$ is the compact
$3$-manifold obtained as the quotient of $\sfH$ with respect to the
cocompact discrete subgroup $\sfH(\IZ)\subset \sfH$ whose elements are
of the general form
\be \nonumber
k =
\begin{pmatrix}
1 & m\,\alpha & \beta \\
0 & 1 & \delta \\
0 & 0 & 1
\end{pmatrix}
\ ,
\ee
with $\alpha, \beta, \delta \in \mathbb{Z}.$ The equivalence relation
is given by the left action of $\sfH(\IZ)$, i.e. $h \sim k\,h,$ and
leads to the simultaneous identifications 
$$ x\sim x + \alpha \ , \quad y \sim y + m\,\alpha\,z + \beta \qquad
\mbox{and} \qquad z \sim z+ \delta \ . $$
The left-invariant 1-forms and vector fields are invariant under the left
action of $\sfH(\IZ)$, hence they descend to the quotient
$\cT_\sfH= \sfH/ \sfH(\IZ).$ The right-invariant 1-forms and vector
fields are, instead, not invariant under the left action of $\sfH(\IZ)$: They transform as
$$ \Xi^x \longmapsto \Xi^x \ , \quad \Xi^y \longmapsto \Xi^y + m\,\alpha\, \Xi^z
- m\,\delta\, \Xi^x \qquad \mbox{and} \qquad \Xi^z \longmapsto \Xi^z \ ,$$
and
$$ Y_x \longmapsto Y_x + m\, \delta\, Y_y \ , \quad Y_y \longmapsto
Y_y \qquad \mbox{and} \qquad Y_z \longmapsto Y_z - m\,\alpha\, Y_y \ ,$$
so that only $\Xi^x$, $\Xi^z$ and $Y_y$ are globally defined.
The metric $g$ from \eqn{metrlef} is also globally defined on
$\cT_\sfH$ and so are the Killing vector fields $Z_y = Y_y$ and $Z_z$,
while $Y_z$ only gives a local solution of the Killing
equations. These vector fields are particularly relevant for the
description of T-dualities on the Heisenberg nilmanifold. 

We have thus
constructed the Heisenberg nilmanifold $\cT_\sfH$ as a compact
$3$-dimensional manifold 
with background metric $g$ given by \eqn{metrlef}, and vanishing
$B$-field inherited from the Heisenberg group. Since $\cT_\sfH$ possesses
globally defined isometries of the metric $g,$ it is possible to apply
the B\"uscher rules to obtain different T-dual backgrounds (see e.g.~\cite{hulled}).
In order to describe the different backgrounds arising from T-duality
transformations, we consider the corresponding doubled twisted torus in different
polarizations, following~\cite{hulled} to develop its para-Hermitian geometry. 
The doubled twisted torus is obtained from the quotient of the
Drinfel'd double $\sfD_\sfH=T^*\sfH$ of the Heisenberg group $\sfH$ with respect
to a discrete cocompact subgroup $\sfD_\sfH(\IZ).$ The Lie algebra of $T^*\sfH=\sfH\ltimes\IR^3$ has
non-vanishing brackets
\be
[T_x, T_z]= m\, T_y \ , \quad [T_x, \tilde{T}^y]= m\, \tilde T^z \qquad
\mbox{and} \qquad [T_z, \tilde{T}^y]= -m\, \tilde{T}^x \ ,  \label{lieth}
\ee
where here the Heisenberg algebra $\mathfrak{h}$ and the abelian
algebra $\mathbb{R}^3$ together with
$\mathfrak{d}_{\mathfrak{h}}=\mathfrak{h}\ltimes \mathbb{R}^3$ form a
Manin triple. It admits a matrix representation in terms of the
matrices $t_n$ from \eqn{heial} given by 
\begin{align}
 T_x &=\bigg(\begin{matrix} t_x & 0 \\ 0 & t_x \end{matrix} \bigg) \ ,
                                           \quad
                                           T_y=\bigg(\begin{matrix}
                                             t_y & 0 \\ 0 &
                                             t_y \end{matrix} \bigg)
                                                            \qquad
                                                            \mbox{and}
                                                            \qquad
                                                            T_z=\bigg(\begin{matrix}
                                                              t_z & 0
                                                              \\ 0 &
                                                              t_z \end{matrix}
                                                                     \bigg)
                                                                     \
                                                                     ,
                                                                     \nonumber\\[4pt]
 \tilde T^x&=\bigg(\begin{matrix} 0 & 0 \\ t_y & 0 \end{matrix} \bigg) \ ,
                                          \quad \tilde T^y
                                          =\bigg(\begin{matrix} 0 &
                                            -t_z  \\  -t_x &
                                            0 \end{matrix}\bigg)
                                                             \qquad
                                                             \mbox{and}
                                                             \qquad
                                                             \tilde T^z
                                                             =\bigg(\begin{matrix}
                                                               0 & t_y
                                                               \\ 0 &
                                                               0 \end{matrix}
                                                                      \bigg)
                                                                      \
                                                                      .
                                                                   \nonumber
\end{align}
Given the specific form of the monodromy matrix here, we are able to
represent the Lie algebra of the Drinfel'd double $T^*\sfH$ despite
the fact that $\sfH$ is not semisimple. Hence we can write down the
identifications defining the global structure of the doubled twisted
torus.  

In local coordinates, any element $\gamma\in T^*\sfH$  may be written as
\be
\gamma=
\begin{pmatrix}
1 & m\,x & y & 0 & 0 & \tilde{z} \\
0 & 1 & z & 0 & 0 & -\tilde{y} \\
0 & 0 & 1 & 0 & 0 & 0 \\
0 & -m\,\tilde{y} & \tilde{x}-m\,z\,\tilde{y} & 1 & m\,x & y+\frac{1}{2}\,m\,\tilde{y}^2 \\
0 & 0 & 0 & 0 & 1 & z \\
0& 0 & 0 & 0 & 0 & 1
\end{pmatrix}
\ee
 and therefore the left-invariant 1-forms are given by the Lie algebra
 components of $\Theta=\gamma^{-1}\,
 \de\gamma=\Theta^n\,T_n+\tilde\Theta_n\,\tilde T^n$ as
\begin{align}
\Theta^x &= \de x \ , \quad \Theta^y=\de y- m\,x\, \de z \qquad
           \mbox{and} \qquad \Theta^z=\de z \ , \nonumber \\[4pt]
\tilde{\Theta}_x &= \de \tilde{x}-m\,z \, \de \tilde{y} \ , \quad
                   \tilde{\Theta}_y = \de \tilde{y} \qquad \mbox{and}
                   \qquad \tilde{\Theta}_z= \de \tilde{z}+ m\,x\, \de \tilde{y} \label{thdeltaleft}
\end{align}
with dual left-invariant vector fields
\begin{align}
 Z_x&=\frac\partial{\partial x} \ , \quad Z_y= \frac\partial{\partial y} \qquad \mbox{and} \qquad
      Z_z= \frac\partial{\partial z} + m\,x\,\frac\partial{\partial y} \ , \label{zdist} \\[4pt]
 \tilde{Z}^x&= \frac\partial{\partial\tilde x} \ , \quad \tilde{Z}^y=
              \frac\partial{\partial\tilde y} + m\,z \,
              \frac\partial{\partial\tilde x} -m\,x\,\frac\partial{\partial\tilde z} \qquad
              \mbox{and} \qquad \tilde{Z}^z= \frac\partial{\partial\tilde z} \ . \label{tilddist}
\end{align}
It follows from \eqn{zdist} that $Z_n$ spans an involutive
distribution $L_+,$ thus it defines a foliation whose leaves are given
by the Heisenberg group $\sfH.$ Similarly \eqn{tilddist} tells us that
$\tilde{Z}^n$ spans an involutive distribution $L_-$ whose foliation
has leaves given by $\mathbb{R}^3$, the fiber of the cotangent bundle $\pi:T^*\sfH\to\sfH.$
Since $T^*\sfH$ is a Drinfel'd double, it is naturally endowed with a
left-invariant para-Hermitian structure with para-complex
structure $K=Z_n\otimes\Theta^n-\tilde Z^n\otimes\tilde\Theta_n$ for which $L_+$ is its $+1$-eigenbundle and $L_-$ is its
$-1$-eigenbundle. The Lorentzian metric is given by $$\eta= \Theta^n
\otimes \tilde{\Theta}_n + \tilde{\Theta}_n \otimes \Theta^n \ , $$
and the fundamental 2-form is $\omega=\tilde\Theta_n\wedge\Theta^n$
with $$\de \omega= m\, \de x \wedge \de z \wedge \de \tilde{y}$$ as discussed in Section \ref{paradrinfeld}.

The identifications giving the global structure of the doubled twisted
torus are obtained via the left action of a discrete cocompact subgroup $\sfD_\sfH(\IZ)$ of $\sfD_\sfH=T^*\sfH.$ Hence the left-invariant para-Hermitian structure of $T^*\sfH$ remains well-defined on the doubled twisted torus $M_\sfH= T^*\sfH / \sfD_\sfH(\IZ).$ 
A generic element $\xi \in \sfD_\sfH(\IZ)$ is given by
\be
\xi=
\begin{pmatrix}
1 & m\,\alpha & \beta & 0 & 0 & \tilde{\delta} \\
0 & 1 & \delta & 0 & 0 & -\tilde{\beta} \\
0 & 0 & 1 & 0 & 0 & 0 \\
0 & -m\,\tilde{\beta} & \tilde{\alpha}-m\,\delta\,\tilde{\beta} & 1 & m\,\alpha & \beta+\frac{1}{2}\,m\,\tilde{\beta}^2 \\
0 & 0 & 0 & 0 & 1 & \delta \\
0& 0 & 0 & 0 & 0 & 1
\end{pmatrix}
 \ , \nonumber
\ee
where $\alpha, \beta, \delta, \tilde{\alpha}, \tilde{\beta},
\tilde{\delta}\in \mathbb{Z}$. The group
action on coordinates induced by the equivalence relation $\gamma\sim
\xi\, \gamma$, which defines the quotient $M_\sfH=T^*\sfH / \sfD_\sfH(\IZ)$, leads to the simultaneous identifications
\begin{align}
x & \sim x+ \alpha \ , \quad y \sim y+m\,\alpha\, z + \beta \qquad
    \mbox{and} \qquad z \sim z+ \delta \ , \nonumber \\[4pt]
\tilde{x}&\sim \tilde{x} + m\,\delta\, \tilde{y} + \tilde{\alpha} \ , \quad
           \tilde{y} \sim \tilde{y}+ \tilde{\beta} \qquad \mbox{and}
           \qquad \tilde{z} \sim \tilde{z} -m\,\alpha\, \tilde{y}
           +\tilde{\delta} \ , \label{iden}
\end{align}
that evidently identify $M_\sfH$ as a $T^2\times T^2$-bundle over
$S^1\times S^1$.
As in the case of the Heisenberg nilmanifold, the left-invariant
1-forms \eqn{thdeltaleft}, together with the left-invariant vector
fields \eqn{zdist} and \eqn{tilddist}, are invariant under the
identifications \eqn{iden}, hence they globally descend to the quotient
$M_\sfH=T^*\sfH/\sfD_\sfH(\IZ).$ This also means that the para-Hermitian
structure $(K, \eta)$ descends to $M_\sfH$, hence the
corresponding eigendistributions $L_+^{\IZ}$ and $L_-^{\IZ}$ of $K$
are both integrable, since their local generators satisfy the
Lie bracket relations \eqn{lieth}; their integral foliations are
characterized, respectively, by the Heisenberg nilmanifold $\cT_\sfH$ and the $3$-torus
$T^3=\mathbb{R}^3/\IZ^3$ as leaves. This is called the \emph{nilmanifold
  polarization} in~\cite{hulled}, where it is shown how to recover the
Heisenberg nilmanifold background from this polarization. We stress
that the Drinfel'd double structure here, in the polarization given by a Manin triple, induces a para-Hermitian structure $(K, \eta)$ on $M_\sfH$. 

\subsection{Polarizations and T-Duality}

We shall now apply our general description of changes of polarization from
Section~\ref{sec:polarization} to the example of the doubled twisted
torus $M_\sfH$. We will use the construction discussed in
Section~\ref{manindefor} to understand how the different polarizations
of the doubled twisted torus fits into the framework of para-Hermitian
geometry. We shall see that the structures arising from each of the
three transformations discussed in Section~\ref{manindefor} may not be
globally defined under the identifications of the coordinates in each
polarization of the doubled twisted torus. As in
\cite{hulled}, this means that the quotient needed to recover the conventional
spacetime background may either be only locally defined or not defined
at all. For this, the left-invariant Born geometry on the Drinfel'd
double $\sfD_\sfH=T^*\sfH$, introduced in Section~\ref{manindefor}, plays an important
role.

Generally, the construction of Section \ref{manindefor} of a
left-invariant para-Hermitian structure is not a change of
polarization with respect to the trivial para-K\"ahler structure on an
even-dimensional manifold. A notable exception will be the T-fold
polarization of the doubled twisted torus $M_\sfH$, which is a local ${\sf
  O}(3,3)$-transformation of the trivial para-K\"ahler structure on $M_\sfH$.
We will now describe the different polarizations of the doubled
twisted torus following \cite{hulled}, giving them a concrete
interpretation in terms of para-Hermitian geometry; another point of
view with some similarities, in the setting of generalized geometry,
can be found in~\cite{Plauschinn}. Our goal is to obtain a description
of the standard T-duality chain~\cite{Shelton2005}
$$
H_{ijk}\stackrel{\vartheta_k}{\longleftrightarrow}
f_{ij}{}^k\stackrel{\vartheta_j}{\longleftrightarrow}
Q^{jk}{}_i\stackrel{\vartheta_i}{\longleftrightarrow} R^{ijk}
$$
relating the different backgrounds which result from performing a (local)
factorized
T-duality transformation $\vartheta_i\in{\sf O}(3,3;\IZ)$ along the $i$-th direction of a given
background related to the Heisenberg nilmanifold $\cT_\sfH$ with geometric
$f$-flux by a change of polarization on the doubled twisted torus
$M_\sfH$, regarded as a para-Hermitian manifold.

\paragraph{Nilmanifold.} 
The nilmanifold polarization is the polarization specified by the Lie
algebra \eqn{lieth} with globally defined vector fields \eqn{zdist} and \eqn{tilddist} spanning, respectively, the two complementary distributions on $TM_\sfH.$ 
Because of the integrability of \eqn{lieth}, no generalized $H$-flux
arises in this polarization: According to the ${\nabla}^{\tt
  LC}$-bracket written in Section~\ref{torusbundle}, the globally
defined para-Hermitian structure induced by the Drinfel'd double
construction and the trivial para-K\"ahler structure are
compatible. This can be seen as a condition implying the presence of
only $f$-flux in this polarization, as we showed in Section~\ref{manindefor}.
As in~\cite{hulled}, the background on the spacetime submanifold $\cT_\sfH$ is
obtained by simply writing down the $\mathbb{R}^3$-invariant metric
from \eqref{metrlef}, or locally \eqref{eq:metricnilmanifold} in the
$x$-coordinates. There is no $B$-field contribution from the
generalized metric $\cH$ in this polarization. 

\paragraph{NS--NS $\boldsymbol H$-Flux.} 
In order to obtain a background with an NS--NS $H$-flux we need a polarization which has a non-involutive distribution. Hence we choose the Lie algebra of the generators to be 
\be
[Z'_x, Z'_z]= m\, \tilde{Z}^{\prime\,y} \ , \quad [Z'_x, Z'_y]= -m\, \tilde{Z}^{\prime\,z} \qquad \mbox{and} \qquad [Z'_z, Z'_y]= m\, \tilde{Z}^{\prime\,x} \ , \label{t3lie}
\ee
with all other brackets vanishing (here we are shuffling around the
generators of the group $\sfD_\sfH=T^*\sfH$). We may regard this choice as an identity transformation of the holonomic basis of $TM_\sfH$ followed by a $B$-transformation. It is important to stress this because we can view this procedure as fixing the holonomic basis on the spacetime by the first transformation, in this case the identity, and then acting on it with other transformations. Thus the background on the spacetime can be recovered only with respect to the holonomic basis obtained after the first transformation. 

Here we shall describe this polarization in a different way than the
description of~\cite{hulled}, which is more naturally in the spirit of flux deformations of para-Hermitian structures. 
For this, we introduce the map $$B=\frac{1}{2}\,\bigg(\Big(-m\,z \,
\frac\partial{\partial\tilde y} +m\,y\,\frac\partial{\partial\tilde
  z}\Big)\otimes \de x +\Big(m\,z\,\frac\partial{\partial\tilde
  x}-m\,x\,\frac\partial{\partial\tilde z}\Big)\otimes \de
y+\Big(-m\,y\,\frac\partial{\partial\tilde
  x}+m\,x\,\frac\partial{\partial\tilde y}\Big)\otimes \de z\bigg) \
. $$
As discussed in Section \ref{anotear}, we then obtain the almost para-Hermitian structure describing this polarization as
$$K'=e^{-B}\, K_0\, e^B \ , \quad \eta=\eta_0 \qquad \mbox{and} \qquad
\omega'=\omega_0+2\,b \ ,$$ where $b=\eta_0\, B,$ which is of the form
\eqn{paranonint}. The new eigenbundles are thus spanned by
\begin{align*}
Z'_x &=\frac\partial{\partial x}
       +\frac{1}{2}\,\Big(m\,y\,\frac\partial{\partial\tilde
       z}-m\,z\,\frac\partial{\partial\tilde y}\Big) \ , \\[4pt]
Z'_y &=\frac\partial{\partial y}
       +\frac{1}{2}\,\Big(m\,z\,\frac\partial{\partial\tilde
       x}-m\,x\,\frac\partial{\partial\tilde z}\Big) \ , \\[4pt] 
Z'_z &=\frac\partial{\partial z}
       +\frac{1}{2}\,\Big(m\,x\,\frac\partial{\partial\tilde
       y}-m\,y\,\frac\partial{\partial\tilde x}\Big) \ , 
\end{align*}
and
\begin{align*}
\tilde{Z}^{\prime\,x} = \frac\partial{\partial\tilde x} \ , \quad
  \tilde{Z}^{\prime\,y} = \frac\partial{\partial\tilde y} \qquad
  \mbox{and} \qquad \tilde{Z}^{\prime\,z}
  =\frac\partial{\partial\tilde z} \ .
\end{align*}
Therefore the vector fields $Z'_n$ span a distribution which is not
involutive, while $\tilde{Z}^{\prime\,n}$ span an involutive
distribution whose integral foliation has leaves given by 
$\mathbb{R}^3$; alternatively, this polarization is another possible
polarization arising from the splitting induced by the projection map
of the cotangent bundle $T^*\sfH$ with the vector fields $\tilde{Z}^{\prime\,n}$
spanning the vertical distribution. In the description of~\cite{hulled}, new identifications are made on the coordinates in the
NS--NS $H$-flux polarization such that the left and right actions of
the abelian subgroup $\mathbb{R}^3$ are globally defined, since they
are generated by the vector fields $\frac\partial{\partial\tilde x_n}$
which are left- and right-invariant. Hence the quotient $M_\sfH /
\mathbb{R}^3$ is well-defined, since the left action of $\mathbb{R}^3$
is globally defined on $M_\sfH$, and gives the spacetime $T^3.$ This
also happens in the present case, with the difference that now the vector fields $Z'_n$ are no longer globally defined.

The ${\sf O}(3,3;\IZ)$-transformation connecting this splitting with the
nilmanifold polarization is given by $$\vartheta= Z_i \otimes
\Theta^{\prime\,i} + \tilde{Z}^i \otimes \tilde{\Theta}'_i \ ,$$ where
the vector fields $Z_i$ and $ \tilde{Z}^i$ are given by \eqn{zdist}
and \eqn{tilddist}. The
dual 1-forms are given explicitly by
\begin{align*}
\Theta^{\prime\,x} =\de x \ , \quad \Theta^{\prime\,y} =\de y \qquad
  \mbox{and} \qquad \Theta^{\prime\,z} =\de z \ ,
\end{align*}
and
\begin{align*}
\tilde{\Theta}'_x &= \de \tilde{x} +\frac{1}{2}\,\big(m\,y\, \de z
                    -m\,x\, \de y\big) \ , \\[4pt]
\tilde{\Theta}'_y &=\de \tilde{y} +\frac{1}{2}\,\big(m\,z\, \de x -m\,x\,
                    \de z\big) \ , \\[4pt]
\tilde{\Theta}'_z &=\de \tilde{z} +\frac{1}{2}\,\big(m\,x\, \de y -
                    m\,y\, \de x\big) \ .
\end{align*}
The 1-forms $\Theta^{\prime\,n}$ are globally defined on the doubled
twisted torus $M_\sfH$, while the 1-forms $\tilde{\Theta}'_n$ are not
as a consequence of our choice for the distributions. 
Thus the almost para-Hermitian structure is given by 
$$K'= \vartheta^{-1}\, K\, \vartheta= Z'_i \otimes \Theta^{\prime\,i}-
\tilde{Z}'_i \otimes \tilde{\Theta}^{\prime\,i} \ ,$$ where $K$ is the
para-complex structure of the nilmanifold polarization,
and  $$\omega'= \vartheta^{\rm t} \, \omega \, \vartheta=
\tilde{\Theta}'_i \wedge \Theta^{\prime\,i} \ .$$ 
As demonstrated in Section~\ref{anotear}, in this polarization there are non-vanishing ${\nabla}^{\tt LC}$-brackets 
$$\llbracket Z'_x, Z'_y \rrbracket^{{\nabla}^{\tt LC}}= -\frac{3}{2}\,
m\, \tilde{Z}^{\prime\,z} \ , \quad \llbracket Z'_x, Z'_z
\rrbracket^{{\nabla}^{\tt LC}}= \frac{3}{2}\, m\, \tilde{Z}^{\prime\,y} \qquad
\mbox{and} \qquad \llbracket Z'_z, Z'_y \rrbracket^{{\nabla}^{\tt LC}}=
\frac{3}{2}\, m\, \tilde{Z}^{\prime\,x} \ ,$$
showing that the trivial para-K\"ahler structure on $M_\sfH$ and the
almost para-Hermitian structure $(K', \eta)$ are not compatible; the violation of weak integrability gives the $H$-flux.

Finally, in order to recover the physical background fields on $T^3,$
we write down the generalized metric
$\mathcal{H}=\delta_{np}\,\Theta^{\prime\,n} \otimes \Theta^{\prime\,p} +
\delta^{np}\,\tilde{\Theta}'_n \otimes \tilde{\Theta}'_p.$ In the
coordinates $(x, \tilde{x})$,
it takes the expected form
\begin{equation}
\mathcal{H}=
\bigg( \begin{matrix}
g-b\,g^{-1}\,b & b\,g^{-1}\\
 -g^{-1}\,b & g^{-1}
\end{matrix} \bigg)
\ , \nonumber
\end{equation}
where the background $(g,b)$ depends only on the coordinates $x$
and is given by
\be
g= \mathds{1} \qquad \mbox{and} \qquad b=\eta_0 \, B \ . \nonumber
\ee

\paragraph{T-Fold.} 
The T-fold polarization is given by the choice of the Manin triple on $T^*\sfH$ with
\be
[Z''_x, \tilde{Z}^{\prime\prime\,z}]= m\, Z''_y \ , \quad [Z''_x, \tilde{Z}^{\prime\prime\,y}]= -m\, {Z}^{\prime\prime}_z \qquad \mbox{and} \qquad [\tilde{Z}^{\prime\prime\,z}, \tilde{Z}^{\prime\prime\,y}]= m\, \tilde{Z}^{\prime\prime\,x} \ , \label{tfoldlie}
\ee
which is well defined on $M_\sfH.$ This polarization can be obtained from the previous ones by applying the procedure described in Section~\ref{manindefor}.
In this polarization both distributions are integrable, hence there is no generalized $H$-flux arising from the ${\nabla}^{\tt LC}$-bracket, which replicates the Lie bracket written in \eqref{tfoldlie} as discussed in Section~\ref{torusbundle}.

In order to recover the spacetime background, let us describe the transformations needed to obtain this polarization as a deformation of the trivial para-K\"ahler structure on $M_\sfH,$ along the lines explained in Section \ref{manindefor}.
We first deform the eigendistributions of the trivial para-K\"ahler structure into the two integrable distributions spanned by the vector fields
\begin{align}
X_x&= \frac\partial{\partial x} \ , \quad X_y = \frac\partial{\partial y} \qquad \mbox{and} \qquad X_z =\frac\partial{\partial z} \ ,\nonumber \\[4pt]
\tilde{X}^x&= \frac\partial{\partial\tilde x} \ , \quad \tilde{X}^y =\frac\partial{\partial\tilde y}+m\,\tilde{z}\,\frac\partial{\partial\tilde x} \qquad \mbox{and} \qquad \tilde{X}^z =\frac\partial{\partial\tilde z} \ , \nonumber
\end{align}
via the transformations 
\begin{equation}
\lambda^{-1} =
\begin{pmatrix}
1 & 0 & 0 \\
0 & 1 & 0 \\
0 & 0 & 1 
\end{pmatrix}
\qquad \mbox{and} \qquad \tilde{\lambda}^{-1} =
\begin{pmatrix}
1 & 0 & 0 \\
m\,\tilde{z} & 1 & 0 \\
0 & 0 & 1
\end{pmatrix}
\ , \nonumber
\end{equation}
acting respectively on $\frac\partial{\partial x^n} $ and $\frac\partial {\partial\tilde{x}_n}.$ The Lie algebra of these vector fields is given by the single non-vanishing Lie bracket $[X^z, X^y]=m\, X^x.$ The dual 1-forms are
\begin{align}
W_x & = \de x \ , \quad W_y = \de y \qquad \mbox{and} \qquad W_z = \de z \ , \nonumber \\[4pt]
\tilde{W}^x&= \de \tilde{x}-m\,\tilde{z}\,\de \tilde{y} \ , \quad \tilde{W}_y = \de \tilde{y} \qquad \mbox{and} \qquad \tilde{W}_z =\de \tilde{z} \ . \nonumber
\end{align}
This fixes the bases for the tangent bundle on our spacetime and its ``dual''. The generalized metric will be expressed in these bases in order to recover the spacetime background. This transformation does not change the almost para-complex structure but gives another off-diagonal expression of the Lorentzian metric: The metric $\eta_0$ becomes $\eta= W^n\otimes \tilde{W}_n + \tilde{W}_n \otimes W^n$, and the two distributions spanned by $X_n$ and $\tilde X^n$ are maximally isotropic with respect to this metric. Similarly, the fundamental 2-form becomes $\omega_{\lambda,\tilde\lambda} =\tilde{W}_n \wedge W^n$ which is no longer closed. In this case, only the 1-forms $\tilde{W}_n$ (and their dual vector fields) are globally defined under the identifications of the coordinates in the T-fold polarization described in \cite{hulled}.

We can then obtain the T-fold polarization as a $\beta$-transformation twisting the distribution spanned by $\tilde{X}^n$ with 
$$\beta= -m\,x\, \frac\partial{\partial z} \otimes \tilde{W}^y + m\,x\, \frac\partial{\partial y} \otimes \tilde{W}^z \ .$$
This $\beta$-transformation does not satisfy the Maurer-Cartan equation \eqn{maucar}, hence the fundamental 2-form becomes $\omega=\omega_{\lambda,\tilde\lambda}+2 \tilde{b},$ where $\tilde{b}=\eta\, \beta.$ It is however an ${\sf O}(3,3)(M_\sfH)$-tranformation, hence the metric $\eta$ is preserved.
The globally defined integrable distributions are finally spanned by the vector fields
\begin{align}
Z''_x&= \frac\partial{\partial x} \ , \quad Z''_y = \frac\partial{\partial y} \qquad \mbox{and} \qquad Z''_z =\frac\partial{\partial z} \ ,\nonumber \\[4pt]
\tilde{Z}^{\prime\prime\,x}&= \frac\partial{\partial\tilde x} \ , \quad \tilde{Z}^{\prime\prime\,y} =\frac\partial{\partial\tilde y}+m\,\tilde{z}\,\frac\partial{\partial\tilde x}-m\,x\,\frac\partial{\partial z} \qquad \mbox{and} \qquad \tilde{Z}^{\prime\prime\,z} =\frac\partial{\partial\tilde z}+m\,x\,\frac\partial{\partial y} \ , \nonumber
\end{align}
which close the Lie algebra \eqn{tfoldlie} and have dual 1-forms
\begin{align}
\Theta''_x & = \de x \ , \quad \Theta''_y = \de y-m\,x\, \de \tilde{z} \qquad \mbox{and} \qquad \Theta''_z = \de z+m\,x\, \de \tilde{y} \ , \nonumber \\[4pt]
\tilde{\Theta}^{\prime\prime\,x}&= \de \tilde{x}-m\,\tilde{z}\,\de \tilde{y} \ , \quad \tilde{\Theta}''_y = \de \tilde{y} \qquad \mbox{and} \qquad \tilde{\Theta}''_z =\de \tilde{z} \ . \nonumber
\end{align}
The left-invariant para-Hermitian structure has the form given in Section \ref{paraherdri}. As shown in Section~\ref{manindefor}, the ${\nabla}^{\tt LC}$-bracket is non-vanishing and gives
$$\llbracket \tilde{Z}^{\prime\prime\,z}, \tilde{Z}^{\prime\prime\,y} \rrbracket^{{\nabla}^{\tt LC}}= \frac{3}{2}\,m\, \tilde{Z}^{\prime\prime\,x} \ ,$$ which demonstrates the presence of a $Q$-flux in this polarization.

We can finally write down the generalized metric $$\mathcal{H}=\delta_{np}\,\Theta^{\prime\prime\,n} \otimes \Theta^{\prime\prime\, p} + \delta^{np}\,\tilde{\Theta}''_n \otimes \tilde{\Theta}''_p \ ,$$ and express it in the basis $ W_n,$ $\tilde{W}^n$ where it takes the form
\begin{equation}
\mathcal{H} =
\begin{pmatrix}
1 & 0 & 0 & 0 & 0 & 0 \\
0 & 1 & 0 & 0 & 0 & -m\,x \\
0 & 0 & 1 & 0 & m\,x & 0 \\
0 & 0 & 0 & 1 & 0 & 0 \\
0 & 0 & m\,x & 0 & 1+(m\,x)^2 & 0 \\
0 & -m\,x & 0 & 0 & 0 & 1+(m\,x)^2
\end{pmatrix}
 \ . \nonumber
\end{equation}
We therefore read off the background
\begin{equation}
g =\frac{1}{1+(m\,x)^2}
\begin{pmatrix}
1+(m\,x)^2 & 0 & 0 \\
0 & 1 & 0 \\
0 & 0 & 1 \\
\end{pmatrix} \qquad \mbox{and} \qquad
b =\frac{m\,x}{1+(m\,x)^2}
\begin{pmatrix}
0 & 0 & 0 \\
0 & 0 & -1 \\
0 & 1 & 0
\end{pmatrix}
\nonumber
\end{equation}
as in \cite{hulled}. We stress that there is a $Q$-flux on the doubled space in this polarization and the $\beta$-twist of the two distributions spanned by $X^n$ and $\tilde{X}_n$ gives a non-vanishing $B$-field on the spacetime submanifold; the global non-geometry is entirely manifested by the feature that the background $(g,b)$ is only locally well-defined in this polarization.

\paragraph{Locally Non-Geometric $\boldsymbol R$-Flux.} 
The $R$-flux polarization can be obtained from the $H$-flux polarization
by exchanging the roles of the Lie algebras $\frh$ and $\IR^3$ in the
Manin triple associated to the Drinfel'd double $T^*\sfH$.
Despite the lack of even a local geometry for the $R$-flux polarization \cite{hulled,Plauschinn}, in the present framework we can follow the discussion of Section~\ref{anotear} to choose a polarization such that
$$[\tilde{Z}^x, \tilde{Z}^z]= m \, Z_y \ , \quad [\tilde{Z}^x, \tilde{Z}^y]= -m \, Z_z \qquad \mbox{and} \qquad[\tilde{Z}^z, \tilde{Z}^y]= m \, Z_x \ .$$ 
The almost para-Hermitian structure with eigendistributions closing this Lie algebra can be obtained as a $B$-transformation of the trivial para-K\"ahler structure on $M_\sfH,$ and takes the form \eqn{paranonint}.
Thus as shown in Section \ref{anotear}, this induces a non-vanishing generalized $R$-flux from the ${\nabla}^{\tt LC}$-bracket
$$\llbracket \tilde{Z}^x, \tilde{Z}^y \rrbracket^{{\nabla}^{\tt LC}}=
-\frac{3}{2}\,m \, Z_z \ , \quad \llbracket \tilde{Z}^x, \tilde{Z}^z
\rrbracket^{{\nabla}^{\tt LC}}= \frac{3}{2}\,m \, Z_y \qquad
\mbox{and} \qquad \llbracket \tilde{Z}^z, \tilde{Z}^y
\rrbracket^{{\nabla}^{\tt LC}}= \frac{3}{2}\,m \, Z_x \ .$$
As discussed in \cite{hulled,ReidEdwards2009}, the generalized metric
$\mathcal{H}$ in this polarization depends on the coordinates
$(\tilde{x},\tilde y,\tilde z)$, hence it is not possible to recover
the conventional spacetime background with any quotient.

\section*{Acknowledgments}

We thank Laurent Freidel, Felix Rudolph and David Svoboda for helpful discussions.
This work was supported in part by the Action MP1405 ``Quantum Structure of Spacetime'' funded by the European
Cooperation in Science and Technology (COST). The work of V.E.M. was
funded by the Doctoral Training Grant ST/R504774/1 from the UK Science and Technology
Facilities Council (STFC). The work of R.J.S. was supported by the STFC
Consolidated Grant ST/P000363/1.

\end{document}